\theoremstyle{plain}
\newtheorem{theorem}{Theorem}
\newtheorem{proposition}{Proposition}
\newtheorem{lemma}{Lemma}
\theoremstyle{definition}
\newtheorem{definition}{Definition}
\newtheorem{remark}{Remark}
\newtheorem{example}{Example}
\newcommand{\pushright}[1]{\ifmeasuring@#1\else\omit\hfill$\displaystyle#1$\fi\ignorespaces}
\newcommand{\pushleft}[1]{\ifmeasuring@#1\else\omit$\displaystyle#1$\hfill\fi\ignorespaces}
\title{On the minimum information checkerboard copulas under fixed Kendall's rank correlation}
\author{Issey Sukeda\footnote{\url{sukeda-issei006@g.ecc.u-tokyo.ac.jp}. Postal address: 7-3-1 Hongo, Bunkyo-ku, Tokyo 113-8656, Japan} \hspace{1cm} Tomonari Sei\\ Graduate school of Information Science and Technology, The University of Tokyo, Japan}
\begin{document}
\maketitle

\begin{abstract}
Copulas have gained widespread popularity as statistical models to represent dependence structures between multiple variables in various applications. The minimum information copula, given a finite number of constraints in advance, emerges as the copula closest to the uniform copula when measured in Kullback-Leibler divergence.
In prior research, the focus has predominantly been on constraints related to expectations on moments, including Spearman's $\rho$. This approach allows for obtaining the copula through convex programming. However, the existing framework for minimum information copulas does not encompass non-linear constraints such as Kendall's $\tau$.
To address this limitation, we introduce MICK, a novel minimum information copula under fixed Kendall's $\tau$. We first characterize MICK by its local dependence property. Despite being defined as the solution to a non-convex optimization problem, we demonstrate that the uniqueness of this copula is guaranteed when the correlation is sufficiently small.
Additionally, we provide numerical insights into applying MICK to real financial data.
\end{abstract}

\section{Introduction\label{sec:1}}

Uncertainty modelling, aiming at identifying the true distribution in real world under limited prior knowledge, is a fundamental approach widely used in the areas of operations research and finance. Since most of real data are multivariate and dependent on each other, it is essential to include knowledge about dependencies in these distributions. Common methods to achieve this is the use of copulas~\cite{nelsen2007introduction}.

Meeuwissen and Bedford~\cite{MEEU1997} introduced a distribution for uncertainty analysis, called the minimum information copula. Let $I = \int_{[0,1]^2} p(x,y)\log{p(x,y)} \mathrm{d}x\mathrm{d}y$ be the information of a bivariate copula $p(x,y)$ defined on $[0,1]^2$, which is also the negative of Shannon entropy. The minimum information copula is defined as a copula obtained by minimizing $I$ under a constraint on its correlation $E[xy]$. 
Roughly speaking, the belief in this approach is that we should opt for the least informative distribution when the true one is misspecified by following the maximum entropy principle proposed by Edwin Jaynes~\cite{jaynes1957information}, which is fundamental in physics. 

Minimizing the information (or equivalently maximizing the Shannon entropy) of the unknown distribution has become a popular approach for uncertainty modelling. For example, the maximum entropy copula is used to analyze drought risks by Yang et al.~\cite{yang2020mcmc}, stream by Kong et al.~\cite{kong2015maximum}, and rainfall by Qian et al.~\cite{qian2018modelling} .
Also in statistical literature, the similar problems were considered, while the problem setting varies slightly. Pougaza and Mohammad-Djafari~\cite{pougaza2012new} considered the problem of maximizing Tsallis and Renyi entropies and derived new copula families. Butucea et al.~\cite{butucea2015maximum} considered the maximum entropy copula with given diagonal section. 
Piantadosi et al.~\cite{piantadosi2012copulas} considered a class called checkerboard copula. This class has a finite number of regions on which probability is uniformly distributed. 
The checkerboard copula is known to be one of the approximations of continuous copula. Through this approximation, the problem of dealing continuous copula density is changed to the problem of dealing with a checkerboard copula having a step function density, which let us apply methods for finite probability spaces. Samo~\cite{samo2021inductive} used the similar idea to estimate the mutual information.
Furthermore, the minimum information checkerboard copula with several constraints on moments of the distribution has been extensively studied by Bedford and Wilson~\cite{bedford2014construction}, and Sei~\cite{sei2021}. Overall, although the specific notions and domains of interest varies in each studies, the most natural distribution is obtained by minimizing the information under given constraints.

However, it is notable that these studies on minimum information copulas only cover the linear constraints as prior knowledge. 
Many of the previous works on minimum information copulas assume the use of rank correlations and moments estimated from real world data as the constraints given to the optimization problem. Spearman's $\rho$ and Kendall's $\tau$ are two examples of rank correlations widely known and used. While Spearman's $\rho$ is dealt with by Meeuwissen and Bedford~\cite{MEEU1997}, Kendall's $\tau$ cannot be handled in similar manner due to the fact that Kendall's $\tau$ is not a linear notion with respect to the joint density. 

\subsection{Our contribution}

In this study, we investigate the minimum information checkerboard copula under fixed Kendall's $\tau$ (MICK) from both theoretical and algorithmic aspects. This work extends the coverage of the previous minimum information copulas, especially those proposed by Meeuwissen and Bedford~\cite{MEEU1997} and Piantadosi et al.~\cite{piantadosi2012copulas}, where the similar problem setting with Spearman's $\rho$ was the main focus.
The key difference is that the optimization problem to obtain the copula unfortunately becomes non-convex in our setting, which makes it unable to apply the similar arguments from previous studies.

Specifically, we first characterise MICK from the local dependence. 
The most famous notion of local dependence is \textit{log odds ratio}, which is the discrete version of the local dependence function proposed by Holland and Wang~\cite{Holland}. We show that the slight variant of \textit{log odds ratio}, which we name \textit{pseudo log odds ratio}, plays a central role in determining MICK.
In doing so, we introduce a novel non-orthogonal basis to represent the space of checkerboard copulas.

Moreover, we guarantee the uniqueness of MICK under a condition that the dependence is not strong enough. Although it does not lead to the uniqueness of MICK in every setting, we conjecture that the result is consistent for every Kendall's $\tau$ from several reasons. In our proof, we show the convexity at every stationary point of the optimization problem for MICK by calculating the Hessian matrix. Due to the continuity of the objective function (the information $I$), the uniqueness of the stationary point and the global optimum is guaranteed~\cite{gabrielsen1986}.

Additionally, we offer numerical insights into the application of MICK to real financial data, emphasizing its practical contributions. We present a greedy method for computing MICK numerically, leveraging the unique characteristic of this copula—its constant pseudo log odds ratio. This algorithm overcomes the shortcomings of the minimum information copula in terms of tractability and computational limitations in practice.

\subsection{Paper organization}

The rest of the paper is organized as follows: In Section 2, we review the relationship between copulas and rank correlations, and describe a general minimum information copula framework. In Section 3, we present our problem setting and provide theoretical characterizations of MICK. Section 4 highlights the commonalities and differences between MICK and MICS through a parallel argument. Section 5 is devoted to applications in practice. Finally, in Section 6, we conclude the study and suggest future directions.

\section{Rank correlations and minimum information copulas \label{sec:2}}

\subsection{Copulas and checkerboard copulas}

A $d$-dimensional copula is a joint distribution function on $[0,1]^d$ with uniform marginals (see, e.g. Nelsen~\cite{nelsen2007introduction}). In this paper, we exclusively deal with bivariate (two-dimensional) absolutely continuous copulas.

\begin{definition}[Copula Density] 
A bivariate copula density is a function $c:[0,1]^2 \to [0,\infty)$ that satisfies the following properties:
\begin{align}
\int_0^1 c(x,y)\ \mathrm{d}x = 1, \forall y \in [0,1]
\end{align}
and 
\begin{align}
\int_0^1 c(x,y)\ \mathrm{d}y = 1, \forall x \in [0,1].
\end{align}
\end{definition}
As a method for discretely approximating continuous copulas, a checkerboard copula is defined almost everywhere using a step function on multiple uniform subdivisions of $[0,1]^2$. For simplicity, we assume all subdivisions to be identical square regions: $D_{ij} = (\frac{i-1}{n},\frac{i}{n})\times (\frac{j-1}{n},\frac{j}{n})$ for $i=1, \dots, n$ and $j=1, \dots, n$. 
A checkerboard copula density can be considered identical to a square matrix $P$ through the following relationships~\cite{sei2021}: 
$$p(x,y) = n^2p_{ij}\ ((x,y) \in D_{ij}), $$
where $p(x,y)$ is a checkerboard copula density and $P = (p_{ij})$ is a $n\times n$ matrix. Since these two expressions of the checkerboard copula are equivalent up to scale, we do not distinguish these two expressions and call $P$ a checkerboard copula in this paper. 

\begin{definition}[Checkerboard copula]
    A $n\times n$ \textit{checkerboard copula} is a $n\times n$ non-negative matrix $P = (p_{ij})$ such that
    $$\sum_{i=1}^n p_{ij} = \sum_{j=1}^n p_{ij} = \frac{1}{n}.$$
\end{definition}
\noindent Here, we present two extreme examples of the checkerboard copula.
\begin{example}[Uniform checkerboard copula]
A $n \times n$ uniform checkerboard copula is a matrix with size $n \times n$, where all entries have the value of $\frac{1}{n^2}$ : $P = (p_{ij}), p_{ij} = \frac{1}{n^2}\ \forall i,j \in \{1,\dots,n\}$.
\end{example}
\begin{example}[Comonotone checkerboard copula]
A $n \times n$ comonotone checkerboard copula is represented as a diagonal matrix with size $n \times n$ : $P = \mathrm{Diag}(\frac{1}{n}, \dots, \frac{1}{n})$.
\end{example}
\noindent Note that using this expression, $nP$ is a doubly stochastic matrix with all row sums and column sums being 1. In other words, there exists a one-to-one correspondence between a checkerboard copula and a doubly stochastic matrix. We refer to Durrleman et al.~\cite{durrleman2000copulas} for the definition of the checkerboard copula density using characteristic functions that is consistent with Definition 2, and Piantadosi et al.~\cite{piantadosi2012copulas} and Kuzmenko et al.~\cite{kuzmenko2020checkerboard} for the  more formal definition of checkerboard copulas.

\subsection{Dependences}

Dependence between two different variables has always been of great interest in statistics, and numerous measures have been studied to assess it. Among them, Kendall's $\tau$ and Spearman's $\rho$ are widely used due to their invariance to marginal transformations. These measures, also referred to as rank correlations, are solely determined by the copula function.

\begin{definition}[Kendall's $\tau$, Spearman's $\rho$ (e.g., Nelsen, 2006~\cite{nelsen2007introduction})]
Let $X$ and $Y$ be continuous random variables whose copula is $C$. Then, the population version of Kendall's $\tau$ and Spearman's $\rho$ for $X$ and $Y$ is given by 
$$\tau_{X,Y} = 4\int_0^1 \int_0^1 C(u,v) dC(u,v) - 1,$$
and
$$\rho_{X,Y} = 12\int_0^1 \int_0^1 C(u,v) dudv - 3,$$
respectively.
\end{definition}
Also, the checkerboard copulas version of the rank correlations was derived based on these definitions in Durrleman et al.~\cite{durrleman2000copulas}.



\begin{theorem}[Kendall's $\tau$ of checkerboard copula (Durrleman et al., Theorem 15~\cite{durrleman2000copulas})]
Let $P = (p_{ij})$ a checkerboard copula of the size $n \times n$. Let $\Xi = (\xi_{ij}) \in \mathbb{R}^{n \times n}$ where
\begin{align}\label{eq:xi}
\xi_{i,j} = 
\begin{cases}
1& \mathrm{if}\ i = j\\
2& \mathrm{if}\ i > j\\
0& \mathrm{if}\ i < j
\end{cases}
\end{align}
then Kendall's $\tau$ of the checkerboard copula $P$ is 
\begin{align}\label{eq:tau}
\tau_P = 1-\mathrm{tr}(\Xi P \Xi P^\top).    
\end{align}
\end{theorem}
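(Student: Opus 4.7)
The plan is to evaluate the probabilistic version of Kendall's $\tau$ directly for the piecewise-constant density of a checkerboard copula. Since the density is continuous within each cell $D_{ij}$, ties occur with probability zero, so we may write
\begin{align*}
\tau_P \;=\; 1 - 4\,\mathbb{P}[U_1 < U_2,\ V_1 > V_2],
\end{align*}
where $(U_1,V_1)$ and $(U_2,V_2)$ are two independent draws from $P$. This reformulation follows from $\tau = \mathbb{P}[\text{concordant}] - \mathbb{P}[\text{discordant}]$ together with the exchangeability identity $\mathbb{P}[\text{discordant}] = 2\,\mathbb{P}[U_1 < U_2, V_1 > V_2]$ and $\mathbb{P}[\text{concordant}]+\mathbb{P}[\text{discordant}]=1$.

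Next I would condition on the cells containing each sample. Given $(U_1,V_1) \in D_{ij}$ and $(U_2,V_2) \in D_{kl}$, all four coordinates are conditionally independent and uniform on intervals of length $1/n$, so a short case analysis gives
\begin{align*}
\mathbb{P}[U_1 < U_2 \mid i,k] = \begin{cases}1 & k>i,\\ 1/2 & k=i,\\ 0 & k<i,\end{cases} \qquad \mathbb{P}[V_1 > V_2 \mid j,l] = \begin{cases}1 & j>l,\\ 1/2 & j=l,\\ 0 & j<l.\end{cases}
\end{align*}
Comparing with (\ref{eq:xi}) identifies these two conditional probabilities as $\tfrac{1}{2}\xi_{ki}$ and $\tfrac{1}{2}\xi_{jl}$. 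Multiplying by the joint cell probability $p_{ij}p_{kl}$ and summing yields
\begin{align*}
\mathbb{P}[U_1 < U_2,\ V_1 > V_2] \;=\; \frac{1}{4}\sum_{i,j,k,l} \xi_{ki}\, p_{kl}\, \xi_{jl}\, p_{ij}.
\end{align*}

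Finally I would collapse this quadruple sum into a trace. Holding $i,j$ fixed and summing over $k,l$ gives $\sum_{k,l}(\Xi^\top)_{ik}\, p_{kl}\, (\Xi^\top)_{lj} = (\Xi^\top P \Xi^\top)_{ij}$, so the outer sum becomes $\sum_{i,j}p_{ij}(\Xi^\top P \Xi^\top)_{ij} = \mathrm{tr}\bigl(P(\Xi^\top P\Xi^\top)^\top\bigr) = \mathrm{tr}(P\,\Xi P^\top \Xi)$, which equals $\mathrm{tr}(\Xi P \Xi P^\top)$ by the cyclic property. Substituting back gives $\tau_P = 1 - \mathrm{tr}(\Xi P \Xi P^\top)$ as claimed. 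The only point requiring care is keeping the transposes straight when converting the index sum into matrix form; the cell decomposition itself is elementary once one uses conditional uniformity inside each cell.
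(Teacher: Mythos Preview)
Your argument is correct. The paper itself does not prove this result at all; it is quoted as Theorem~15 of Durrleman et al.\ and used as a black box, so there is no ``paper's proof'' to compare against. What you have written is a clean, self-contained derivation: the reduction $\tau_P = 1 - 4\,\mathbb{P}[U_1<U_2,\ V_1>V_2]$ is valid because the checkerboard density is absolutely continuous, the cell-conditioning step correctly uses that conditional on $D_{ij}$ the pair $(U,V)$ is uniform (hence $U$ and $V$ are conditionally independent uniforms on their respective sub-intervals), and the identification of the $\{0,\tfrac12,1\}$ conditional probabilities with $\tfrac12\xi_{ki}$ and $\tfrac12\xi_{jl}$ is exactly right. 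The final index-to-trace computation is also correct; one can double-check directly that $\mathrm{tr}(\Xi P\Xi P^\top)=\sum_{i,j,k,l}\xi_{ik}\,p_{kl}\,\xi_{lj}\,p_{ij}$, which after the relabeling $(i,k,l,j)\leftrightarrow(k,i,j,l)$ matches your sum.
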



\begin{theorem}[Spearman's $\rho$ of checkerboard copula (Durrleman et al., Theorem 16~\cite{durrleman2000copulas})]
Let  $P = (p_{ij})$  a checkerboard copula of the size $n \times n$. Let $\Omega=(\omega_{ij}) \in \mathbb{R}^{n \times n}$ where
$$\omega_{ij} = \frac{1}{n^2}(n-i+\frac{1}{2})(n-j+\frac{1}{2})$$
then Spearman's $\rho$ of the checkerboard copula $P$ is 
$$\rho_P = 12\left(\mathrm{tr}(\Omega P)-\frac{1}{4}\right).$$
\end{theorem}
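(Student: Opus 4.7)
The plan is to reduce the double integral $\int_0^1\int_0^1 C(u,v)\,du\,dv$ to an expectation over the random pair $(U,V) \sim C$ and then exploit the piecewise-uniform structure of the checkerboard copula to recognize the resulting sum as a trace.

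First, I would invoke Fubini's theorem. Writing $C(u,v) = \Pr(U \le u,\ V \le v) = E[\mathbf{1}\{U \le u\}\mathbf{1}\{V \le v\}]$ and exchanging the order of integration gives
\begin{equation*}
\int_0^1\!\!\int_0^1 C(u,v)\,du\,dv \;=\; E\!\left[\int_U^1 du \int_V^1 dv\right] \;=\; E[(1-U)(1-V)].
\end{equation*}
This identity is the standard device for handling $\rho$ and avoids any direct integration of the checkerboard CDF across cell boundaries.

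Next, I would use the checkerboard structure. By the definition of $P$, the pair $(U,V)$ falls into the cell $D_{ij} = \bigl(\tfrac{i-1}{n},\tfrac{i}{n}\bigr)\times\bigl(\tfrac{j-1}{n},\tfrac{j}{n}\bigr)$ with probability $p_{ij}$, and conditional on this event the two coordinates are independent and uniformly distributed on their respective intervals. A one-line computation gives
\begin{equation*}
E[\,1-U \mid U \in (\tfrac{i-1}{n},\tfrac{i}{n})\,] \;=\; \tfrac{1}{n}\bigl(n-i+\tfrac{1}{2}\bigr),
\end{equation*}
and the analogous formula for $V$. Multiplying and taking expectations over which cell is hit yields
\begin{equation*}
E[(1-U)(1-V)] \;=\; \sum_{i,j=1}^n p_{ij}\cdot\tfrac{1}{n^2}\bigl(n-i+\tfrac{1}{2}\bigr)\bigl(n-j+\tfrac{1}{2}\bigr) \;=\; \sum_{i,j=1}^n \omega_{ij}\,p_{ij}.
\end{equation*}

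Finally I would rewrite this sum as a trace. Because $\omega_{ij}=\omega_{ji}$, we have $\sum_{i,j}\omega_{ij}p_{ij} = \sum_{i,j}\omega_{ji}p_{ij} = \mathrm{tr}(\Omega P)$. Substituting into Definition~3 produces $\rho_P = 12\,\mathrm{tr}(\Omega P) - 3 = 12\bigl(\mathrm{tr}(\Omega P) - \tfrac{1}{4}\bigr)$, as claimed. None of the steps presents a real obstacle; the only point requiring a little care is the symmetry-based identification of $\sum \omega_{ij}p_{ij}$ with $\mathrm{tr}(\Omega P)$ rather than with $\mathrm{tr}(\Omega P^\top)$, which just happens to coincide here because $\Omega$ is symmetric.
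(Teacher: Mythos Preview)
Your proof is correct. The paper itself does not prove this statement---it is quoted from Durrleman et al.\ without argument---so there is no in-paper proof to compare against; your derivation via $E[(1-U)(1-V)]$ and the cellwise conditional expectations is a clean and standard route to the formula.
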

\noindent One of the differences between these two measures is that Kendall's $\tau$ is quadratic, while Spearman's $\rho$ is linear with respect to $P$. While this difference may be considered trivial in usual applications, it becomes crucial in our specific problem setting presented in the next section.


\subsection{Minimum information copulas}
A minimum information copula is a copula that satisfies constraints while having the minimum information relative to the uniform copula. This copula is introduced by Bedford and Wilson~\cite{bedford2014construction} as a solution to the issue of under-specification, where the goal is to determine the true distribution from limited information. Chen and Sei~\cite{CHEN2023105271} specifically developed a proper score for it.
\begin{definition}[Minimum information copulas ~\cite{bedford2014construction}] 
Let $h_1(x,y), \dots, h_K(x,y)$ be given functions and $\alpha_1, \dots, \alpha_K \in \mathbb{R}$. Then, the copula density $c$ that minimizes 
$$\int_0^1 \int_0^1 c(x,y)\log{c(x,y)} \mathrm{d}x\mathrm{d}y$$
subject to
$$\int_0^1 \int_0^1 c(x,y) h_k(x,y) \mathrm{d}x\mathrm{d}y = \alpha_k \ (k = 1, \dots, K)$$
is called the minimum information copula density.
\end{definition}
\begin{remark}
    Throughout this paper, we define $0\log{0} = 0$.
\end{remark}

To operationalize the concept of the minimum information copulas, the discretized version was also considered in Bedford and Wilson~\cite{bedford2014construction}, which we refer to as \textit{minimum information checkerboard copula}.




\begin{definition}[Minimum information checkerboard copula]
Let $h_1(x,y), \dots, h_K(x,y)$ be given functions and $\alpha_1, \dots, \alpha_K \in \mathbb{R}$. Then, the checkerboard copula $P = (p_{ij})$ that minimizes 
$$\sum_{i=1}^n \sum_{j=1}^n p_{ij}\log{p_{ij}}$$
subject to
$$\sum_{i=1}^n p_{ij} = \frac{1}{n}, \ \sum_{j=1}^n p_{ij} = \frac{1}{n},$$
$$p_{ij} \geq 0,$$
$$\sum_{i=1}^n \sum_{j=1}^n p_{ij} h_{k,ij}  = \alpha_i \ (k = 1, \dots, K),$$
is called the \textit{minimum information checkerboard copula}, where $h_{k,ij}$ is the value of $h_k(x,y)$ at the centers of each grid of the checkerboard copula. Its unique solution is known to have the form
$$p_{ij} = A_i B_j \exp{\left(\sum_{k=1}^K \theta_k h_{k,ij} \right)},$$
where $A_i$ and $B_j$ are for normalization.
\end{definition}

\noindent An important example of minimum information checkerboard copulas arises when Spearman's $\rho$ is specified by a constraint, referred to as MICS for short. Equivalent problems were explored by Meeuwissen and Bedford~\cite{MEEU1997} and Piantadosi et al.~\cite{piantadosi2012copulas}.

\begin{example}[Minimum information checkerboard copula under fixed Spearman's rank correlation (MICS)]\label{ex:mics}

MICS is defined as the optimal solution of the following problem:


$$\mathrm{minimize}\ \sum_{i=1}^n \sum_{j=1}^n p_{ij}\log{p_{ij}},$$
$$\mathrm{s.t.}\ \sum_{i=1}^n p_{ij} = \frac{1}{n},\ \sum_{j=1}^n p_{ij} = \frac{1}{n},$$
$$p_{ij} \geq 0,$$
$$12\left(\mathrm{tr}(\Omega P)-\frac{1}{4}\right) = \mu, $$
where $\mu$ is a given constant. Its optimal solution is given by
\begin{align}
    p_{ij} &= A_i B_j \exp{\left(12\theta (\frac{i}{n}-\frac{1}{2n}-\frac{1}{2})(\frac{j}{n}-\frac{1}{2n}-\frac{1}{2})\right)},
\end{align}
where $A_i$ and $B_j$ are for normalization.
This problem is a finite-dimensional convex programming, where the objective function is strictly convex. Due to compactness, the optimal solution exists and is unique~\cite{bedford2014construction}. 
While these previous researches assume linear constraints, they are not equipped to handle non-linear constraints, making them less suitable for certain data samples. Therefore, we will explore a new class of copula where we fix Kendall's tau, a common example of a non-linear constraint.
\end{example}

\section{The proposed checkerboard copula}

In this section, we present our proposed checkerboard copula MICK, which stands for the ``minimum information checkerboard copulas under fixed Kendall's rank correlation''. Since the results from previous research only apply to problems with linear constraints, they do not extend to MICK. This is because Kendall's $\tau$ cannot be expressed linearly with respect to the copula. Moreover, the constraint fixing Kendall's $\tau$ to a constant is indeed non-convex, making the optimization problem less tractable.

We first state our problem setting in Section 3.1. Then, a novel non-orthogonal basis to represent the space of checkerboard copulas is introduced in Section 3.2. By using this representation, two main characteristics of MICK are shown. First, we characterise MICK by its \textit{pseudo log odds ratio}, a variant of usual log odds ratio in Section 3.3. Secondly, the uniqueness of MICK is confirmed in Section 3.4 under a weak assumption on the value of Kendall's $\tau$.

Furthermore, we provide in Section 4 a better understanding of these results through parallel discussions between MICK and MICS. Section 5 is devoted to numerical contributions.

\subsection{Problem Setting and notations}

The optimization problem for MICK, which is our main interest, is formulated as follows:

$$(\mathrm{P})\ \mathrm{Minimize}\ \sum_{i=1}^n \sum_{j=1}^n p_{ij}\log{p_{ij}}$$
$$\mathrm{s.t.}\ \sum_{i=1}^n p_{ij} = \frac{1}{n},\ \sum_{j=1}^n p_{ij} = \frac{1}{n}, $$
$$0\leq p_{ij}, $$
$$1-\mathrm{tr}(\Xi P\Xi P^{\top}) = \mu, $$
where $\mu\ (\in [0, 1-\frac{1}{n}])$ is a given constant and 
$$\Xi = \begin{pmatrix}
1&0&\dots&0\\
2&1&\ddots&\vdots\\
\vdots&\ddots&\ddots&0\\
2&\dots&2&1\\
\end{pmatrix}.$$
\begin{remark}
We only consider the case where dependence is positive without loss of generality. When $\mu < 0$, the problem can be reduced to $\mu > 0$ by sorting the columns in reverse order from $i = 1, \dots,n$ to $i = n, \dots, 1$.  
\end{remark}
Here, we introduce some notations for other matrices used in the following sections of this paper. We denote the $n \times n$ identity matrix as $E_n$, and $J_n$ as the $n\times n$ all-one matrix. Also, we use 
$$V = \begin{pmatrix}
0&-1&\dots&-1\\
1&0&\ddots&\vdots\\
\vdots&\ddots&\ddots&-1\\
1&\dots&1&0\\
\end{pmatrix} = \Xi-J_n \in \mathbb{R}^{n\times n},$$
$$W = \begin{pmatrix}
J_n&\Xi&\dots&\Xi\\
\Xi^\top&J_n&\ddots&\vdots \\
\vdots&\ddots&\ddots&\Xi\\
\Xi^\top&\dots&\Xi^\top&J_n\\
\end{pmatrix}  = -(V \otimes V ) + J_{n^2\times n^2} = \frac{1}{2}(\Xi\otimes\Xi^\top + \Xi^\top \otimes\Xi)\in \mathbb{R}^{n^2\times n^2},$$
$$A = \begin{pmatrix}
1&0&\dots&0\\
-1&1&\ddots&\vdots\\
0&-1&\ddots&0\\
\vdots&\ddots&\ddots&1\\
0&\dots&0&-1\\
\end{pmatrix} \in \mathbb{R}^{n \times (n-1)},$$
and
$$A^\dagger =\frac{1}{n}\begin{pmatrix}
n-1&-1&\dots&\dots&-1\\
n-2&n-2&-2&\dots&-2\\
\vdots&\vdots&\ddots&\ddots&\vdots\\
1&1&1&\dots&-(n-1)\\
\end{pmatrix} \in \mathbb{R}^{(n-1) \times n}, $$
where $\otimes$ denotes the Kronecker product. 
With these notations, $A^\dagger A = E_{n-1}$ and $AA^\dagger = E_n-\frac{1}{n}J_n$ hold. In other words, $A^\dagger$ is a left inverse matrix of $A$, which is obviously full-rank.

In the context of solving optimization problem (P), the existence and uniqueness of the optimal solution are of great interest. The objective function of (P) is continuous, and the feasible region is compact; thus, a global minimum point exists. However, the uniqueness of the optimal solution cannot be guaranteed immediately since the last constraint in (P) is non-convex.
For example, consider a checkerboard copula $P_{} = 
\begin{pmatrix}
\frac{1}{9}&\frac{2}{9}&0\\
\frac{1}{9}&0&\frac{2}{9}\\
\frac{1}{9}&\frac{1}{9}&\frac{1}{9}
\end{pmatrix}
$. Then, it follows that
\begin{align}
0.0925 = \tau_{\frac{1}{2}P_{}+\frac{1}{2}P_{}^\top} < \frac{1}{2}\tau_{P_{}} + \frac{1}{2}\tau_{P_{}^\top} = 0.0987, \label{eq:hanrei}
\end{align}
where $\tau_P$ denotes Kendall's $\tau$ of the checkerboard copula $P$.
Hence, the feasible region of (P) is not convex. The non-convexity of Kendall's $\tau$ can also be confirmed by representing it as a quadratic form using $W$ (or $V$) and the vectorization operator on a square matrix:$\mathrm{vec}(P) = (p_{11},p_{12},\cdots,p_{n-1,n},p_{n,n})^\top, P \in \mathbb{R}^{n\times n}$.

\begin{lemma}\label{lem:lexical}
Let $P = (p_{ij})$ be a checkerboard copula and $\mathbf{p} = \mathrm{vec}(P)$. Then,
$$\tau = 1-\mathrm{tr}(\Xi P \Xi P^\top) = 1-\mathbf{p}^\top W\mathbf{p} = \mathbf{p}^\top(V\otimes V)\mathbf{p},$$
\end{lemma}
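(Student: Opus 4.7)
The plan is to establish the two equalities in Lemma~\ref{lem:lexical} separately, exploiting the two equivalent expressions for $W$ given in the setup above.

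For the first equality, $\mathrm{tr}(\Xi P \Xi P^\top) = \mathbf{p}^\top W \mathbf{p}$, I would work with the form $W = \tfrac{1}{2}(\Xi \otimes \Xi^\top + \Xi^\top \otimes \Xi)$. Direct expansion of the product of four matrices inside the trace gives $\mathrm{tr}(\Xi P \Xi P^\top) = \sum_{i,j,k,l} \xi_{ij}\,\xi_{kl}\,p_{jk}\,p_{il}$. On the other side, using the paper's row-major vec convention $\mathbf{p}_{(i-1)n+j} = p_{ij}$, the $((i,j),(k,l))$ entry of $\Xi \otimes \Xi^\top$ equals $\xi_{ik}\,\xi_{lj}$, so $\mathbf{p}^\top(\Xi \otimes \Xi^\top)\mathbf{p} = \sum_{i,j,k,l} \xi_{ik}\,\xi_{lj}\,p_{ij}\,p_{kl}$, which matches the expanded trace after the index relabeling $(i,j,k,l)\mapsto(i,l,j,k)$. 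Since the resulting scalar equals its own transpose, $\mathbf{p}^\top(\Xi \otimes \Xi^\top)\mathbf{p} = \mathbf{p}^\top(\Xi^\top \otimes \Xi)\mathbf{p}$, and averaging these two identical quantities reproduces $\mathbf{p}^\top W \mathbf{p}$.

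For the second equality, $1 - \mathbf{p}^\top W \mathbf{p} = \mathbf{p}^\top(V \otimes V)\mathbf{p}$, I would instead use the companion expression $W = J_{n^2 \times n^2} - V \otimes V$ provided in the setup. This immediately gives $\mathbf{p}^\top W \mathbf{p} = \mathbf{p}^\top J_{n^2 \times n^2}\mathbf{p} - \mathbf{p}^\top(V \otimes V)\mathbf{p}$, and the first summand equals $\bigl(\sum_{i,j} p_{ij}\bigr)^2 = 1$ because any checkerboard copula has total mass one (its $n$ row sums are each $1/n$). Rearranging yields the claim.

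The main obstacle is nothing conceptually deep: it is purely bookkeeping. One must match the row-major vec convention of this paper with the standard Kronecker-entry formula, and track the index permutation introduced by the transpose in $\Xi^\top$. Once these conventions are fixed, both equalities reduce to elementary manipulations using only the definitions of $W$, $V$, properties of the Kronecker product, and the marginal constraints that define a checkerboard copula.
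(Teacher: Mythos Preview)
Your proposal is correct and is exactly the kind of ``tedious algebraic calculation'' the paper invokes in lieu of a written proof. You use the two displayed expressions for $W$ in the intended way: the symmetrized Kronecker form $W=\tfrac12(\Xi\otimes\Xi^\top+\Xi^\top\otimes\Xi)$ matches the trace expansion after an index relabeling, and the form $W=J_{n^2}-V\otimes V$ together with $\sum_{i,j}p_{ij}=1$ gives the final identity; there is nothing to add.
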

\noindent The proof is done by tedious algebraic calculations. The Hessian matrix of Kendall's $\tau$ is $W$, and it is a non-semidefinite matrix. Due to the non-convex nature of the problem, the uniqueness of the optimal solution cannot be guaranteed immediately, in contrast to MICS in Example~\ref{ex:mics}. However, we show in Section 3.4 that under a weak assumption that Kendall's $\tau$ is small enough, the optimal solution is unique.

\subsection{Mass transfer operation and the space of checkerboard copulas}

Geometric interpretations are useful approach to understand checkerboard copulas. The space of discrete copulas was studied by Piantadosi et al.~\cite{piantadosi2012copulas} and Perrone et al.~\cite{perrone2019geometry}, for instance. Piantadosi et al.~\cite{piantadosi2012copulas} reformulated the representation of copulas using the fact that each doubly stochastic matrix is a convex combination of permutation matrices, thanks to Birkhoff--von Neumann theorem. 
Let $P_1 = [p_{1,ij}], \dots, P_{n!} = [p_{n!,ij}] \in \mathbb{R}^{n\times n}$ be the permutation matrices. The theorem states that there exists a convex combination 
$$p_{ij} = \sum_{k=1}^{n!} \alpha_k p_{k,ij},\ \mathrm{such\ that}\ \sum_{k=1}^{n!} \alpha_k = 1, \alpha_k \geq 0 .$$
\noindent This representation, however, is redundant in that the combination is not determined uniquely for each checkerboard copula. Instead, we attempt to represent checkerboard copulas using the following (non-orthogonal) basis.

Let us define a $n \times n$ matrix $T^{ij} = \mathbf{e}_i \mathbf{e}_j^\top + \mathbf{e}_{i+1}\mathbf{e}_{j+1}^\top - \mathbf{e}_i\mathbf{e}_{j+1}^\top  - \mathbf{e}_{i+1}\mathbf{e}_{j}^\top  (i,j = 1, 2, \dots, n-1)$, where $\mathbf{e}_i$ denotes $i$-th unit column vector.
In other words, $T^{ij}$ is a zero matrix, except for one $2 \times 2$ submatrix located at the s$(i,j)$-entry, $(i+1,j)$-entry, $(i,j+1)$-entry, $(i+1,j+1)$-entry, where it takes the form: $\begin{pmatrix}
    1&-1\\
    -1&1\\
\end{pmatrix}$. Then, the space of checkerboard copula can be expressed as a subspace of a $(n-1)^2$ dimensional vector space equipped with non-orthogonal basis $\{T^{ij}\}$, i.e., there exist unique real numbers $\{p'_{ij}\}$ such that
\begin{align}
P = U + \sum_{i=1}^{n-1} \sum_{j=1}^{n-1} p'_{ij} T^{ij}, \label{eq:basis}
\end{align}
where $U(=\frac{1}{n^2}J_n)$ denotes a uniform checkerboard copula, serving as an origin point of this space. In this context, the symbol $'$ is used solely to denote the new coordinates resulting from the change in basis and is not used to represent derivatives.
It is also convenient to introduce a vector space corresponding to checkerboard copulas. 
Let $\mathbf{p} = \mathrm{vec}(P)$, $\mathbf{p}' = (p'_{1,1}, p'_{1,2}, \dots, p'_{n-1,n-2}, p'_{n-1,n-1})^\top$, and $\mathbf{t}^{ij} = \mathrm{vec}(T^{ij})$. Then, Equation~\eqref{eq:basis} can be rewritten as 
\begin{align}\label{eq:basis-vector}
\mathbf{p} = \frac{1}{n^2}\mathbf{1}_{n^2} + \sum_{i=1}^{n-1} \sum_{j=1}^{n-1} p'_{ij} \mathbf{t}^{ij} = \frac{1}{n^2}\mathbf{1}_{n^2} + (A\otimes A)\mathbf{p}'.
\end{align}
Note that to ensure that $P$ is a checkerboard copula, there are implicit constraints on $p'_{ij}$ to prevent any entry in $P$ from becoming negative, although explicitly expressing these constraints is challenging. By rearranging Equation~\eqref{eq:basis-vector}, it is possible to obtain $(p'_{ij})$s from $(p_{ij})$s:
$$\mathbf{p}' = (A^\dag \otimes A^\dag) (\mathbf{p}-\frac{1}{n^2} \mathbf{1}_{n^2}) = (A^\dag \otimes A^\dag) \mathbf{p},$$
where $\otimes$ denotes the Kronecker product.
We provide two examples to understand this new representation.
\begin{example}[Uniform copula]
$p'_{ij} = 0\ (\forall i,j)$ indicates the uniform copula $P = \frac{1}{n^2}J_n$.
\end{example}

\begin{example}[Comonotone checkerboard copula]
$p'_{ij} = \frac{1}{n^2}\min{(i,j)} (n-\max{(i,j)}) = \frac{\min{(i,j)}}{n}(1-\frac{\max{(i,j)}}{n})$ indicates the $n\times n$ comonotone checkerboard copula $P = \frac{1}{n}E_n$.
\end{example}


Now, we provide an intuitive interpretation of the basis $\{T^{ij}\}$ from a different perspective. From the definition of $T^{ij}$, increasing the coordinate $p'_{ij}$ in equation (\ref{eq:basis}) means to first choose any $2 \times 2$ region on a checkerboard copula $P$ and then transfer probability mass from two anti-diagonal entries to the other two diagonal entries. In other words, its diagonal regions increase by $\Delta (>0)$ while its anti-diagonal regions increase by $-\Delta$, keeping its row sum and column sums the same. This movement guarantees that $P$ is still a checkerboard copula after the transfer. 
When you start from a uniform copula and try applying it as many times as possible on a copula, you will eventually arrive at the co-monotone copula.
With this new basis $\{T^{ij}\}$, the space of checkerboard copulas can be visualized for better understanding of their properties. 
The space of discrete $I \times J$ bivariate copulas is known to correspond to a polytope called \textit{generalized Birkhoff polytope}. When $I = J = n$, it corresponds to a 
\textit{Birkhoff polytope}, noted as $\mathcal{B}_n$, which has $n!$ vertices corresponding to permutation matrices.
Therefore, the optimization problem (P) is a problem where we find a minimum information discrete distribution on intersection of $\mathcal{B}_n$ and $K$, where $K$ denotes the curve surface with a constant Kendall's rank correlation. 

Since the space of checkerboard copulas with large gridsize cannot be depicted because the degree of freedom is larger than three, we provide two examples with small degree of freedom in Figure 1 and Figure 2 for descriptive purposes. However, note that Example 6 assumes rectangle mesh grids instead of square mesh grids for the checkerboard copula.

\begin{example}[$3 \times 2$ checkerboard copulas]
Expression of a copula $P$ in new coordinates is
$$
P = 
\begin{pmatrix}
p_{11}&p_{12}\\
p_{21}&p_{22}\\
p_{31}&p_{32}
\end{pmatrix}
=
\begin{pmatrix}
\frac{1}{6}&\frac{1}{6}\\
\frac{1}{6}&\frac{1}{6}\\
\frac{1}{6}&\frac{1}{6}
\end{pmatrix}
+
p'_{11}
\begin{pmatrix}
1&-1\\
-1&1\\
0&0
\end{pmatrix}
+
p'_{21}
\begin{pmatrix}
0&0\\
1&-1\\
-1&1
\end{pmatrix}.
$$
Note that there are constraints on $p_{ij}$s : $0 \leq p_{ij}$, $\sum_{i=1}^3 p_{ij} = \frac{1}{2} (j=1,2)$ and $\sum_{j=1}^2 p_{ij} = \frac{1}{3} (i=1,2,3)$. These constraints lead to those on $p'_{ij}s$ as well : $|p'_{11}| \leq \frac{1}{6},  |p'_{21}| \leq \frac{1}{6}, |p'_{11}-p'_{21}| \leq \frac{1}{6}$.

In this example, the region where Kendall's tau equals to the constant $\mu$ is represented as 
$$1-\mathrm{tr}(\Xi P \Xi P^\top) = \mu \Leftrightarrow \frac{4}{3}p'_{11} + \frac{4}{3}p'_{21} = \mu,$$
which is depicted as a line in Figure \ref{fig:mesh1}.
\end{example}

\begin{example}[$3 \times 3$ checkerboard copulas]
Expression of a copula $P$ in new coordinates is
\begin{align*}
P = 
\begin{pmatrix}
p_{11}&p_{12}&p_{13}\\
p_{21}&p_{22}&p_{23}\\
p_{31}&p_{32}&p_{33}
\end{pmatrix}
&=
\begin{pmatrix}
\frac{1}{9}&\frac{1}{9}&\frac{1}{9}\\
\frac{1}{9}&\frac{1}{9}&\frac{1}{9}\\
\frac{1}{9}&\frac{1}{9}&\frac{1}{9}
\end{pmatrix}
+
p'_{11}
\begin{pmatrix}
1&-1&0\\
-1&1&0\\
0&0&0
\end{pmatrix}
+
p'_{21}
\begin{pmatrix}
0&0&0\\
1&-1&0\\
-1&1&0
\end{pmatrix}\\
&+
p'_{12}
\begin{pmatrix}
0&1&-1\\
0&-1&1\\
0&0&0
\end{pmatrix}
+
p'_{22}
\begin{pmatrix}
0&0&0\\
0&1&-1\\
0&-1&1
\end{pmatrix}
.
\end{align*}

\noindent The degree of freedom is four. To reduce its dimension so that it can be visualized in the 3 dimensional space, we assume symmetric matrices here temporarily, i.e., $p'_{12} = p'_{21}$. Then, we have
\begin{align*}
P &= 
\begin{pmatrix}
p_{11}&p_{12}&p_{13}\\
p_{21}&p_{22}&p_{23}\\
p_{31}&p_{32}&p_{33}
\end{pmatrix}\\
&=
\begin{pmatrix}
\frac{1}{9}&\frac{1}{9}&\frac{1}{9}\\
\frac{1}{9}&\frac{1}{9}&\frac{1}{9}\\
\frac{1}{9}&\frac{1}{9}&\frac{1}{9}
\end{pmatrix}
+
p'_{11}
\begin{pmatrix}
1&-1&0\\
-1&1&0\\
0&0&0
\end{pmatrix}
+
p'_{12}
\begin{pmatrix}
0&1&-1\\
1&-2&1\\
-1&1&0
\end{pmatrix}
+
p'_{22}
\begin{pmatrix}
0&0&0\\
0&1&-1\\
0&-1&1
\end{pmatrix}.
\end{align*}
\noindent In this example, the region where Kendall's tau equals to the constant $\mu$ is represented as 
$$2p'_{11}p'_{22} - 2p'^2_{12} + \frac{8}{9}(p'_{11} + p'_{22} + 2p'_{12}) = \mu,$$
which is depicted in Figure~\ref{fig:mesh2} as a curved surface.

\end{example}

\begin{figure}[htp]
    \centering
    \includegraphics[scale=0.5]{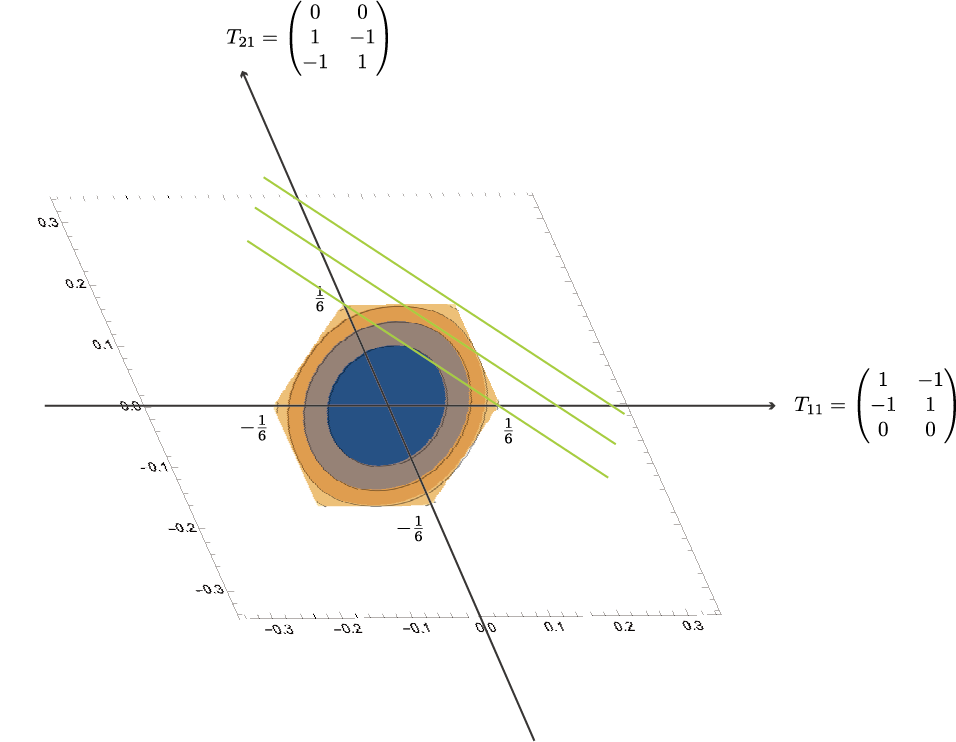}
    \caption{Visualization of $3 \times 2$ checkerboard copulas space. The irregular hexagon in the figure represents the domain of a checkerboard copula. Ovals inside it represents the contour lines of the information of copulas. The line corresponds to a region where Kendall's $\tau$ remains constant.}
    \label{fig:mesh1}
    \centering
    \includegraphics[scale=0.5]{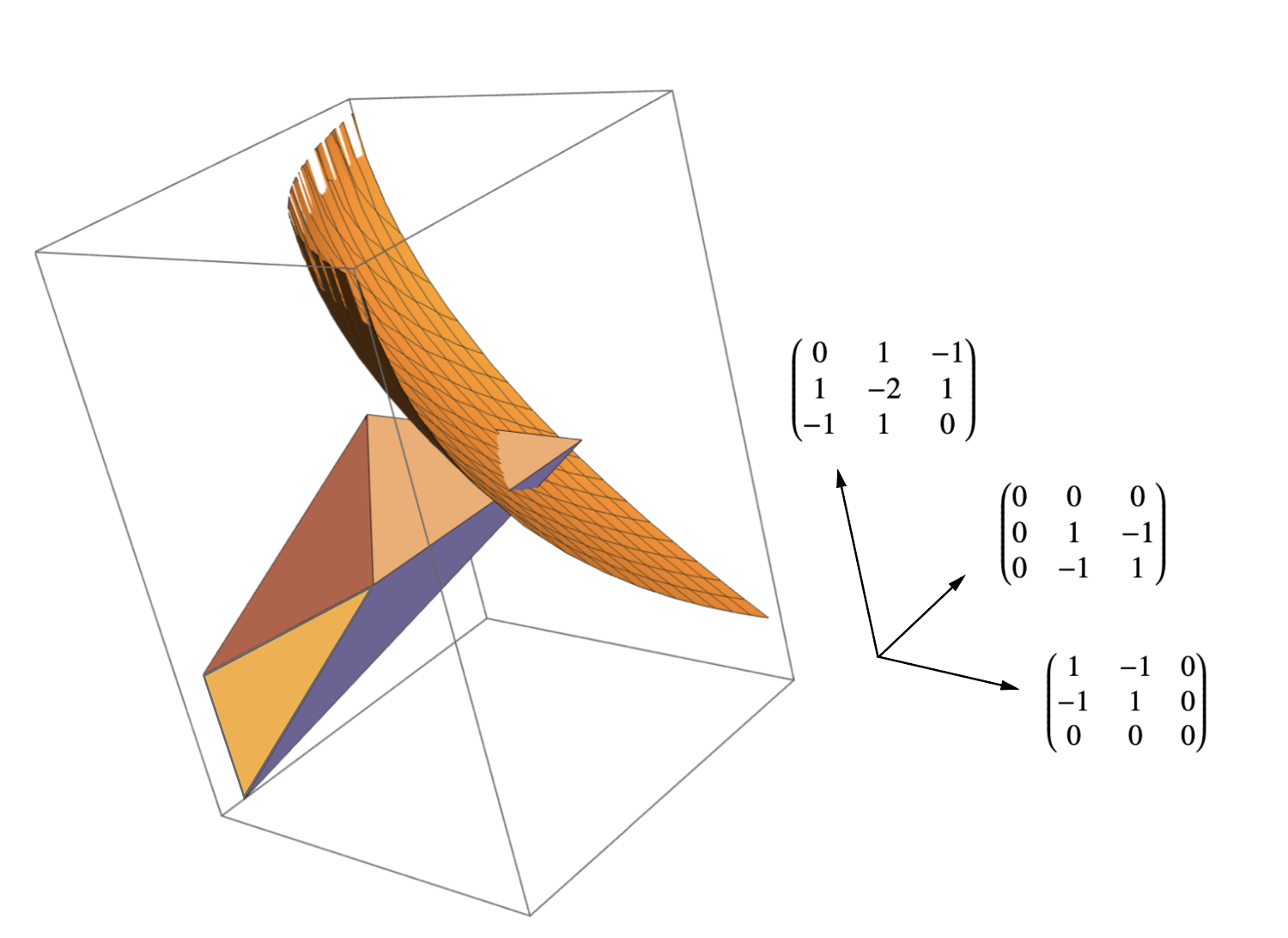}
    \caption{Visualization of $3 \times 3$ checkerboard copulas space. The polyhedron in the figure represents the domain of a checkerboard copula. The curved surface corresponds to a region where Kendall's $\tau$ remains constant.}
    \label{fig:mesh2}
\end{figure}

\subsection{Main result 1: Pseudo log odds ratio of MICK is constant everywhere}

Assume that all entries of checkerboard copulas are strictly positive. 
For every $2 \times 2$ submatrices of the matrix associated with MICK, we state that the variant of well-known odds ratio always takes a constant value. This is derived by considering the stationary conditions for the problem $(\mathrm{P})$.

\begin{lemma}[Variation of Kendall's tau]\label{lem:tau}
Let $P=(p_{ij})$ be a checkerboard copula. Consider a small change $\epsilon T^{ij} (\epsilon \in \mathbb{R})$ on $P$. The variation of Kendall's tau is 
$$\tau(P+\epsilon T^{ij}) - \tau(P) = 2\epsilon(p_{ij}+p_{i+1,j+1}+p_{i+1,j}+p_{i,j+1})+O(\epsilon^2)$$
as $\epsilon \to 0$. The value $p_{ij}+p_{i+1,j+1}+p_{i+1,j}+p_{i,j+1}$ is positive, meaning that
Kendall's tau always increases when $\epsilon$ is positive and small.
\end{lemma}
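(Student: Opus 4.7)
My plan is to exploit the quadratic form representation of Kendall's tau provided by Lemma~\ref{lem:lexical} and perturb it directly. Writing $\mathbf{p} = \mathrm{vec}(P)$ and $\mathbf{t}^{ij} = \mathrm{vec}(T^{ij})$, I would start from $\tau(P) = \mathbf{p}^\top (V\otimes V)\mathbf{p}$ and expand
\[
\tau(P+\epsilon T^{ij}) = \mathbf{p}^\top (V\otimes V)\mathbf{p} + 2\epsilon\,\mathbf{p}^\top (V\otimes V)\mathbf{t}^{ij} + \epsilon^2 (\mathbf{t}^{ij})^\top (V\otimes V)\mathbf{t}^{ij}.
\]
The factor of $2$ in the cross term uses that $V\otimes V$ is symmetric, which follows from the skew-symmetry of $V$ (so $(V\otimes V)^\top = V^\top\otimes V^\top = V\otimes V$). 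The $O(\epsilon^2)$ remainder is then clear, and the whole task reduces to identifying the linear coefficient $\mathbf{p}^\top (V\otimes V)\mathbf{t}^{ij}$.

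Next, I would apply the Kronecker--vec identity to rewrite $(V\otimes V)\mathbf{t}^{ij} = \mathrm{vec}(V T^{ij} V^\top)$, so that the linear coefficient becomes the Frobenius inner product of $P$ with $VT^{ij}V^\top$. Using the rank-one decomposition $T^{ij}=(\mathbf{e}_i-\mathbf{e}_{i+1})(\mathbf{e}_j-\mathbf{e}_{j+1})^\top$, I obtain
\[
V T^{ij} V^\top = (V\mathbf{e}_i - V\mathbf{e}_{i+1})(V\mathbf{e}_j - V\mathbf{e}_{j+1})^\top = (\mathbf{v}_i-\mathbf{v}_{i+1})(\mathbf{v}_j-\mathbf{v}_{j+1})^\top,
\]
where $\mathbf{v}_k$ denotes the $k$-th column of $V$. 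A short entry-wise check of the definition of $V$ then shows the key identity $\mathbf{v}_k-\mathbf{v}_{k+1} = \mathbf{e}_k + \mathbf{e}_{k+1}$: the two consecutive columns of $V$ differ by $+1$ exactly at rows $k$ and $k+1$ and agree elsewhere. Consequently $VT^{ij}V^\top = (\mathbf{e}_i+\mathbf{e}_{i+1})(\mathbf{e}_j+\mathbf{e}_{j+1})^\top$, which is the $n\times n$ matrix with $1$'s at the four positions $(i,j)$, $(i,j+1)$, $(i+1,j)$, $(i+1,j+1)$ and $0$'s elsewhere.

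Taking the Frobenius inner product with $P$ therefore yields
\[
\mathbf{p}^\top (V\otimes V)\mathbf{t}^{ij} = p_{ij}+p_{i,j+1}+p_{i+1,j}+p_{i+1,j+1},
\]
which, multiplied by $2\epsilon$, gives exactly the stated first-order variation. Positivity of this coefficient is then immediate from the standing assumption that all entries of the checkerboard copula are strictly positive, and this immediately implies that Kendall's tau strictly increases for small positive $\epsilon$.

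The only real obstacle is a bookkeeping one: getting the Kronecker--vec identity consistent with the paper's row-major vec convention so that $(V\otimes V)\mathrm{vec}(T^{ij})$ really corresponds to $\mathrm{vec}(VT^{ij}V^\top)$ rather than $\mathrm{vec}(V^\top T^{ij}V)$. Since $V$ is skew-symmetric this ambiguity at most introduces a harmless sign swap on both factors which cancels, but I would state the convention explicitly at the start to avoid confusion. Alternatively, one can bypass vectorization entirely by differentiating $\mathrm{tr}(\Xi P \Xi P^\top)$ directly and using $\Xi = V + J_n$ together with the row/column-sum constraints on $P$ (which kill the $J_n$ contributions), arriving at the same coefficient; I would mention this as a cross-check.
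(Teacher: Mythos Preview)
Your proof is correct and follows the same overall strategy as the paper: expand the quadratic form for $\tau$ and identify the linear-in-$\epsilon$ cross term. The paper works with $\tau = 1-\mathbf{p}^\top W\mathbf{p}$ and computes $(\mathbf{t}^{ij})^\top W\mathbf{p}=-(p_{ij}+p_{i,j+1}+p_{i+1,j}+p_{i+1,j+1})$ by direct index-chasing through the entries of $W$, whereas you work with the equivalent form $\tau=\mathbf{p}^\top(V\otimes V)\mathbf{p}$ and exploit the Kronecker--vec identity together with the rank-one factorization $T^{ij}=(\mathbf{e}_i-\mathbf{e}_{i+1})(\mathbf{e}_j-\mathbf{e}_{j+1})^\top$ and the column-difference identity $\mathbf{v}_k-\mathbf{v}_{k+1}=\mathbf{e}_k+\mathbf{e}_{k+1}$. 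Since $W=-(V\otimes V)+J_{n^2}$ and $(\mathbf{t}^{ij})^\top J_{n^2}\mathbf{p}=0$, the two computations are literally the same quantity; your route just replaces the entry-wise bookkeeping by a cleaner structural argument, and your remark about the row-major vec convention (and the harmless sign ambiguity from $V^\top=-V$) correctly handles the one place where things could go wrong.
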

\begin{proof}[\textbf{\upshape Proof of Lemma \ref{lem:tau}:}]
\begin{align*}
(\mathbf{t}^{ij})^\top W\mathbf{p} &= (\sum_{k=1}^n \sum_{l=1}^n w_{n(i-1)+j,k} p_{kl})-(\sum_{k=1}^n \sum_{l=1}^n w_{n(i-1)+j+1,k} p_{kl})\\
&-(\sum_{k=1}^n \sum_{l=1}^n w_{ni+j,k} p_{kl})+(\sum_{k=1}^n \sum_{l=1}^n w_{ni+j+1,k} p_{kl})\\
&= \sum_{k=1}^n \sum_{l=1}^n (w_{n(i-1)+j,k}-w_{n(i-1)+j+1,k}-w_{ni+j,k}+w_{ni+j+1,k}) p_{kl}\\
&= -p_{i,j}-p_{i,j+1}-p_{i+1,j}-p_{i+1,j+1}
\end{align*}

Let $\tau_{P}$ be Kendall's tau of $P$ and $\tau_{P+\epsilon T^{ij}}$ be that of $P+\epsilon T^{ij}$. It follows that
\begin{align*}
\tau(P+\epsilon T^{ij}) - \tau(P) 
&= (1-(\mathbf{p}+\epsilon \mathbf{t}^{ij})^\top W(\mathbf{p}+\epsilon \mathbf{t}^{ij}))-(1-\mathbf{p}^\top W\mathbf{p})\\
&= \mathbf{p}^\top W\mathbf{p} - (\mathbf{p}+\epsilon \mathbf{t}^{ij})^\top W(\mathbf{p}+\epsilon \mathbf{t}^{ij})\\
&= -2\epsilon (\mathbf{t}^{ij})^\top W \mathbf{p} - \epsilon^2 (\mathbf{t}^{ij})^\top W\mathbf{t}^{ij}\\
&= 2\epsilon(p_{ij}+p_{i,j+1}+p_{i+1,j}+p_{i+1,j+1}) + O(\epsilon^2).
\end{align*}
\end{proof}

\begin{lemma}[]\label{lem:delta}
Let $\eta_{ij} = p_{ij}+p_{i+1,j}+p_{i,j+1}+p_{i+1,j+1}$. 
Then, it follows that
$$1-\mathrm{Tr}(\Xi P\Xi P^\top) = 1-\mathrm{Tr}(\Xi \tilde{P}\Xi{\tilde{P}}^\top)+O(\epsilon^2)$$
as $\epsilon \to 0$, where $$\tilde{P} = P + \epsilon (T^{ij} - \frac{\eta_{ij}}{\eta_{i'j'}} T^{i'j'}).$$
\end{lemma}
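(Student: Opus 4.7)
The plan is to compute $\tau(\tilde P)-\tau(P)$ directly using the quadratic form from Lemma~\ref{lem:lexical} and to show that the linear-in-$\epsilon$ part vanishes thanks to the specific choice of the coefficient $\eta_{ij}/\eta_{i'j'}$. Throughout, I will exploit the sub-computation already carried out in the proof of Lemma~\ref{lem:tau}, namely that $(\mathbf{t}^{k\ell})^\top W\mathbf{p} = -\eta_{k\ell}$ for any admissible index pair $(k,\ell)$.

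First, I would set $\mathbf{d} := \epsilon\bigl(\mathbf{t}^{ij}-\tfrac{\eta_{ij}}{\eta_{i'j'}}\mathbf{t}^{i'j'}\bigr)$, which is the vectorization of the perturbation $\tilde P-P$. Because $W$ is symmetric, expanding gives
\begin{equation*}
(\mathbf{p}+\mathbf{d})^\top W(\mathbf{p}+\mathbf{d}) = \mathbf{p}^\top W\mathbf{p} + 2\mathbf{d}^\top W\mathbf{p} + \mathbf{d}^\top W\mathbf{d},
\end{equation*}
so that by Lemma~\ref{lem:lexical},
\begin{equation*}
\bigl(1-\mathrm{Tr}(\Xi P\Xi P^\top)\bigr) - \bigl(1-\mathrm{Tr}(\Xi\tilde P\Xi\tilde P^\top)\bigr) = 2\mathbf{d}^\top W\mathbf{p} + \mathbf{d}^\top W\mathbf{d}.
\end{equation*}

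Second, I would apply the identity $(\mathbf{t}^{k\ell})^\top W\mathbf{p}=-\eta_{k\ell}$ to both $(k,\ell)=(i,j)$ and $(k,\ell)=(i',j')$. This yields
\begin{equation*}
2\mathbf{d}^\top W\mathbf{p} = 2\epsilon(-\eta_{ij}) - 2\epsilon\,\tfrac{\eta_{ij}}{\eta_{i'j'}}(-\eta_{i'j'}) = -2\epsilon\eta_{ij}+2\epsilon\eta_{ij} = 0,
\end{equation*}
which is exactly the reason the coefficient $\eta_{ij}/\eta_{i'j'}$ is chosen in the first place. The remaining term $\mathbf{d}^\top W\mathbf{d}$ is manifestly $O(\epsilon^2)$ since $\mathbf{d}=O(\epsilon)$ and $W$ is a fixed matrix, so the claim follows.

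Since Section 3.3 assumes all entries of the checkerboard copula are strictly positive, the denominator $\eta_{i'j'}$ is guaranteed to be nonzero and the construction of $\tilde P$ is well-defined. The proof is essentially a short calculation; the only conceptual point is recognizing that Lemma~\ref{lem:tau} provides the directional derivative of $\tau$ along $T^{k\ell}$, so that the linear combination $T^{ij}-\tfrac{\eta_{ij}}{\eta_{i'j'}}T^{i'j'}$ lies in the tangent space of the level set $\{\tau=\mathrm{const}\}$ to first order. There is no genuine obstacle to address.
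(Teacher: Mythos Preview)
Your proof is correct and follows exactly the route the paper intends: the paper simply remarks that the lemma is ``straightforward from Lemma~\ref{lem:tau},'' and you have supplied precisely the expansion of the quadratic form that makes this remark explicit. The only substantive ingredient is the identity $(\mathbf{t}^{k\ell})^\top W\mathbf{p}=-\eta_{k\ell}$ established in the proof of Lemma~\ref{lem:tau}, which you invoke correctly.
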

\noindent This lemma states Kendall's tau is invariant under the operation $\epsilon (T^{ij} - \frac{\eta_{ij}}{\eta_{i'j'}} T^{i'j'})$, ignoring the $O(\epsilon^2)$ term.
The proof is straightforward from Lemma \ref{lem:tau}.

\begin{lemma}[Variation of the objective function]\label{lem:obj}
Let $P=(p_{ij})$ be a checkerboard copula. Consider a small change $\epsilon T^{ij} (\epsilon \ll 1)$ on $P$. The variation of the objective function in the optimization problem (P) is 
$$I(P + \epsilon T^{ij})-I(P) = \epsilon \log{\frac{p_{ij}p_{i+1,j+1}}{p_{i+1,j}p_{i,j+1}}} + O(\epsilon^2)$$
as $\epsilon \to 0$, where $I(P) = \sum_{i=1}^n \sum_{j=1}^n p_{ij}\log{p_{ij}}$.
\end{lemma}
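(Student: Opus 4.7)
The plan is to exploit the fact that the perturbation $\epsilon T^{ij}$ modifies $P$ in only four entries, so the change in the information $I(P)=\sum_{i,j} p_{ij}\log p_{ij}$ reduces to a sum of four one-variable Taylor expansions.

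First I would recall from the definition of $T^{ij}$ that its only nonzero entries are $+1$ at positions $(i,j)$ and $(i{+}1,j{+}1)$ and $-1$ at positions $(i,j{+}1)$ and $(i{+}1,j)$. Consequently, all other summands of $I(P+\epsilon T^{ij})$ coincide with those of $I(P)$ and cancel, so I only need to analyze the four affected entries.

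Next I would perform a Taylor expansion of the scalar function $f(x)=x\log x$ around each affected entry. A short computation gives
\begin{equation*}
f(x+h)-f(x) = h(\log x + 1) + \frac{h^2}{2x} + O(h^3).
\end{equation*}
Applying this with $h=+\epsilon$ at $(i,j)$ and $(i{+}1,j{+}1)$ and with $h=-\epsilon$ at $(i,j{+}1)$ and $(i{+}1,j)$, summing the four contributions, and observing that the four additive constants $+1$ cancel (since $+\epsilon+\epsilon-\epsilon-\epsilon=0$) would yield
\begin{equation*}
I(P+\epsilon T^{ij})-I(P) = \epsilon\bigl(\log p_{ij}+\log p_{i+1,j+1}-\log p_{i,j+1}-\log p_{i+1,j}\bigr) + O(\epsilon^2),
\end{equation*}
which is exactly the claimed identity after combining logarithms.

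There is no real obstacle here; the statement is essentially a first-order Taylor expansion of a separable objective restricted to a four-dimensional perturbation. The only mild subtlety is the hypothesis that the relevant entries $p_{ij}, p_{i+1,j}, p_{i,j+1}, p_{i+1,j+1}$ are strictly positive, which is needed to make the expansion of $x\log x$ valid and to keep the logarithms finite; this is already assumed at the start of Section 3.3. For small enough $\epsilon$ the perturbed entries remain positive, so the expansion is justified and the $O(\epsilon^2)$ remainder term is uniform in the four coordinates.
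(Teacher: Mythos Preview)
Your proof is correct and follows essentially the same approach as the paper: both compute the first-order variation of $\sum p_{ij}\log p_{ij}$ via the derivative $\partial(x\log x)/\partial x = \log x + 1$, sum the four signed contributions from the nonzero entries of $T^{ij}$, and observe that the additive constants $+1$ cancel. Your version is slightly more explicit about the Taylor remainder and the positivity assumption, but the underlying argument is identical.
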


\begin{proof}[\textbf{\upshape Proof of Lemma \ref{lem:obj}}]
$$\frac{\partial}{\partial p_{ij}}(\sum_i \sum_j p_{ij}\log{p_{ij}}) = \log{p_{ij}}+1,$$
so the increase in the information of $P$ by the operation $\epsilon T^{ij}$ is
$$\epsilon(\log{p_{ij}}+1)+\epsilon(\log{p_{i+1,j+1}}+1)-\epsilon(\log{p_{i+1,j}}+1)-\epsilon(\log{p_{i,j+1}}+1)= \epsilon\log{\frac{p_{ij}p_{i+1,j+1}}{p_{i+1,j}p_{i,j+1}}}. $$
\end{proof}

Two variations from Lemma \ref{lem:tau} and Lemma \ref{lem:obj} lead to the following main statement.
\begin{theorem}[]\label{thm:mick}
The following value is constant for every $2 \times 2$ submatrices 
$\begin{pmatrix}
p_{ij}&p_{i, j+1}\\
p_{i+1, j}&p_{i+1, j+1}\\
\end{pmatrix}$
on an $n\times n$ MICK:
$$\frac{1}{p_{ij}+p_{i+1,j}+p_{i+1,j}+p_{i+1,j+1}}\log{\frac{p_{ij}p_{i+1,j+1}}{p_{i+1,j}p_{i,j+1}}},$$
where $i,j = 1,\dots,n-1$.
We name this common value ``pseudo log odds ratio'' of MICK.
\end{theorem}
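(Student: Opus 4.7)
The plan is to apply Lagrange-multiplier (KKT) conditions at an optimal, strictly positive MICK $P$, using the basis $\{T^{ij}\}$ introduced in Section 3.2 as a convenient set of tangent directions, and then to read off stationarity through Lemmas \ref{lem:tau} and \ref{lem:obj}. The key advantage of using $T^{ij}$ rather than the canonical $\mathbf{e}_i\mathbf{e}_j^\top$ basis is that every $T^{ij}$ has zero row and column sums, so perturbations along $T^{ij}$ automatically respect the marginal constraints.

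First, I would verify that a constraint qualification holds at $P$. The marginal equations are linear with linearly independent gradients, the non-negativity constraints are inactive by strict positivity of $P$, and the gradient of the Kendall's $\tau$ constraint is nonzero: by Lemma \ref{lem:tau} its directional derivative along $T^{ij}$ equals $2\eta_{ij}>0$. Hence the Lagrange conditions apply and there exist multipliers $\alpha_i,\beta_j$ for the margins and $\lambda$ for the $\tau$ constraint such that the gradient of
$$L(P) \;=\; I(P) - \sum_{i}\alpha_i\Big(\sum_j p_{ij}-\tfrac{1}{n}\Big) - \sum_{j}\beta_j\Big(\sum_i p_{ij}-\tfrac{1}{n}\Big) - \lambda\bigl(\tau(P)-\mu\bigr)$$
vanishes at $P$.

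Next, I would compute the directional derivative of $L$ along $T^{ij}$. Since $T^{ij}$ has zero row and column sums, the marginal terms contribute nothing. By Lemma \ref{lem:obj} the derivative of $I$ along $T^{ij}$ is $\log\frac{p_{ij}p_{i+1,j+1}}{p_{i+1,j}p_{i,j+1}}$, and by Lemma \ref{lem:tau} the derivative of $\tau$ is $2\eta_{ij}$. Stationarity therefore forces
$$\log\frac{p_{ij}p_{i+1,j+1}}{p_{i+1,j}p_{i,j+1}} \;=\; 2\lambda\,\eta_{ij},$$
so that
$$\frac{1}{\eta_{ij}}\log\frac{p_{ij}p_{i+1,j+1}}{p_{i+1,j}p_{i,j+1}} \;=\; 2\lambda,$$
and the right-hand side is independent of $(i,j)$. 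This is precisely the claimed constancy of the pseudo log odds ratio, and it also identifies the common value as twice the Lagrange multiplier attached to the Kendall's $\tau$ constraint.

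I expect the main obstacle to be the regularity bookkeeping in the first step: one must ensure that the minimizer lies in the relative interior $\{p_{ij}>0\}$ so that the multiplier argument is unambiguous, which is exactly what the theorem's standing assumption provides; on the boundary the expression involves $\log 0$ and the statement itself would have to be reinterpreted. A purely equivalent but more laborious alternative would bypass multipliers by combining Lemma \ref{lem:delta} with the implicit function theorem, producing, for any pair $(i,j),(i',j')$, a smooth feasible curve tangent to $T^{ij}-\frac{\eta_{ij}}{\eta_{i'j'}}T^{i'j'}$ and then enforcing that the first-order variation of $I$ along this curve vanishes; this would yield the same identity with slightly more work.
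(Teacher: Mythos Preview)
Your proposal is correct and follows essentially the same route as the paper. The paper phrases the stationarity step via tangent directions $T^{ij}-\frac{\eta_{ij}}{\eta_{i'j'}}T^{i'j'}$ to the constraint surface $\tau(P)=\mu$ (exactly the ``equivalent alternative'' you sketch at the end), rather than through an explicit Lagrange multiplier, but both arguments rest on Lemmas~\ref{lem:tau} and~\ref{lem:obj} in the same way; your formulation has the small bonus of identifying the common pseudo log odds ratio as $2\lambda$, a relation the paper only exploits later in Section~3.4.
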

\begin{proof}[\textbf{\upshape Proof of Theorem \ref{thm:mick}}]
The tangent space of the surface $\tau(P)=\mu$ is spanned by $T^{ij} - \frac{\eta_{ij}}{\eta_{i'j'}}T^{i'j'}$ for all $i, j, i', j'$. The stationary condition for the optimization problem (P) is 
$$\log{\frac{p_{ij}p_{i+1,j+1}}{p_{i+1,j}p_{i,j+1}}} -\frac{\eta_{ij}}{\eta_{i'j'}} \log{\frac{p_{i',j'}p_{i'+1,j'+1}}{p_{i'+1,j'}p_{i',j'+1}}} = 0,$$
where $\eta_{ij}$ is defined in Lemma~\ref{lem:delta}. By rearanging this equation, we obtain
$$\frac{p_{i',j'}+p_{i'+1,j'}+p_{i',j'+1}+p_{i'+1,j'+1}}{p_{ij}+p_{i+1,j}+p_{i,j+1}+p_{i+1,j+1}}\log{\frac{p_{ij}p_{i+1,j+1}}{p_{i+1,j}p_{i,j+1}}} - \log{\frac{p_{i',j'}p_{i'+1,j'+1}}{p_{i'+1,j'}p_{i',j'+1}}} = 0$$
and
$$\frac{1}{p_{ij}+p_{i+1,j}+p_{i,j+1}+p_{i+1,j+1}}\log{\frac{p_{ij}p_{i+1,j+1}}{p_{i+1,j}p_{i,j+1}}} =  \frac{1}{p_{i',j'}+p_{i'+1,j'}+p_{i',j'+1}+p_{i'+1,j'+1}}\log{\frac{p_{i',j'}p_{i'+1,j'+1}}{p_{i'+1,j'}p_{i',j'+1}}}$$
\noindent for any pairs $(i,j,i',j')$. 
For the last equation, note that both sides of the equation has the same form for $(i,j)$ and $(i',j')$.
\end{proof}
The interpretation of ``pseudo log odds ratio'' will be provided in Section 4 along with the comparison between MICK and MICS.

\subsection{Main result 2: MICK exists uniquely when Kendall's $\tau$ is small}

As Theorem~\ref{thm:mick} only states the sufficient condition of the optimal solution of $(\mathrm{P})$, we are naturally interested in the uniqueness of it. As our main result, we state that when the given Kendall's rank correlation (or the corresponding pseudo log odds ratio) is small enough, the optimal solution becomes unique despite the non-convexity of the problem. 

\begin{theorem}\label{thm:uniqueness}
Consider all possible Lagrange multipliers corresponding to the last constraint of (P).
    The optimization problem $(\mathrm{P})$ has a unique optimal solution when all the Lagrange multipliers are less than 2.
\end{theorem}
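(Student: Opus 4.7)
The plan is to establish the Second Order Sufficient Condition at every stationary point of $(\mathrm{P})$ and then invoke Gabrielsen's classical theorem~\cite{gabrielsen1986}. Let $P^{*}$ be an arbitrary stationary point of $(\mathrm{P})$ and $\theta^{*}$ the Lagrange multiplier associated with the Kendall's-$\tau$ constraint. Under the hypothesis $\theta^{*}<2$, the goal is to show that $P^{*}$ is a strict local minimum; uniqueness of the global minimum then follows by a mountain-pass argument on the compact feasible region.

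Working in the coordinates $\mathbf{p}'$ of Section~3.2 automatically enforces the row- and column-sum constraints. Using the Kronecker identity $(A\otimes A)^{\top}(V\otimes V)(A\otimes A)=C\otimes C$ with $C:=A^{\top}VA$, the Hessian of the Lagrangian at $P^{*}$ in these coordinates reads
$$H_{\mathbf{p}'}=(A\otimes A)^{\top}\,\mathrm{diag}(1/p^{*}_{ij})\,(A\otimes A)-2\theta^{*}(C\otimes C).$$
A direct computation shows that $C$ is the antisymmetric tridiagonal $(n-1)\times(n-1)$ matrix with entries $+1$ immediately below the diagonal and $-1$ immediately above; its eigenvalues are therefore $\pm 2i\cos(k\pi/n)$ for $k=1,\ldots,n-1$, and in particular $\|C\otimes C\|_{\mathrm{op}}=4\cos^{2}(\pi/n)<4$.

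The crux is to verify $H_{\mathbf{p}'}\succ 0$ whenever $\theta^{*}<2$. The first summand is positive definite by strict convexity of the Shannon entropy together with the full column rank of $A\otimes A$; its diagonal entries are $1/p^{*}_{ij}+1/p^{*}_{i+1,j}+1/p^{*}_{i,j+1}+1/p^{*}_{i+1,j+1}$, bounded below by $16/\eta^{*}_{ij}$ via the AM--HM inequality (the same $\eta^{*}_{ij}$ that appeared in the KKT relation of Theorem~\ref{thm:mick}), while $C\otimes C$ has vanishing diagonal because $C$ does. I would then compare the information-Hessian quadratic form $\sum_{k,l}(AD'A^{\top})_{kl}^{2}/p^{*}_{kl}$ against $\mathbf{d}^{\top}(C\otimes C)\mathbf{d}$ by combining these pointwise bounds with the spectral estimate above, aiming to show that the threshold at which positive-definiteness first fails is exactly $\theta^{*}=2$. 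Extracting this sharp constant from the quadratic-form comparison is the main technical obstacle I anticipate.

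Once $H_{\mathbf{p}'}\succ 0$ on all of $\mathbf{p}'$-space has been established, its restriction to the tangent space of $\{\tau=\mu\}$ is also positive definite, so $P^{*}$ satisfies the Second Order Sufficient Condition and is a strict local minimum. Since $I$ is continuous on the compact feasible region of $(\mathrm{P})$ and every stationary point is now a strict local minimum, Gabrielsen's theorem~\cite{gabrielsen1986} applies: two distinct strict local minima would force the existence of an intermediate critical point that is not a local minimum, a contradiction. Hence the global minimum is attained at a unique point, establishing the uniqueness of MICK under the assumption $\theta^{*}<2$.
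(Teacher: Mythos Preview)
Your overall framework---reduce to the $\mathbf{p}'$-coordinates, verify positive-definiteness of the reduced Lagrangian Hessian at every stationary point, and invoke Gabrielsen---is exactly what the paper does, and your identification of $C=A^{\top}VA$ as the antisymmetric tridiagonal matrix (the paper's $M$) with spectrum $\{2i\cos(k\pi/n)\}$ is correct. The gap you yourself flag is real, however, and it is not a matter of tightening constants: the route via $\|C\otimes C\|_{\mathrm{op}}<4$ together with AM--HM bounds on the diagonal of $D_1=(A\otimes A)^{\top}\mathrm{diag}(1/p^{*}_{ij})(A\otimes A)$ cannot yield the threshold~$2$. The reason is that $D_1$ has large off-diagonal entries, so its smallest eigenvalue is not controlled by its diagonal, and comparing $\lambda_{\min}(D_1)$ to $\|C\otimes C\|_{\mathrm{op}}$ loses the crucial interaction between $p^{*}$ and the Kronecker structure. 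The paper instead performs the congruence $D_1=LL^{\top}$ with $L=(A\otimes A)^{\top}\mathrm{diag}(1/\sqrt{p^{*}_{ij}})$ and reduces the question to the spectral radius of
\[
L^{\dagger}D_2(L^{\dagger})^{\top}=-\mathrm{diag}(\sqrt{p^{*}})\,(X\otimes X)\,\mathrm{diag}(\sqrt{p^{*}}),\qquad X=(A^{\dagger})^{\top}MA^{\dagger},
\]
where $X$ turns out to be an explicit Toeplitz matrix with $|X_{ij}|\le 1$. This specific factorization is the missing idea: it makes the copula marginal constraints $\sum_j p^{*}_{ij}=1/n$ enter the estimate directly, and the block Gershgorin theorem then gives $\sum_{j\neq i}\|B_{ij}\|_F\le\frac{1}{n}\sum_j|X_{ij}|<\frac{1}{2}$, delivering the sharp constant $2$. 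Without routing the estimate through $A^{\dagger}$ and the explicit form of $X$, I do not see how your spectral/AM--HM comparison reaches~$2$.

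A second, less visible gap concerns the applicability of Gabrielsen's theorem. It requires a connected open domain and a function vanishing on the boundary; neither holds for $I$ on the raw feasible set of~$(\mathrm{P})$. The paper spends several lemmas establishing that the $\tau=\mu$ level set is in one-to-one correspondence with a \emph{star domain} (via monotonicity of $\tau$ along rays from the anti-comonotone copula $Q$, Lemmas~\ref{lem:anti-comonotone}--\ref{lem:connectivity}), and then replaces $I$ by an auxiliary $C^2$ function $g$ that agrees with $I$ on an inner shell, is constant near the boundary, and has the same stationary points (Lemma~\ref{lem:taking-func-g}). Your one-line appeal to a ``mountain-pass argument on the compact feasible region'' hides this non-trivial topological preparation, which is needed because entropy does not blow up on the boundary of the copula polytope.
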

\noindent The proof follows by considering the information on an arbitrary curve passing through a stationary point and calculating the Hessian there. Local convexity of the information on every stationary points leads to the statement~\cite{gabrielsen1986}. The complete proof is given in~\ref{appendix:uniqueness}.

The following propositions support the assertion that the Lagrange multiplier becomes small when $\mu = \tau(P)$ is set small, indicating the unique existence of MICK when Kendall's $\tau$ is sufficiently small. Here, we show that $\lambda$ becomes small when $\mu$ is small from the stationary condition
\begin{equation}\label{eq:stationary}
    \log{p_{ij}} + 1 + \lambda (Wp)_{ij} - \alpha_j - \beta_i = 0, \lambda, \alpha_j, \beta_i \in \mathbb{R}
\end{equation}
and the constraint on Kendall's $\tau$, $\tau(P) = 1-\mathrm{tr}(\Xi P \Xi P^\top) = \mu$. Specifically, we show that $0<\lambda<2$ when $\mu < n^{-6}$. We leave it as future work to investigate whether the upper bound of $\mu$ can be chosen independently of the gridsize $n$.

\begin{lemma}\label{lem:checkerboard_copula}
    Let $P = (p_{ij})$ a checkerboard copula. Then, there exist $i < i'$ and $j < j'$ such that at least one of the following conditions holds: \\
    (i) $p_{ij} \geq n^{-2}$ and $p_{i'j'} \geq n^{-2}$ \\
    (ii) $p_{i'j} \geq n^{-2}$ and $p_{ij'} \geq n^{-2}$
\end{lemma}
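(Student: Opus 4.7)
\smallskip
\noindent\textbf{Proof plan.} The plan is to combine a per-row pigeonhole with the column-sum constraint on $P$. First, since each row satisfies $\sum_{j=1}^n p_{ij} = 1/n$ and has $n$ summands, pigeonhole gives at least one column index in row $i$ at which the entry is $\geq n^{-2}$. I will introduce the sets $S_i = \{\,j : p_{ij} \geq n^{-2}\,\}$; each $S_i$ is thus nonempty.

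The second step is to locate two rows $i_1 < i_2$ together with representatives $j_1 \in S_{i_1}$ and $j_2 \in S_{i_2}$ such that $j_1 \neq j_2$. I will argue this by contradiction: if no such choice existed, then every $S_i$ would have to reduce to the same singleton $\{j^*\}$ (otherwise either a single $S_i$ contains two distinct elements, or two different rows have singleton $S_i$'s at different positions, and in both cases a valid pair $(j_1,j_2)$ can be extracted). Under the assumption $S_i = \{j^*\}$ for all $i$, we would have $p_{i,j} < n^{-2}$ for every $i$ and every $j \neq j^*$, so that $\sum_{i=1}^n p_{i,j} < n \cdot n^{-2} = 1/n$ for each such column $j$, contradicting the column-sum constraint $\sum_i p_{ij} = 1/n$.

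Once $i_1 < i_2$ and $j_1 \neq j_2$ are in hand, the final step is purely bookkeeping. Setting $i = i_1$, $i' = i_2$, $j = \min(j_1, j_2)$, and $j' = \max(j_1, j_2)$, we have $i < i'$ and $j < j'$; if $j_1 < j_2$ then $p_{ij} = p_{i_1,j_1} \geq n^{-2}$ and $p_{i'j'} = p_{i_2,j_2} \geq n^{-2}$, yielding case (i), whereas if $j_1 > j_2$ the two heavy entries sit on the antidiagonal, yielding case (ii).

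I expect the crux to be the second step, specifically the exclusion of the degenerate scenario in which every row's heavy mass concentrates in one common column $j^*$. The column-sum constraint is essential here; for an arbitrary nonnegative matrix with only row-sum constraints the statement would fail, so the proof must explicitly invoke both marginal conditions of the checkerboard copula.
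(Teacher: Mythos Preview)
Your proof is correct. Both your argument and the paper's are short pigeonhole-plus-marginal arguments, but they are organized differently. The paper first picks a single heavy entry $p_{ij}\ge n^{-2}$ from the global sum $\sum_{i,j}p_{ij}=1$, then assumes for contradiction that every entry outside row $i$ and column $j$ is $<n^{-2}$; the column constraints then force every $p_{ij'}$ in row $i$ to exceed $n^{-2}$, and summing row $i$ gives $>1/n$, a contradiction. Your version instead applies pigeonhole row by row to produce the nonempty sets $S_i$, and then uses a single column-sum violation to rule out the degenerate case $S_i\equiv\{j^*\}$. Your route is a bit more symmetric and avoids the two-step ``column then row'' contradiction; the paper's route needs only one heavy entry to get started rather than one per row. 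Both rely on $n\ge 2$, which is implicit in the statement since otherwise no $i<i'$ exists.
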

\begin{proof}
    Since the sum of $P$ is 1, there exists at least one pair $i, j$ such that $p_{ij} \geq n^{-2}$. Now, assuming that for all $i' \neq i$ and $j' \neq j$, we have $p_{i'j'} < n^{-2}$, from the conditions of copulas, it follows that for $j' \neq j$, $p_{ij'} = n^{-1} -  \sum_{i' \neq i} p_{i'j'} > n^{-2}$. Furthermore, $p_{ij} + \sum_{j' \neq j} p_{ij'} > n^{-1}$, which contradicts the copula condition. Therefore, there exists some $i' \neq i$ and $j' \neq j$ such that $p_{i'j'} \geq n^{-2}$. The indices can be rearranged accordingly.
\end{proof}

\begin{proposition}
    $0<\lambda<2$ when $\mu < n^{-6}$.
\end{proposition}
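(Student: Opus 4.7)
The plan is to eliminate the row and column multipliers $\alpha_j,\beta_i$ in \eqref{eq:stationary} by taking a signed sum around a $2\times 2$ rectangle, and then play the resulting identity against a rectangle-by-rectangle decomposition of $\tau$. For arbitrary $i<i'$ and $j<j'$, the four instances of \eqref{eq:stationary} combine into
\[
\log\frac{p_{ij}p_{i'j'}}{p_{ij'}p_{i'j}} \;=\; -\lambda\,\delta_{ii'jj'},\qquad \delta_{ii'jj'}:=(Wp)_{ij}+(Wp)_{i'j'}-(Wp)_{ij'}-(Wp)_{i'j}.
\]
Using $W=-(V\otimes V)+J_{n^2\times n^2}$ (the constant part cancels in this signed combination) and the row-major identity $((V\otimes V)p)_{ij}=\sum_{k,l}V_{ik}V_{jl}p_{kl}$, one obtains
\[
\delta_{ii'jj'} \;=\; -\sum_{k,l}(V_{ik}-V_{i'k})(V_{jl}-V_{j'l})\,p_{kl}.
\]
A direct check on the entries of $V$ shows that, for $i<i'$, the factor $V_{ik}-V_{i'k}$ is non-positive with support $k\in\{i,i+1,\dots,i'\}$, equal to $-1$ at the two endpoints and $-2$ strictly between (analogously in the second factor). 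Therefore $\delta_{ii'jj'}\le 0$, and retaining only the four corner contributions yields the key estimate
\[
|\delta_{ii'jj'}| \;\ge\; p_{ij}+p_{ij'}+p_{i'j}+p_{i'j'}.
\]
A parallel regrouping of $\tau=1-\mathrm{tr}(\Xi P\Xi P^\top)$ by unordered pairs of cells also yields
\[
\tau(P) \;=\; 2\sum_{i<i',\,j<j'}\bigl(p_{ij}p_{i'j'}-p_{ij'}p_{i'j}\bigr),
\]
so the sign of each rectangle contribution matches the sign of the log-odds ratio above.

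I would first establish $\lambda>0$ (assuming $\mu>0$) by contradiction: if $\lambda\le 0$, the displayed identity forces $p_{ij}p_{i'j'}\le p_{ij'}p_{i'j}$ for every $i<i'$ and $j<j'$, making every term in the rectangle decomposition non-positive and hence $\tau(P)\le 0$, contradicting $\mu>0$.

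For the upper bound $\lambda<2$, suppose for contradiction $\lambda\ge 2$ and apply Lemma \ref{lem:checkerboard_copula} to obtain $i<i'$ and $j<j'$ for which two of the four corners are at least $n^{-2}$. The corner bound gives $|\delta_{ii'jj'}|\ge 2/n^2$, so $p_{ij}p_{i'j'}/(p_{ij'}p_{i'j})=\exp(\lambda|\delta_{ii'jj'}|)\ge e^{4/n^2}$. Because $\lambda>0$ already forces every rectangle to contribute non-negatively, it suffices to lower bound this single rectangle. In case (i) of Lemma \ref{lem:checkerboard_copula}, $p_{ij}p_{i'j'}\ge n^{-4}$ combined with $1-e^{-x}\ge x/2$ on $[0,1]$ yields $2p_{ij}p_{i'j'}\bigl(1-e^{-\lambda|\delta_{ii'jj'}|}\bigr)\ge 4n^{-6}$; in case (ii), $\lambda>0$ already gives $p_{ij}p_{i'j'}\ge p_{ij'}p_{i'j}\ge n^{-4}$, and $e^x-1\ge x$ yields $2p_{ij'}p_{i'j}\bigl(e^{\lambda|\delta_{ii'jj'}|}-1\bigr)\ge 8n^{-6}$. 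In either case $\tau(P)\ge 4n^{-6}>\mu$, a contradiction.

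The delicate step is the algebraic reduction to $\delta_{ii'jj'}=-\sum(V_{ik}-V_{i'k})(V_{jl}-V_{j'l})p_{kl}$ and the corner-only lower bound $|\delta_{ii'jj'}|\ge p_{ij}+p_{ij'}+p_{i'j}+p_{i'j'}$; the sign bookkeeping in $V$ is what ultimately produces the exponent $n^{-6}$. Once this bound, the rectangle decomposition of $\tau$, and the TP2 consequence of $\lambda>0$ are in hand, Lemma \ref{lem:checkerboard_copula} splits the argument into two structurally identical sub-cases, each closed by an elementary estimate on $e^{\pm x}$.
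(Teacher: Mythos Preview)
Your argument is essentially the paper's: your $|\delta_{ii'jj'}|$ coincides with the paper's $r(i,j,i',j')$, and both proofs combine the identity $\log\bigl(p_{ij}p_{i'j'}/(p_{ij'}p_{i'j})\bigr)=\lambda r$, the rectangle decomposition of $\tau$, and Lemma~\ref{lem:checkerboard_copula}. The paper avoids your case split by writing $p_{ij}p_{i'j'}-p_{ij'}p_{i'j}=(p_{ij}p_{i'j'}+p_{ij'}p_{i'j})\dfrac{e^{\lambda r}-1}{e^{\lambda r}+1}$, which handles both alternatives of the lemma at once since $p_{ij}p_{i'j'}+p_{ij'}p_{i'j}\ge n^{-4}$ either way.

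One small gap: in your case~(i) you appeal to $1-e^{-x}\ge x/2$ on $[0,1]$ but never check $\lambda|\delta_{ii'jj'}|\le 1$, and under the contradiction hypothesis $\lambda\ge 2$ this quantity is unbounded above. The fix is immediate---if $\lambda|\delta_{ii'jj'}|>1$ then $1-e^{-\lambda|\delta_{ii'jj'}|}>1-e^{-1}>\tfrac12$, so $2p_{ij}p_{i'j'}\bigl(1-e^{-\lambda|\delta_{ii'jj'}|}\bigr)>n^{-4}\ge n^{-6}$ for $n\ge 2$---but it should be said.
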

First, Kendall's $\tau$ can be rewritten as
\begin{equation}\label{eq:kendalls_tau}
    \tau(P) = p^\top (V \otimes V)p = 2\sum_{i<i'} \sum_{j<j'} (p_{ij}p_{i'j'}-p_{ij'}p_{i'j}).
\end{equation}
Next, from Lemma~\ref{lem:tau} and \eqref{eq:stationary} for any $i < i'$ and $j < j'$, we have 
\begin{align}
\log{\frac{p_{ij}p_{i'j'}}{p_{ij'}p_{i'j}}} &= -\lambda\{(Wp)_{ij} + (Wp)_{i'j'} + (Wp)_{ij'} + (Wp)_{i'j} \}\\
&= \lambda \sum_{k=i}^{i'-1} \sum_{l=j}^{j'-1} (p_{kl} + p_{k+1,l} + p_{k,l+1} + p_{k+1,l+1}). \label{eq:logodds_and_lambda}
\end{align}
Therefore, each term in \eqref{eq:kendalls_tau} becomes
$$p_{ij}p_{i'j'}-p_{ij'}p_{i'j} = (p_{ij}p_{i'j'}+p_{ij'}p_{i'j})\frac{e^{\lambda r(i,j,i',j')}-1}{e^{\lambda r(i,j,i',j')}+1},$$
where we define $r(i,j,i',j') = \sum_{k=i}^{i'-1} \sum_{l=j}^{j'-1} (p_{kl} + p_{k+1,l} + p_{k,l+1} + p_{k+1,l+1})$.

Here we show $\tau(P) \geq n^{-6}$ when $\lambda \geq 2$. It suffices to show that there exist $i<i'$ and $j<j'$ such that $p_{ij}p_{i'j'}-p_{ij'}p_{i'j} > n^{-6}$. From Lemma~\ref{lem:checkerboard_copula}, there exists $i<i'$ and $j<j'$ such that $p_{ij}p_{i'j'}+p_{ij'}p_{i'j} \geq n^{-4}$. Moreover,
\begin{align*}
    e^{\lambda r(i,j,i',j')} &\geq 1 + \lambda r(i,j,i',j') \\
    &\geq 1 + 2(p_{ij} + p_{i'j'}+p_{ij'}+p_{i'j}) \\
    &\geq 1 + \frac{4}{n^2}.
\end{align*}
Therefore, 
\begin{align*}
p_{ij}p_{i'j'}-p_{ij'}p_{i'j} &= (p_{ij}p_{i'j'}+p_{ij'}p_{i'j})\frac{e^{\lambda r(i,j,i',j')}-1}{e^{\lambda r(i,j,i',j')}+1}\\
&= (p_{ij}p_{i'j'}+p_{ij'}p_{i'j})(1-\frac{2}{e^{\lambda r(i,j,i',j')}+1})\\
&\geq n^{-4} (1-\frac{2}{(1 + \frac{4}{n^2})+1})\\
&\geq n^{-6}.
\end{align*}

\begin{proposition}
    $\mu$ increases monotonically with respect to $\lambda$.
\end{proposition}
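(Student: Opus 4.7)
The plan is to recast MICK as the stationary point of the Lagrangian
$$L_\lambda(P) = I(P) + \frac{\lambda}{2}\, p^\top W p = I(P) - \frac{\lambda}{2}\tau(P) + \frac{\lambda}{2}$$
subject only to the marginal constraints and $P \geq 0$, since taking partial derivatives of $L_\lambda$ with respect to $p_{ij}$ recovers \eqref{eq:stationary} exactly (with $\alpha_j,\beta_i$ absorbing the marginal Lagrange multipliers). Its Hessian with respect to $p=\mathrm{vec}(P)$ is $H_\lambda = \mathrm{diag}(1/p_{ij}) + \lambda W$. In the uniqueness range of Theorem~\ref{thm:uniqueness}, the Hessian analysis used there shows that $H_\lambda$ restricted to the tangent space of the marginal constraints is positive definite, so by the implicit function theorem $(P(\lambda), \alpha(\lambda), \beta(\lambda))$ depend smoothly on $\lambda$.

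The next step is to differentiate \eqref{eq:stationary} in $\lambda$, obtaining
$$\frac{\dot p_{ij}}{p_{ij}} + (Wp)_{ij} + \lambda (W\dot p)_{ij} = \dot\alpha_j + \dot\beta_i,$$
where the dot denotes $d/d\lambda$. Because the marginal constraints are preserved, $\sum_i \dot p_{ij} = \sum_j \dot p_{ij} = 0$, so multiplying by $\dot p_{ij}$ and summing over $(i,j)$ annihilates the multiplier terms and leaves
$$\sum_{ij}\frac{\dot p_{ij}^{\,2}}{p_{ij}} + \dot p^\top W p + \lambda\, \dot p^\top W \dot p = 0.$$
Combining this with $\dot\mu = \dot\tau = -2\, \dot p^\top W p$ yields the key identity
$$\dot\mu = 2\, \dot p^\top H_\lambda\, \dot p.$$
Positive-definiteness of $H_\lambda$ on the tangent space of the marginal constraints then gives $\dot\mu \geq 0$, and if $\dot p = 0$ the differentiated stationary equation collapses to $(Wp)_{ij} = \dot\alpha_j + \dot\beta_i$, forcing $Wp$ to decompose into a row-plus-column function---a condition that fails away from the trivial point $\lambda=0$. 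This produces strict monotonicity $d\mu/d\lambda > 0$ on the interior of the uniqueness range.

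The hard part will be extending the argument beyond the uniqueness threshold. Once $\lambda$ grows large enough that $H_\lambda$ loses positive definiteness, multiple branches of stationary points may appear or coalesce, the map $\lambda \mapsto P(\lambda)$ need no longer be single-valued, and the local implicit-function argument breaks down. A complete proof of global monotonicity would need to track individual branches via a continuation or homotopy argument, or to distinguish a canonical branch (e.g.\ the one emanating from the uniform copula at $\lambda=0$) along which monotonicity is preserved. I would also have to verify that the convexity established in the proof of Theorem~\ref{thm:uniqueness} actually applies on the full marginal-tangent space and not only on the smaller Kendall-constrained tangent space, since the key identity above requires $H_\lambda$ to be positive definite in the stronger, unconstrained-in-$\tau$ sense.
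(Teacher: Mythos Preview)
Your approach is genuinely different from the paper's and, within the range $|\lambda|<2$, arguably more complete. The paper uses the mean value theorem to write each summand of $\tau(P)$ as $p_{ij}p_{i'j'}-p_{ij'}p_{i'j}=\zeta\,\lambda\, r(i,j,i',j')$ with $\zeta,r>0$, and then asserts monotonicity in $\lambda$; but since $\zeta$, $r$ and the entries $p_{ij}$ all vary with $\lambda$, that step as written only delivers $\operatorname{sign}(\mu)=\operatorname{sign}(\lambda)$, not monotonicity of the map $\lambda\mapsto\mu$. Your implicit-function route instead yields the exact identity $\dot\mu=2\,\dot p^{\top}H_\lambda\,\dot p$ and then invokes positive-definiteness of $H_\lambda$ on the marginal-tangent space, which is a genuine derivative computation rather than a sign argument.

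On your closing concern: the estimate in Appendix~\ref{appendix:uniqueness} in fact establishes positive-definiteness of $D_1+\lambda D_2=(A\otimes A)^\top H_\lambda(A\otimes A)$ on the full $(n-1)^2$-dimensional marginal-tangent space, not merely on the Kendall-constrained subspace, so your identity is supported exactly where you need it. Your $\dot p=0$ analysis is also sound---additive separability of $(Wp)_{ij}$ forces, via the stationary equation, additive separability of $\log p_{ij}$ and hence $P$ uniform---and one checks that even at $\lambda=0$ the uniform copula gives $(Wp)_{ij}=1-n^{-2}(2i-n-1)(2j-n-1)$, which is not additively separable for $n\geq 2$, so $\dot p\neq 0$ there as well and strict monotonicity holds throughout the uniqueness interval. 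What the paper's formulation aims for is a statement valid for all $\lambda$, but its argument does not actually reach beyond sign agreement; what your approach buys is a rigorous proof on the interval where the stationary branch is well-defined, together with an honest identification of where a continuation argument would be needed.
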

\begin{proof}
    From the mean value theorem, there exists a constant $\zeta$ such that 
    $$\log{\frac{p_{ij}p_{i'j'}}{p_{ij'}p_{i'j}}} = \log{p_{ij}p_{i'j'}} - \log{p_{ij'}p_{i'j}} = \frac{1}{\zeta}(p_{ij}p_{i'j'}-p_{ij'}p_{i'j}),$$
    $$0 < \min{(p_{ij}p_{i'j'},p_{ij'}p_{i'j})} \leq \zeta \leq \max{(p_{ij}p_{i'j'},p_{ij'}p_{i'j})}.$$
    Therefore, it follows from \eqref{eq:logodds_and_lambda} that for any $i < i'$ and $j < j'$, 
    $$p_{ij}p_{i'j'}-p_{ij'}p_{i'j} = \zeta\lambda r(i,j,i',j'), $$
    where $\zeta$ and $r(i,j,i',j')$ are always positive. Hence, $\tau(P) = \mu$, represented as in \eqref{eq:kendalls_tau}, is monotonically increasing with respect to  $\lambda$.
\end{proof}



\section{The comparison between MICK and MICS\label{sec:comparison}}

The argument in the previous section applies not only to MICK but also to MICS. Here, we draw parallel lines between our MICK and MICS from previous studies.
Let us review the optimization problem for MICS, mentioned in Example \ref{ex:mics}.
$$\mathrm{Minimize}\ \sum_{i=1}^n \sum_{j=1}^n p_{ij}\log{p_{ij}}$$
$$\mathrm{s.t.}\ \sum_{i=1}^n p_{ij} = \frac{1}{n},\ \sum_{j=1}^n p_{ij} = \frac{1}{n}, $$
$$0\leq p_{ij}, $$
$$\sum_{i=1}^n \sum_{j=1}^n h_{ij}p_{ij} = \mu, \ \mathrm{where}\ h_{ij} = 12\left(\frac{i}{n}-\frac{1}{2n}-\frac{1}{2}\right)\left(\frac{j}{n}-\frac{1}{2n}-\frac{1}{2}\right).$$
The only difference between the problem settings of MICK and that of MICS is the last constraint; MICK fixes Kendall's $\tau$ but MICS fixes Spearman's $\rho$. However, this difference is not trivial because the convexity of the optimization problem differs. While Kendall's $\tau$ becomes a non-convex constraint, Spearman's $\rho$ becomes a convex constraint. In this section, we demonstrate that while MICK and MICS are similar models, they exhibit distinct properties from various perspectives, including optimal solutions, odds ratios, and total positivity. Results are summarized in Table~\ref{tab:comparison}.

\subsection{Optimal solutions}

Unfortunately, the Lagrangian method becomes intractable for MICK. 
Differing from MICK, MICS is more tractable due to the linearity of Spearman's $\rho$ and the convexity of its optimization problem. Specifically, MICS is known to exist uniquely and is represented in a manner reminiscent of the exponential family: $p_{ij} = A_i B_j \exp{(\theta h_{ij})}$, where $h_{ij} = 12(\frac{i}{n}-\frac{1}{2n}-\frac{1}{2})(\frac{j}{n}-\frac{1}{2n}-\frac{1}{2})$. Here, $A_i$ and $B_j$ are intractable normalization functions and $\theta$ is the natural parameter corresponding to the Lagrangian multiplier of the optimization problem above. See Meeuwissen and Bedford~\cite{MEEU1997} and Bedford and Wilson~\cite{bedford2014construction} for more information of MICS.


\subsection{Log odds ratio and pseudo log odds ratio}

In Section 3, we derived the stationary condition for MICK and obtained the characterization of MICK through the ``pseudo log odds ratio''. In doing so, we compared the variation of Kendall's $\tau$ and the variation of the objective function, which is the information of the checkerboard copula with respect to the infinitesimal movement using $T^{ij}$. This flow of arguments also applies to MICS. Since the difference in Spearman's $\rho$ between checkerboard copulas $P$ and $P + \epsilon T^{ij}$ is calculated as 
\begin{align*}
    \rho(P + \epsilon T^{ij}) - \rho(P)
    &= 12(\mathrm{tr}(\Omega (P + \epsilon T^{ij}) - \frac{1}{4}) - 12(\mathrm{tr}(\Omega P) - \frac{1}{4})\\
    &= 12 \epsilon \mathrm{tr}(\Omega T^{ij})\\
    &= 12 \epsilon \frac{1}{n^2}
\end{align*}
for any $\epsilon\ (>0)$. 
The last equality follows from the definition of the matrix $\Omega$ : $\omega_{i,j} = \frac{1}{n^2}(n-i+\frac{1}{2})(n-j+\frac{1}{2})$. Therefore, $\rho_{P + \epsilon T^{ij} - \epsilon T^{i'j'}} = \rho_{P}$ for different $(i,j)$ and $(i',j')$. From the stationary condition, we have $2\epsilon \log{\frac{p_{ij}p_{i+1,j+1}}{p_{i+1,j}p_{i,j+1}}} - 2\epsilon \log{\frac{p_{i',j'}p_{i'+1,j'+1}}{p_{i'+1,j'}p_{i',j'+1}}} = 0$, meaning that \textit{log odds ratio} $\log{\frac{p_{ij}p_{i+1,j+1}}{p_{i+1,j}p_{i,j+1}}}$ is constant for every $2\times 2$ submatrices of MICS.
This result is consistent with the known formulation $p_{ij} = A_i B_j \exp{\left(12\theta (\frac{i}{n}-\frac{1}{2n}-\frac{1}{2})(\frac{j}{n}-\frac{1}{2n}-\frac{1}{2})\right)}$: log odds ratio of MICS is calculated as 
\begin{align}
    \log{\frac{p_{ij}p_{i+1,j+1}}{p_{i+1,j}p_{i,j+1}}} &= \frac{12}{n^2}
    \theta.
\end{align}
Here, note that the intractable normalizing functions $A_i$, $A_{i+1}$, $B_j$, $B_{j+1}$ cancel out during the calculation.

On the other hand, pseudo log odds ratio $\frac{1}{p_{ij}+p_{i+1,j+1}+p_{i+1,j}+p_{i,j+1}}\log{\frac{p_{ij}p_{i+1,j+1}}{p_{i+1,j}p_{i,j+1}}}$ is constant for every $2\times 2$ submatrices of MICK. In a different perspective, when we specify one point on MICK, log odds ratio around that point is proportional to the sum of probability mass around that point. Compared to MICS where log odds ratio is constant everywhere, it can be interpreted that MICK puts more mass on regions with stronger positive dependence.

In summary, MICK is defined by Kendall's $\tau$, but it can also be characterized by the pseudo log odds ratio. Similarly, MICS is defined by Spearman's $\rho$, but it can alternatively be specified by the log odds ratio. In essence, these ratios can be viewed as model parameters, with the strength of positive dependence monotonically increasing with respect to these parameters.

\subsection{Total positivity}

Finally, we argue that MICK possesses preferable dependence properties known as ``total positivity''. 
A function of two variables $f$ is said to be TP2 (short for ``total positivity of order two'') when $f(x,y)f(x',y') \geq f(x,y')f(x',y)$ for any pairs $(x,y)$ and $(x',y')$, where $x < x'$ and $y < y'$. 
Moreover, total positivity of higher order is defined as follows.
\begin{definition}[Total positivity~\cite{MEEU1997}]
    A density $f(x,y)$ is called \textit{totally positive of degree $n$} (TP$n$) if and only if for all $x_1 \leq x_2 \leq \cdots \leq x_n$ and for all $y_1 \leq y_2 \leq \cdots \leq y_n$ matrix $M$ with elements $m_{ij} = f(x_i,y_j)$ obeys $\mathrm{det}(M) \geq 0$.
\end{definition}
\noindent Recently, Fuchs and Tschimpke~\cite{FUCHS2023126629} studied total positivity of copulas. Here, we show that MICK is TP2.
To show it, we first introduce the relaxed problem of (P). By relaxing the last equality constraint in (P), we obtain the following problem (RP): 
$$(\mathrm{RP})\ \mathrm{Minimize}\ \sum_{i=1}^n \sum_{j=1}^n p_{ij}\log{p_{ij}}$$
$$\mathrm{s.t.}\  \sum_{i=1}^n p_{ij} = \frac{1}{n},\ \sum_{j=1}^n p_{ij} = \frac{1}{n},$$
$$0\leq p_{ij},$$
$$1-\mathrm{tr}(\Xi P\Xi P^{\top}) \geq \mu\ (\mu \geq 0).$$
Note that relaxing in the opposite direction is invalid, as it always results in the uniform copula. The fact that this relaxation problem (RP) is non-convex as well is evident from Example~\ref{eq:hanrei}.

\begin{lemma}~\label{lem:nonzero}
    All entries of the optimal solution of (RP) are strictly positive.
\end{lemma}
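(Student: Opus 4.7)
The plan is to argue by contradiction: assume the optimum $P^*$ of $(\mathrm{RP})$ has some entry $p^*_{ab} = 0$ and construct a feasible perturbation along which the objective strictly decreases. The engine of the argument is that $p \log p$ has right-hand derivative $-\infty$ at $p = 0$, so any perturbation assigning positive mass to $(a,b)$ produces a first-order $\epsilon \log \epsilon$ decrease in $I$ that overwhelms the $O(\epsilon)$ contribution from the other entries.

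I would first dispatch the trivial case $\mu \le 0$, where the uniform copula $U$ is feasible and globally minimizes $I$, so $P^* = U$ already has strictly positive entries. For $\mu > 0$ one must also check that $\tau(P^*) = \mu$ at the optimum: otherwise the segment $(1-t)P^* + tU$ stays feasible for small $t > 0$, and strict convexity of $I$ together with $I(U) < I(P^*)$ yields a strict decrease, contradicting optimality.

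The central construction is a direction $D$ with $D_{ab} > 0$, with row and column sums zero, with $D_{ij} \ge 0$ wherever $p^*_{ij} = 0$, and along which Kendall's $\tau$ does not decrease to first order. My candidate is $D = \tilde P - P^*$ for a strictly positive feasible copula $\tilde P$; such $\tilde P$ exists because $\mu < 1 - 1/n$, and may be built as a mixture of the comonotone copula $P_c$ and the uniform copula $U$. By the quadratic-form representation of Lemma~\ref{lem:lexical},
\[
\tau(P^* + \epsilon D) = \mu + 2\epsilon\bigl(B(P^*, \tilde P) - \mu\bigr) + O(\epsilon^2),
\]
where $B$ is the symmetric bilinear form with $B(P,P) = \tau(P)$. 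The hard part is therefore to choose $\tilde P$ strictly positive, feasible, and satisfying $B(P^*, \tilde P) \ge \mu$. My plan is to establish the inequality $B(P^*, P_c) \ge \tau(P^*) = \mu$ directly, using the closed form $B(P, P_c) = -(1/n)\,\mathrm{tr}(V^2 P)$, the antisymmetry of $V$, the positive semi-definiteness of $-V^2$, and the fact that $P_c$ maximizes $\tau$ on the copula simplex; then setting $\tilde P = (1-\delta-\delta')P^* + \delta P_c + \delta' U$ with $\delta$ large relative to $\delta' > 0$ so that the inequality is inherited while the uniform component provides strict positivity.

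Once such $D$ is in hand, I would expand the objective entry by entry,
\[
I(P^* + \epsilon D) - I(P^*) = \Bigl(\sum_{(i,j):\,p^*_{ij}=0} D_{ij}\Bigr)\,\epsilon \log \epsilon + O(\epsilon),
\]
since each zero entry with $D_{ij} > 0$ contributes $\epsilon D_{ij}\log(\epsilon D_{ij})$ while each nonzero entry contributes only $O(\epsilon)$. The coefficient of $\epsilon\log\epsilon$ is strictly positive because it includes $D_{ab} > 0$, and $\epsilon\log\epsilon$ dominates the $O(\epsilon)$ correction as $\epsilon \to 0^+$, so $I(P^* + \epsilon D) < I(P^*)$ for sufficiently small $\epsilon > 0$, contradicting the optimality of $P^*$. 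The delicate point throughout is securing the bilinear-form inequality $B(P^*, \tilde P) \ge \mu$, which is where the non-convexity of the Kendall constraint forces the argument to go beyond a routine interior-optimum or KKT-style calculation.
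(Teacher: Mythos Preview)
Your approach is sound and takes a genuinely different route from the paper. The paper argues locally: at a zero entry $p^*_{i',j'}=0$ it applies the single basis perturbation $\epsilon T^{i',j'}$, notes via Lemma~\ref{lem:tau} that $\tau$ strictly increases (so feasibility in (RP) is preserved), and reads off the $\epsilon\log\epsilon$ drop in $I$. You instead perturb globally toward a strictly positive copula $\tilde P$ and control $\tau$ through the bilinear form $B$. Your route is more robust---it handles any configuration of zeros, whereas the paper's argument as written assumes $i',j'\le n-1$ and positivity of the three neighbouring entries---but the price is the inequality $B(P^*,P_c)>\mu$. Two remarks on that step: first, you need the \emph{strict} inequality, not merely $\ge$, since otherwise the $\delta' U$ component can pull $B(P^*,\tilde P)$ below $\mu$ regardless of how large $\delta$ is relative to $\delta'$; second, the ``direct'' plan via $-V^2\succeq 0$ and the maximality of $P_c$ is not by itself enough to deliver this. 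The clean justification is the monotonicity of $\tau$ along the segment from $P$ to $P_c$ (the appendix's Lemma~\ref{lem:monotone-tau-2}), whose derivative at $\alpha=0$ is precisely $2\bigl(B(P,P_c)-\tau(P)\bigr)>0$ for $P\neq P_c$. With that in hand your construction goes through.
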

\begin{proof}
    Let $P^*$ denote the optimal solution of (RP). Assume $p_{i',j'}^* = 0\ (i', j' \in \{1,\dots,n-1\})$ for a single entry of $P^*$. Here we show this contradicts to the optimality of $P^*$. Consider a different checkerboard copula $P'$ = $P^* + \epsilon T_{i',j'}$, where $0 < \epsilon <  \frac{p^*_{i'+1,j'}p^*_{i',j'+1}}{p^*_{i'+1,j'+1}}$. By Lemma~\ref{lem:tau}, we have
$\tau_{P'} >  \tau_{P^*} = \mu$.
Hence, $P'$ belongs to the feasible set of (RP). Finally, we show that $P'$ takes the better optimal value than $P^*$. The optimal value for the solution $P$ is calculated as $I(P) = \sum_{i=1}^n\sum_{j=1}^n p_{ij}\log{p_{ij}}$. Following Lemma~\ref{lem:obj}, the difference in optimal values is
\begin{align*}
    I(P') - I(P^*) &= (p^*_{i',j'}+\epsilon)\log{(p^*_{i',j'}+\epsilon)} - \mathbf{p}^*_{i',j'}\log{\mathbf{p}^*_{i',j'}}\\
    &+ (p^*_{i'+1,j'+1}+\epsilon)\log{(p^*_{i'+1,j'+1}+\epsilon)} - p^*_{i'+1,j'+1}\log{p^*_{i'+1,j'+1}}\\
    &- \{(p^*_{i'+1,j'}+\epsilon)\log{(p^*_{i'+1,j'}+\epsilon)} - p^*_{i'+1,j'}\log{p^*_{i'+1,j'}}\}\\
    &- \{(p^*_{i',j'+1}+\epsilon)\log{(p^*_{i',j'+1}+\epsilon)} - p^*_{i',j'+1}\log{p^*_{i',j'+1}}\}\\
    &= \epsilon \log{\epsilon} + \epsilon \log{\frac{p^*_{i'+1,j'+1}}{p^*_{i'+1,j'}p^*_{i',j'+1}}}\\
    &< 0.
\end{align*}
Since (RP) is a minimization problem, this contradicts to the optimality of $P^*$.
\end{proof}

\begin{lemma}\label{lem:inner}
    The optimal solution of (RP) is not a relatively inner point.
\end{lemma}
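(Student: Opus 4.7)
The plan is an argument by contradiction built directly on Lemma~\ref{lem:nonzero}. First I would unpack what ``relatively inner'' means for the feasible set of (RP): the affine hull is fixed by the marginal equalities $\sum_{i}p_{ij}=\sum_{j}p_{ij}=1/n$, and a feasible $P$ lies in the relative interior exactly when every entry is strictly positive \emph{and} the Kendall inequality is strict, $\tau(P)>\mu$. By Lemma~\ref{lem:nonzero} the optimum $P^{*}$ of (RP) already satisfies $p^{*}_{ij}>0$ everywhere, so the lemma reduces to showing $\tau(P^{*})=\mu$.

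Next I would assume, for contradiction, that $\tau(P^{*})>\mu$. Because $\tau$ is continuous, there is an open neighborhood of $P^{*}$ inside the affine hull of the marginals on which the Kendall constraint stays strictly satisfied. On that neighborhood $P^{*}$ is a local minimizer of the information functional $I(P)=\sum p_{ij}\log p_{ij}$ subject \emph{only} to the marginal equalities and the nonnegativity constraints. This auxiliary problem is a standard convex program with a strictly convex objective, so any local minimizer is global and unique; the identity
$$I(P)=\mathrm{KL}(P\,\|\,U)-2\log n, \qquad U=\tfrac{1}{n^{2}}J_{n},$$
immediately pins down that unique minimizer as the uniform copula $U$. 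Hence $P^{*}=U$.

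Finally I would close the argument by examining $\tau(U)=0$. If $\mu=0$ then $\tau(P^{*})=0=\mu$ contradicts the standing assumption $\tau(P^{*})>\mu$; if $\mu>0$ then $U$ is not feasible for (RP) at all, contradicting the feasibility of $P^{*}=U$. Either way $\tau(P^{*})=\mu$, so $P^{*}$ lies on the boundary surface defined by the Kendall constraint and fails to be relatively inner. The only mildly delicate step is the first one: since the feasible set of (RP) is nonconvex, ``relative interior'' needs to be defined relative to the marginals' affine hull, and one must verify that the strict inequality $\tau(P^{*})>\mu$ together with $p^{*}_{ij}>0$ really produces a genuine open neighborhood of $P^{*}$ within that affine hull. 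This is where most of the care goes; once it is in place the rest is a short deduction from strict convexity of the Shannon information.
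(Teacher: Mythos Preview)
Your proof is correct and follows the same line as the paper's: assume the optimizer is relatively interior, conclude it must coincide with the unique minimizer $U=\tfrac{1}{n^{2}}J_{n}$ of the strictly convex information over the marginal polytope, and derive a contradiction from $\tau(U)=0$. The paper's version reaches the same contradiction without first invoking Lemma~\ref{lem:nonzero}---it works directly from the definition of ``relatively inner''---but that preliminary reduction in your argument is harmless, and your explicit treatment of the $\mu=0$ edge case is in fact slightly more careful than the paper's.
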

\begin{proof}
    The objective function is convex on the set $\{P=(p_{ij})\ |\ \sum_j p_{ij} = \frac{1}{n}, \sum_i p_{ij} = \frac{1}{n}, p_{ij}\geq 0\}$ and its stationary point is uniquely $P = \frac{1}{n^2}J_n$, where $J_n$ is a $n \times n$ all-one matrix. Assume that the global optimal of (RP) $P^*$ is relatively inner point. Then, $P^*$ is a stationary point, thus $P^* = \frac{1}{n^2}J_n$. However, $\frac{1}{n^2}J_n$ is clearly not in feasible set of $(\mathrm{RP})$, which leads to the contradiction. 
\end{proof}

\begin{lemma}\label{lem:p=rp}
    The optimal solution of (P) and that of (RP) is equal.
\end{lemma}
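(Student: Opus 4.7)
The plan is to leverage the two preceding lemmas to argue that any optimal solution of $(\mathrm{RP})$ must automatically satisfy the equality version of the Kendall's $\tau$ constraint, so it is feasible (and hence optimal) for $(\mathrm{P})$. First I would note the easy inclusion: the feasible region of $(\mathrm{P})$ is contained in that of $(\mathrm{RP})$, so the optimal value of $(\mathrm{RP})$ is at most the optimal value of $(\mathrm{P})$, and any feasible solution of $(\mathrm{P})$ that attains the value of $(\mathrm{RP})$ is simultaneously optimal for both.

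The key step is to take an arbitrary optimal solution $P^*$ of $(\mathrm{RP})$ and analyze where it can lie. The marginal equalities $\sum_i p_{ij} = \sum_j p_{ij} = \tfrac{1}{n}$ define the affine hull of the feasible set of $(\mathrm{RP})$, and the remaining defining inequalities are $p_{ij} \geq 0$ and $1 - \mathrm{tr}(\Xi P \Xi P^\top) \geq \mu$. The relative interior therefore consists exactly of those $P$ with all $p_{ij} > 0$ and $\tau(P) > \mu$. Lemma~\ref{lem:nonzero} rules out the first sort of boundary for $P^*$, since every entry of $P^*$ is strictly positive. Lemma~\ref{lem:inner} asserts that $P^*$ is not a relative interior point. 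The only possibility left is that the Kendall's $\tau$ constraint is active, i.e., $1 - \mathrm{tr}(\Xi P^* \Xi (P^*)^\top) = \mu$. Hence $P^*$ is feasible for $(\mathrm{P})$, and since $(\mathrm{P})$'s feasible set is a subset of $(\mathrm{RP})$'s, $P^*$ must also minimize the objective over $(\mathrm{P})$.

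Conversely, any optimal solution of $(\mathrm{P})$ is trivially feasible for $(\mathrm{RP})$ and, by the equality of optimal values established above, is also optimal for $(\mathrm{RP})$. This yields the desired equality of optimal solution sets. The argument contains no genuine calculation beyond what is already in Lemmas~\ref{lem:nonzero} and~\ref{lem:inner}; the only point that needs a moment of care is articulating precisely that the complement of the relative interior in the feasible set of $(\mathrm{RP})$ is the union of the zero-entry locus and the surface $\tau(P) = \mu$, so that excluding the former forces membership in the latter. This is the main (and only) conceptual step of the proof.
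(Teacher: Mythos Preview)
Your argument is correct and follows essentially the same approach as the paper: use Lemma~\ref{lem:inner} to show that an optimal solution of $(\mathrm{RP})$ lies on the relative boundary, then use Lemma~\ref{lem:nonzero} to exclude the zero-entry case, forcing $\tau(P^*)=\mu$. Your write-up is in fact a bit more careful than the paper's, since you also spell out the converse direction and the equality of optimal values.
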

\begin{proof}
    It suffices to show that the global optimal of (RP) belongs to the feasible region of (P), since (RP) is the relaxation of (P). 
    Let the global optimal point of (RP) be $P^* = (p_{ij})\ (i, j = 1, \dots, n$). In Lemma~\ref{lem:inner}, it is shown that $P^*$ is not a relatively inner point, meaning that $P^*$ must satisfy either or both of the following: (i) $p_{ij}^* = 0$ for some $(i,j)$, (ii) $1-\mathrm{Tr}(\Xi P^*\Xi(P^*)^\top) = \mu$.
    However, (i) is excluded due to Lemma~\ref{lem:nonzero}.
\end{proof}

These lemmas guarantee that MICK is TP2. 

\begin{proposition}
    MICK is TP2. 
\end{proposition}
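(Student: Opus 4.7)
The plan is to prove MICK is TP2 by contradiction, combining two facts already established in this subsection: MICK is also the optimal solution of the relaxed problem (RP) (Lemma~\ref{lem:p=rp}), and all of its entries are strictly positive (Lemma~\ref{lem:nonzero}). If TP2 failed at some rectangle of indices $(i,i';j,j')$, I would produce a nearby feasible point of (RP) with strictly smaller information, contradicting optimality.

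The perturbation I would use is the natural generalization of the adjacent-mass-transfer operator $T^{ij}$ to non-adjacent rows and columns: for $i<i'$ and $j<j'$, set
$$S^{ij,i'j'}=\mathbf{e}_i\mathbf{e}_j^\top+\mathbf{e}_{i'}\mathbf{e}_{j'}^\top-\mathbf{e}_i\mathbf{e}_{j'}^\top-\mathbf{e}_{i'}\mathbf{e}_j^\top.$$
This matrix preserves row and column sums, and a short index-counting check verifies the telescoping identity
$$S^{ij,i'j'}=\sum_{k=i}^{i'-1}\sum_{l=j}^{j'-1}T^{kl}.$$
Through this decomposition, the first-order expansions of $\tau$ and of the objective $I$ already obtained in Lemmas~\ref{lem:tau} and~\ref{lem:obj} immediately aggregate to
$$\tau(P+\epsilon S^{ij,i'j'})-\tau(P)=2\epsilon\sum_{k=i}^{i'-1}\sum_{l=j}^{j'-1}(p_{kl}+p_{k+1,l}+p_{k,l+1}+p_{k+1,l+1})+O(\epsilon^2),$$
$$I(P+\epsilon S^{ij,i'j'})-I(P)=\epsilon\log\frac{p_{ij}p_{i'j'}}{p_{ij'}p_{i'j}}+O(\epsilon^2),$$
the second expansion being the pairing of the gradient $\log p_{ab}+1$ against the support pattern of $S^{ij,i'j'}$.

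With these two expansions in hand, the contradiction is immediate. Suppose TP2 fails, so $p_{ij}p_{i'j'}<p_{ij'}p_{i'j}$ for some $i<i'$ and $j<j'$. For small $\epsilon>0$, the log factor above is negative, so $I$ strictly decreases at first order. Meanwhile, since all $p_{kl}$ appearing in the sum are positive, the first-order change in $\tau$ is strictly positive, so the inequality constraint $\tau\geq\mu$ of (RP) is preserved; and since MICK's entries are strictly positive by Lemma~\ref{lem:nonzero}, the perturbed matrix remains nonnegative for $\epsilon$ small enough. Thus $P+\epsilon S^{ij,i'j'}$ is feasible for (RP) with strictly smaller information, contradicting the optimality of MICK in (RP) via Lemma~\ref{lem:p=rp}.

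The main obstacle I anticipate is conceptual rather than computational: recognizing that the right perturbation for TP2 over \emph{arbitrary} (possibly non-adjacent) index pairs is the long-range operator $S^{ij,i'j'}$ rather than a single $T^{ij}$. The telescoping decomposition of $S^{ij,i'j'}$ into the adjacent operators $T^{kl}$ is the bookkeeping bridge that lets us reuse the variational machinery of Section~3 directly, sidestepping any separate log-supermodularity extension from adjacent to non-adjacent rectangles.
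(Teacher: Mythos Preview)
Your proof is correct and follows the same contradiction strategy as the paper: both use Lemma~\ref{lem:p=rp} to work in (RP), invoke Lemma~\ref{lem:nonzero} for strict positivity, and then perturb a putative TP2-violating rectangle to strictly increase $\tau$ while strictly decreasing $I$, contradicting optimality.

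The one difference is the scope of the perturbation. The paper applies only the adjacent operator $T^{ij}$, so it is implicitly invoking the standard fact that for a strictly positive matrix, nonnegativity of all \emph{adjacent} $2\times 2$ minors already implies full TP2; the contradiction is then run only on an adjacent rectangle. You instead introduce the long-range operator $S^{ij,i'j'}=\sum_{k=i}^{i'-1}\sum_{l=j}^{j'-1}T^{kl}$ and derive the first-order variations of $\tau$ and $I$ directly for an arbitrary rectangle, so your argument is self-contained and does not appeal to the adjacent-to-general reduction for log-supermodularity. The paper's version is a touch shorter; yours is a touch more explicit. Either way, the mechanism is identical.
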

\begin{proof}
Let a copula $P^* = (p^*_{ij})$ be a MICK, i.e., the optimal solution of (P). 
From Lemma~\ref{lem:nonzero}, all entries are strictly positive.
Moreover,  $P^*$ is also the optimal solution of (RP) from Lemma~\ref{lem:p=rp}. 
Here, we assume $P^*$ is not d-TP2 and show contradiction. For some $i,j \in \{1,\dots,n-1\}$, there exists a $2 \times 2$ submatrix $\begin{pmatrix}
    p^*_{i,j}&p^*_{i,j+1}\\
    p^*_{i+1,j}&p^*_{i+1,j+1}\\
\end{pmatrix}$ of $P^*$ such that its determinant is  strictly negative: $p^*_{i,j}p^*_{i+1,j+1}-p^*_{i+1,j}p^*_{i,j+1} < 0$. Then, since Lemma~\ref{lem:tau} and  
$\log{\frac{p^*_{i,j}p^*_{i+1,j+1}}{p^*_{i+1, j}p^*_{i, j+1}}} < 0$ hold, $P^* + \epsilon T^{ij}\ (\epsilon \ll 1)$ achieves smaller objective value while remaining inside the feasible region of (RP). This contradicts with the optimality of $P^*$.
\end{proof}

\begin{remark}
    The density function of MICS is TP2 as well. It is further known to be TP$n$ for any $n \in \mathbb{N}$, i.e., TP$\infty$ ~\cite{MEEU1997}.
\end{remark}

\begin{table}[h]
    \centering
    \caption{Comparison between MICK and MICS}
    {\renewcommand\arraystretch{2.5}
    \scalebox{0.8}{
    \begin{tabular}{c|c|c|}
        &MICK & MICS \\ \hline
        Objective function & Information & Information \\ \hline
        Constraint&Kendall's $\tau$&Spearman's $\rho$\\ \hline
        Variation by adding $\epsilon T^{ij} (\epsilon \ll 1)$&$2(p_{i,j} + p_{i+1,j} + p_{i,j+1} + p_{i+1,j+1})\epsilon+O(\epsilon^2)$ & $\frac{12}{n^2}\epsilon$\\ \hline
        Convexity of the problem&non-convex&convex\\ \hline
        The optimal solution $p_{i,j}$&A closed form is not known&$\propto \exp{\left(12\theta (\frac{i}{n}-\frac{1}{2n}-\frac{1}{2})(\frac{j}{n}-\frac{1}{2n}-\frac{1}{2})\right)}$\\ \hline
        Total positivity of density&TP2&TP$\infty$\\ \hline
        Constant value&\begin{tabular}{c}
        pseudo log odds ratio\\$\frac{1}{p_{ij}+p_{i+1,j+1}+p_{i+1,j}+p_{i,j+1}}\log{\frac{p_{ij}p_{i+1,j+1}}{p_{i+1,j}p_{i,j+1}}}$
        \end{tabular}&\begin{tabular}{c}
             log odds ratio\\ 
             $\log{\frac{p_{ij}p_{i+1,j+1}}{p_{i+1,j}p_{i,j+1}}} = \frac{12}{n^2}\theta$
        \end{tabular}\\ \hline
    \end{tabular}
    }
    }
    \label{tab:comparison}
\end{table}

\section{Application}

The majority of previous works on the minimum information copula were devoted to theoretical developments. On the other hand, the practical use of this specific copula has not been sufficiently studied. In this section, we demonstrate the application of MICK and MICS.

\subsection{Numerical calculation}
To utilize MICK for real data analysis, one needs to be able to compute it numerically. One naive approach is to use numerical solvers to directly obtain optimal solutions for $(\mathrm{P})$, or $(\mathrm{RP})$ since the solutions are identical as in Lemma~\ref{lem:p=rp}. However, computational limitations become problematic. Empirically, it is difficult to reach the optimal solution due to the iteration limit when the gridsize $n$ is larger than 10. 
Instead, we can take advantage of the results in Theorem \ref{thm:mick}.
We have already confirmed that pseudo log odds ratio is constant on MICK, thus the numerical solution can be obtained as the following Algorithm~\ref{alg1}:

\begin{algorithm}[H]
    \caption{Greedy calculation of MICK}
    \label{alg1}
    \begin{algorithmic}[1]
    \REQUIRE pseudo log odds ratio $r$
    \STATE $M \leftarrow \mathrm{an\ } n\times n$ uniform copula
    \WHILE{converge}
    \STATE Choose a $2 \times 2$ submatrix  $\begin{pmatrix}
p_{ij}&p_{i, j+1}\\
p_{i+1, j}&p_{i+1, j+1}\\
\end{pmatrix}$ of $M$.
    \STATE Solve $$\frac{1}{p_{ij}+p_{i+1,j}+p_{i,j+1}+p_{i+1,j+1}}\log{\frac{(p_{ij}+\delta)(p_{i+1,j+1}+\delta)}{(p_{i+1,j}-\delta)(p_{i,j+1}-\delta)}} = r$$
    \STATE Update $\begin{pmatrix}
p_{ij}&p_{i,j+1}\\
p_{i+1,j}&p_{i+1,j+1}\\
\end{pmatrix}\to 
\begin{pmatrix}
p_{ij}+\delta&p_{i,j+1}-\delta\\
p_{i+1,j}-\delta&p_{i+1,j+1}+\delta\\
\end{pmatrix}$
    \ENDWHILE
    \RETURN $M$
    \end{algorithmic}
\end{algorithm}
The procedure 4 is to just solve a quadratic equation w.r.t. $\delta$ and thus computationally efficient. 
Figure \ref{fig:greedy-mick} shows the numerically obtained $30 \times 30$ MICK under $\mu = 0.5$. Note that Algorithm~\ref{alg1} requires in advance the value of pseudo log odds ratio instead of the rank correlation. The explicit relationship between pseudo log odds ratio and Kendall's $\tau$ remains unknown, however, it is possible to numerically specify the appropriate value of pseudo log odds ratio by binary search due to its monotonicity with respect to Kendall's $\tau$. See Appendix C. for more details on this method.

\begin{figure}[hbtp]
 \centering
 \includegraphics[keepaspectratio, scale=0.8]
      {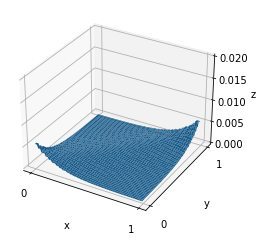}
 \caption{MICK, $n=30$, $\mu = 0.5$, pseudo log odds ratio = 2.9}
 \label{fig:greedy-mick}
\end{figure}

In addition, Algorithm~\ref{alg1} can also be applied to calculating MICS numerically when the procedure 4 is modified into 
$$\log{\frac{(p_{ij}+\delta)(p_{i+1,j+1}+\delta)}{(p_{i+1,j}-\delta)(p_{i,j+1}-\delta)}} = r',$$
where $r'$ denotes the value of log odds ratio. In practice, MICS becomes intractable numerically since the normalizing function is unknown. However, when we use our algorithm, this problem is overcome.


\subsection{Modelling dependence between two financial time series}

Aiming at the comparison between modelling using MICK and MICS, we demonstrate a plausible way to fit these copulas to real data in this subsection. Since these checkerboard copulas are equivalent to square matrices, they are numerically tractable.

Parameter estimation of these copulas can be done in an empirical manner using rank correlations, similar to the moment-based estimation. First, the data are preprocessed into order rank statistics. Then, the sample version of Kendall's $\tau$/Spearman's $\rho$ is calculated from observed data. Finally, we obtain MICK/MICS either by solving (P) or (RP) with numerical solvers, or by calculating the corresponding pseudo log odds ratio/log odds ratio (see Appendix C.) and then following Algorithm~\ref{alg1}.



We used log-return of daily stock price datasets, specifically the Dow Jones Average and S\&P500, which is displayed in Figure~\ref{fig:series}. All data were collected from Pandas-Datareader. The data length was 1635.
We observed four important statistics, the sample Kendall's tau $\tau$, the sample Spearman's rho $\rho$, empirically estimated lower tail dependence $\lambda_L^{u\%} = \mathrm{Pr}[F_2(X_2) < u/100\ |\ F_1(X_1) < u/100]\ (u=1,5)$, and empirically estimated upper tail dependence $\lambda_U^{u\%} = \mathrm{Pr}[F_2(X_2) > (100-u)/100\ |\ F_1(X_1) > (100-u)/100]\ (u=1,5)$, summarized as 
$$(\tau^{\mathrm{obs}}, \rho^{\mathrm{obs}}, \lambda_L^{5\%,{\mathrm{obs}}}, \lambda_U^{5\%,{\mathrm{obs}}},  \lambda_L^{1\%,{\mathrm{obs}}}, \lambda_U^{1\%,{\mathrm{obs}}}) = (0.802, 0.939, 0.827, 0.753, 0.812, 0.937).$$

\begin{figure}[ht]
 \centering
 \includegraphics[keepaspectratio, scale=0.25]
      {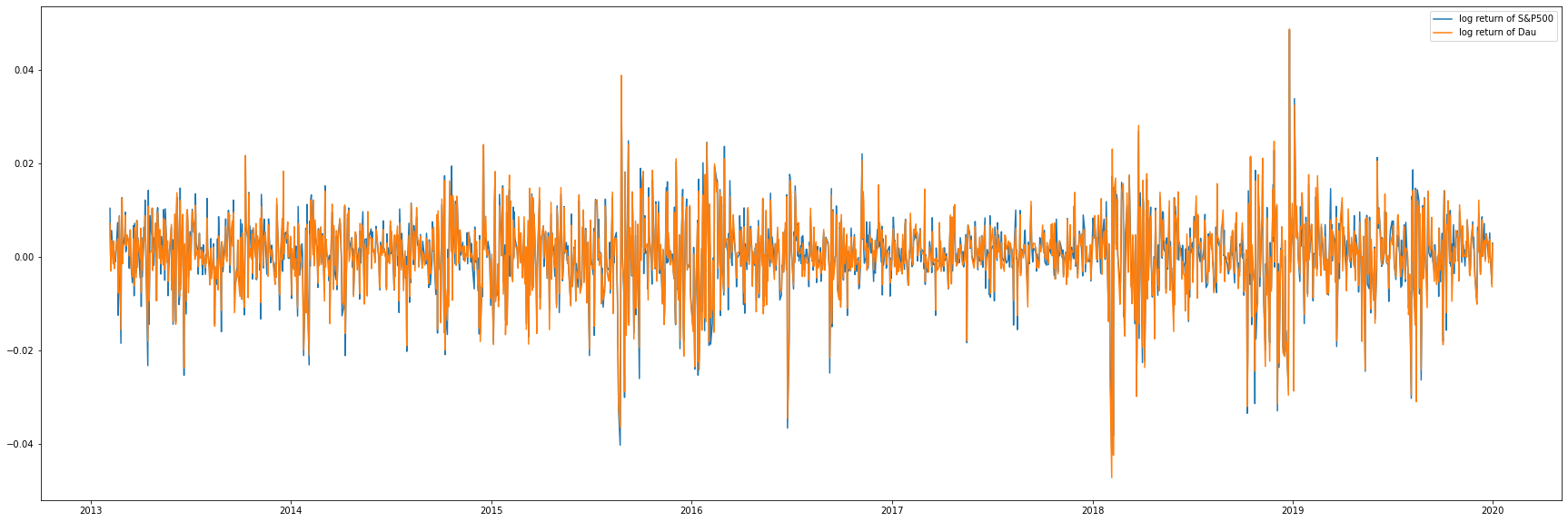}
 \caption{log return of Dow Jones Average and S\&P 500}
 \label{fig:series}
\end{figure}


\begin{figure}[ht]
 \centering
 \includegraphics[keepaspectratio, scale=0.5]
      {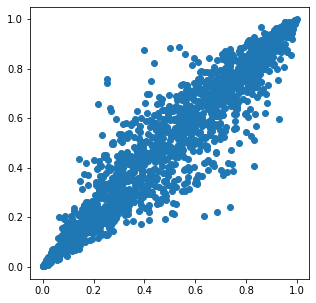}
 \caption{log return of Dow Jones Average vs S\&P 500 (preprocessed)}
 \label{fig:scatter-pre}
\end{figure}

As a result, the outputs from Algorithm~\ref{alg1} are depicted in Figure~\ref{fig:mick-sim} and Figure~\ref{fig:mics-sim} as scatter plots. The statistics of 150 samples randomly sampled from these checkerboard copulas are summarized in Table~\ref{tab:MICsimulation}.
Table~\ref{tab:MICsimulation} reveals that the simulation data generated with MICK exhibits a greater tail dependence coefficient than when using MICS. This suggests a more realistic capture of tail dependence in comparison. Nevertheless, it falls short of adequately representing the actual tail dependence in real-world observed data, which surpasses these values significantly. 
Several studies mention the advantages of using Kendall's tau in financial time series analysis~\cite{dehling_vogel_wendler_wied_2017}~\cite{huang2018}. However, our results suggest from an information-theoretic perspective that merely knowing the true value of Kendall's tau is not sufficient to explain the tail dependence observed in financial time series.
Refining the exact modeling of tail dependence within the framework of the minimum information copula (or entropy maximization) is a direction for future research.




\begin{figure}[ht]
  \begin{minipage}[b]{0.48\columnwidth}
    \centering
     \includegraphics[keepaspectratio, scale=0.5]
          {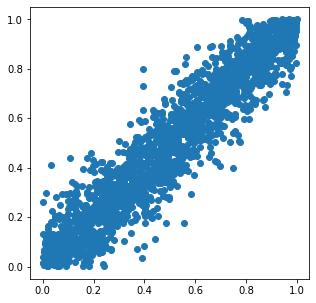}
     \caption{Simulation scattering of MICK ($\tau = 0.802$)}
     \label{fig:mick-sim}
  \end{minipage}
  \hspace{0.04\columnwidth} 
  \begin{minipage}[b]{0.48\columnwidth}
    \centering
     \includegraphics[keepaspectratio, scale=0.5]
          {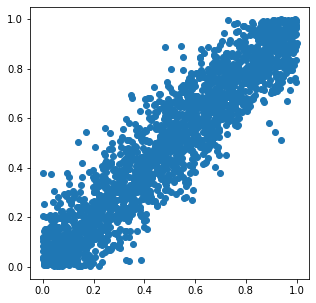}
     \caption{Simulation scattering of MICS ($\rho = 0.909$)}
     \label{fig:mics-sim}
  \end{minipage}
\end{figure}

\begin{table}[ht]
\caption{The statistics of observed and simulated data}
\label{tab:MICsimulation}
\centering
{\renewcommand\arraystretch{1.5}
\begin{tabular}{|c|ccc|} \hline
    &Observed&Simulated MICK&Simulated MICS\\ \hline
    $\tau$&0.802&0.801&0.774\\ \hline
    $\rho$&0.939& 0.949&0.939\\ \hline
    $\lambda_L^{5\%}$&0.827&0.467&0.367\\ \hline
    $\lambda_U^{5\%}$&0.753&0.471&0.369\\ \hline
    $\lambda_L^{1\%}$&0.812&0.112&0.086\\ \hline
    $\lambda_U^{1\%}$&0.937&0.115&0.086\\ \hline
\end{tabular}
}
\end{table}

\newpage

\section{Discussion}

\subsection{Relationship with Optimal Transport}
Finally, we point out the relationship between discrete version of minimum information copula and entropic optimal transport problem additionally.

In some articles, the equivalency between the minimum information copula and the optimal transport problem is mentioned~\cite{sei2021}. More specifically, the optimization problem for a minimum information copula is known to be equivalent to the optimal transport problem with entropy regularization. In fact, both Lagrangian coincides:

$$\mathcal{L}(P,\alpha,\beta) = \sum_{i=1}^n \sum_{j=1}^n (D_{ij}p_{ij}) + \frac{1}{\lambda} \sum_{i=1}^n \sum_{j=1}^n (p_{ij}\log{p_{ij}}-p_{ij})
+ \sum_{i=1}^n \alpha_i([P 1_n]_i - \frac{1}{n}) + \sum_{j=1}^n \beta_j([P^\top 1_n]_j - \frac{1}{n}),$$
where $\alpha \in \mathbb{R}^n$ and $\beta \in \mathbb{R}^n$ are the Lagrangian multipliers and $D_{ij}$ represents the cost of the transport from $i$ to $j$. MICS is also included in this case.

Analogously, the optimization problem (P) for MICK can also be interpreted as a new variant of optimal transport with entropy regularization controlled by parameter $\frac{1}{\lambda}$:
\begin{align*}
\min\ &1-\sum_{i=1}^n \sum_{j=1}^n \sum_{p=1}^n \sum_{q=1}^n p_{ij} p_{pq} \xi_{ip} \xi_{qj} + \frac{1}{\lambda} \sum_{i=1}^n \sum_{j=1}^n (p_{ij}\log{p_{ij}}-p_{ij})\\
&= \sum_{i=1}^n \sum_{j=1}^n \sum_{p=1}^n \sum_{q=1}^n p_{ij} p_{pq} v_{ijpq}
+ \frac{1}{\lambda} \sum_{i=1}^n \sum_{j=1}^n (p_{ij}\log{p_{ij}}-p_{ij}),
\end{align*}
$$\mathrm{s.t.}\ \sum_{i=1}^n p_{ij} = \frac{1}{n}, \sum_{j=1}^n p_{ij} = \frac{1}{n},$$
\noindent with its Lagrangian being
\begin{align*}
\mathcal{L} &= 1-\sum_{i=1}^n \sum_{j=1}^n \sum_{p=1}^n \sum_{q=1}^n p_{ij}p_{pq}\xi_{ip}\xi_{qj} + \frac{1}{\lambda} \sum_{i=1}^n \sum_{j=1}^n (p_{ij}\log{p_{ij}}-p_{ij}) + \sum_{i=1}^n \alpha_i([P 1_n]_i - \frac{1}{n}) + \sum_{j=1}^n \beta_j([P^\top 1_n]_j - \frac{1}{n}).
\end{align*}
\noindent As well as usual optimal transport problems, this problem can be interpreted as a bipartite graph matching: Let $G = (U \cup V, E)$ be a bipartite graph with $n^2$ transport edges, where $|U| = |V| = n$. Pick any couple of two edges. The cost $v_{ijpq} = -(\xi_{ip}-1)(\xi_{qj}-1)$ becomes $+1$ when the two edges $i\to j$ and $p \to q$ intersect, $0$ if either start or goal of two edges coincides, and $-1$ otherwise. The amount of mass is calculated as the multiplication of two transported masses. Here, we want to minimize the summation of each mass multiplied by each associated cost.

\subsection{Further extensions}
There has been an increasing interest in copula entropy theories as a solution to misspecification issue~\cite{bedford2014construction}. The concept of maximizing entropy corresponds to minimum information copulas since the famous Shannon entropy and information are closely linked. In this paper, we formulated a minimum information checkerboard copula under fixed Kendall's rank correlation and named it MICK. 

MICK is originally defined as the optimal solution of a non-convex programming. We confirmed that it can also be characterised in another way by the variant of odds ratio (pseudo log odds ratio). This result implies the global dependence is determined by the series of the local dependence. Taking advantage of this property, we constructed a quick algorithm to obtain MICK numerically even with a large grid size, enabling us to apply it to real data analysis. 

Since the concept of MICK is simplified, the extension of MICK is straightforward. 
First, the constraints to specify the copula density could be more general. In recent works, the expectation of a certain function, such as moments of the distribution, was only assumed as the constraints. We extended this to a second-order constraint in this work. Specifically, we focused on a single constraint that fixes Kendall's $\tau$ and also made a comparison with another important notion, Spearman's $\rho$. Recently, statistical notions that measure the dependence between multiple random variables in various domains, such as distance correlation. Extending the results to these is a future work of interest.

Furthermore, applying MICK to higher dimensions is of great interest. We developed MICK only for bivariate checkerboard copula in this paper. However, the results can be applied to three or more variables since the notion of information and Kendall's $\tau$ are not specific to bivariate cases. In three dimensional case for example, the given constraint should be fixing Kendall's $\tau$ for the trivariate checkerboard copula $P$: $p^\top (V \otimes V \otimes V) p$.

Last but not least, other than Shannon entropy considered in this paper, Tsallis entropy and Renyi entropy should also be discussed. It is possible that Tsallis entropy associated with power-law distribution naturally leads to copulas with heavier tail dependence, which is preferable for risk analysis in finance. 

\section*{Acknowledgements}
T. Sei is supported by JSPS KAKENHI Grant Numbers JP19K11865 and JP21K11781 and JST CREST Grant Number JPMJCR1763. 


\section*{Declaration of generative AI and AI-assisted technologies in the writing process}

During the preparation of this work the author(s) used ChatGPT (\url{https://chat.openai.com/})in order to proofread English. After using this tool/service, the author(s) reviewed and edited the content as needed and take(s) full responsibility for the content of the publication.

\bibliographystyle{plain}
\bibliography{retry}

\newpage

\appendix

\section{Proof for Theorem \ref{thm:uniqueness}.} \label{appendix:uniqueness}

\begin{proof}
To provide a sufficient condition for the uniqueness of MICK, 
we employ Gabrielsen~\cite{gabrielsen1986}, which states that it is sufficient to investigate the objective function at the stationary points under fairly weak assumptions.

Beforehand, we start from showing the monotonicity of Kendall's $\tau$ with respect to the coefficients of $T^{ij}$s.

\begin{lemma}\label{lem:anti-comonotone}
    Let $Q$ denote the $n\times n$ anti-comonotone checkerboard copula:
    $$Q = \begin{pmatrix}
        0&\cdots&0&\frac{1}{n}\\
        \vdots&\iddots&\iddots&0\\
        0&\iddots&\iddots&\vdots\\
        \frac{1}{n}&0&\cdots&0
    \end{pmatrix}.$$
    Then, for any checkerboard copula $P$, there exists non-negative $a_{ij}$s such that
    $$P = Q + \sum_{i=1}^{n-1} \sum_{j=1}^{n-1} a_{ij} T^{ij}.$$
\end{lemma}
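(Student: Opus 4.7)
My plan is to compute the coefficients $a_{ij}$ explicitly via the change of basis of Section~3.2, identify them as pointwise differences of the bivariate discrete CDFs of $P$ and $Q$, and then conclude from the Fr\'echet--Hoeffding lower bound.

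First I would observe that each basis element factorises as $T^{ij} = (\mathbf{e}_i - \mathbf{e}_{i+1})(\mathbf{e}_j - \mathbf{e}_{j+1})^\top$, and $\mathbf{e}_i - \mathbf{e}_{i+1}$ is the $i$th column of $A$, so the desired expansion $P - Q = \sum_{i,j} a_{ij} T^{ij}$ is equivalent to the matrix identity $P - Q = A\,[a_{ij}]\,A^\top$. Using $A^\dagger A = E_{n-1}$, applying $A^\dagger$ from the left and $(A^\dagger)^\top$ from the right isolates $[a_{ij}]$. A direct computation shows that $A^\dagger$ reduces to the truncated cumulative-sum operator on any vector whose entries sum to zero, and both $P - U$ and $Q - U$ have zero row sums and zero column sums by the copula marginal constraints; applying this twice therefore yields
$$a_{ij} \;=\; F_{ij}(P) - F_{ij}(Q), \qquad F_{ij}(P) := \sum_{k \leq i,\, l \leq j} p_{kl},$$
so the coefficients are exactly the pointwise differences of the grid-evaluated bivariate distribution functions.

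For $Q$ a direct count gives $F_{ij}(Q) = \max(0,\, i+j-n)/n$, the discretisation of the Fr\'echet--Hoeffding lower bound $W(u,v) = \max(0,\, u+v-1)$ at $(u,v) = (i/n,\, j/n)$. For any checkerboard copula $P$, the marginal constraints combined with inclusion--exclusion give $F_{ij}(P) \geq 0$ and $F_{ij}(P) \geq (i+j)/n - 1$, so $F_{ij}(P) \geq F_{ij}(Q)$ and hence $a_{ij} \geq 0$. The only mildly delicate point in the plan is the bookkeeping around the two-sided action of $A^\dagger$ and the identification of its composition with double cumulative summation; once that is secured, the Fr\'echet--Hoeffding bound closes the argument in one line.
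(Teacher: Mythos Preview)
Your argument is correct, and it is genuinely different from the paper's proof. The paper proceeds by an iterative elimination: starting from $P-Q$, it subtracts explicit non-negative combinations of the extended blocks $\tilde{T}^{ijkl}=\sum_{i'=i}^{k-1}\sum_{j'=j}^{l-1}T^{i'j'}$ so as to zero out the first column and last row, observes that the remaining $(n-1)\times(n-1)$ upper-right submatrix retains the same sign pattern as $P-Q$, and repeats. This is constructive but never produces a closed form for the $a_{ij}$.

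Your route is more direct and more informative. Factoring $T^{ij}$ through the columns of $A$ and inverting with $A^\dagger$ yields the explicit formula $a_{ij}=F_{ij}(P)-F_{ij}(Q)$, which immediately identifies the coefficients with the gap between the discrete joint CDF of $P$ and the Fr\'echet--Hoeffding lower bound; non-negativity is then a one-line consequence of the copula marginal constraints. This not only shortens the proof but makes the role of $Q$ as the extreme lower-bound copula transparent, and it gives a formula that could be reused elsewhere (e.g.\ the analogous statement for the comonotone copula $R$ follows by the same computation with the upper Fr\'echet--Hoeffding bound). The only point worth making explicit is that $A[a_{ij}]A^\top$ really recovers $P-Q$: since $AA^\dagger=E_n-\tfrac{1}{n}J_n$ and $P-Q$ has zero row and column sums, $(AA^\dagger)(P-Q)(AA^\dagger)^\top=P-Q$, so the candidate coefficients do solve the system.
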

\begin{proof}
    Define $\tilde{T}^{ijkl}$ as 
    $\tilde{T}^{ijkl} = \mathbf{e}_i \mathbf{e}_j^\top + \mathbf{e}_{k}\mathbf{e}_{l}^\top - \mathbf{e}_i\mathbf{e}_{l}^\top  - \mathbf{e}_{k}\mathbf{e}_{j}^\top \ (1 \leq i < k \leq n, 1\leq j < l \leq n).$ Then, it holds that
    $$\tilde{T}^{ijkl} = \sum_{i' = i}^{k-1} \sum_{j' = j}^{l-1} T^{i'j'}.$$
    Note that row sums and columns sums of $P-Q$ are all fixed to zero, and adding non-negative combination of various $-T^{ij}$s or even $-\tilde{T}^{ijkl}$s does not change any of these sums.
    Moreover, the anti-diagonal entries of $P-Q$ are all non-positive, and non-negative elsewhere:
    $$P-Q = \begin{pmatrix}
        +&\cdots&+&-\\
        \vdots&\iddots&\iddots&+\\
        +&-&\iddots&\vdots\\
        -&+&\cdots&+\\
    \end{pmatrix},$$
    where $+$ denotes a non-negative entry and $-$ denotes a non-positive entry. 
    Here we show that adding various $-T^{ij}$ to $P-Q$ repetitively leads to a zero matrix, which is equivalent to the statement.

    Let $\pi_{ij}$ denote the $(i,j)$-th entry of the matrix $P-Q$, where $1 \leq i,j \leq n$. First, we add $\sum_{i=1}^{n-1} \pi_{i1}(-\tilde{T}^{i,1,n,2}) + \sum_{j=3}^{n-1} \pi_{n,j}(-\tilde{T}^{n-1,2,n,j})$ to $P-Q$, aiming at eliminating positive entries in the first column and in the last row. The first term is intended to eliminate $\pi_{i1}\ (i=1,\cdots,n-1)$ and the second term to eliminate $\pi_{n,j}\ (j=3,\cdots,n)$. Note that $\pi_{n,1}$ becomes zero since the column sum of the first row is zero, and then  $\pi_{n,2}$ also becomes zero since the row sum of the last row is zero. Moreover, $\pi_{n-1,2}$ results in non-positive because $\pi_{i,2}\ (i=1,\dots,n-2)$ always stay non-negative and the column sum of the second row is always zero. Hence, the original matrix results in 
    $$P-Q+\sum_{i=1}^{n-1} \pi_{i1}(-\tilde{T}^{i,1,n,2}) + \sum_{j=3}^{n-1} \pi_{n,j}(-\tilde{T}^{n-1,2,n,j}) = \begin{pmatrix}
        0&+&\cdots&+&-\\
        \vdots&\vdots&\iddots&\iddots&+\\
        0&+&-&\iddots&\vdots\\
        \vdots&-&+&\cdots&+\\
        0&0&\cdots&\cdots&0\\
    \end{pmatrix},$$
    We observe that the $(n-1)\times(n-1)$ submatrix on upper right has identical signs with $P-Q$. This indicates that the repetition of this operation leads to zero matrix.
\end{proof}

From Lemma~\ref{lem:tau} and Lemma~\ref{lem:anti-comonotone}, the following lemma immediately follows.
\begin{lemma}\label{lem:monotone-tau-1}
    For any checkerboard copula $P (\neq Q)$, $\tau(\alpha P + (1-\alpha) Q)$, Kendall's $\tau$ of the convex combination of $Q$ and $P$, is strictly increasing with respect to $\alpha$, where $\alpha \in [0,1]$.
\end{lemma}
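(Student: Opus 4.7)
The plan is to combine the decomposition from Lemma~\ref{lem:anti-comonotone} with the first-order variation formula from Lemma~\ref{lem:tau}. By Lemma~\ref{lem:anti-comonotone} I write $P = Q + \sum_{i,j=1}^{n-1} a_{ij} T^{ij}$ with all $a_{ij}\geq 0$, so
$$
M(\alpha) := (1-\alpha)Q + \alpha P \;=\; Q + \alpha \sum_{i,j=1}^{n-1} a_{ij} T^{ij}
$$
is affine in $\alpha$. Since $P \neq Q$, at least one coefficient $a_{ij}$ is strictly positive.

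Next, applying Lemma~\ref{lem:tau} in each direction $T^{ij}$ together with the bilinearity of the quadratic form $\mathbf{p}^{\top} W \mathbf{p}$ gives
$$
\frac{d}{d\alpha}\tau(M(\alpha)) \;=\; 2\sum_{i,j=1}^{n-1} a_{ij}\,\eta_{ij}(M(\alpha)),
\qquad \eta_{ij}(M) := M_{ij} + M_{i,j+1} + M_{i+1,j} + M_{i+1,j+1}.
$$
Because $M(\alpha)$ is itself a checkerboard copula for $\alpha \in [0,1]$, every $\eta_{ij}(M(\alpha))$ is a sum of non-negative entries, hence non-negative. Combined with $a_{ij}\geq 0$, the derivative is non-negative on $[0,1]$, so $\tau(M(\alpha))$ is non-decreasing.

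The crucial step is to exhibit one index at which the derivative is strictly positive. I choose $(i^{*},j^{*})$ among indices with $a_{i^{*}j^{*}}>0$ so as to minimize $i^{*}+j^{*}$. By this minimality, together with the convention $a_{ij}=0$ whenever $i$ or $j$ lies outside $\{1,\dots,n-1\}$, all three of $a_{i^{*}-1,j^{*}-1}$, $a_{i^{*}-1,j^{*}}$, $a_{i^{*},j^{*}-1}$ vanish. Reading off the $(i^{*},j^{*})$ entry from the decomposition,
$$
p_{i^{*}j^{*}} \;=\; q_{i^{*}j^{*}} + a_{i^{*}j^{*}} + a_{i^{*}-1,j^{*}-1} - a_{i^{*}-1,j^{*}} - a_{i^{*},j^{*}-1} \;\geq\; a_{i^{*}j^{*}} \;>\; 0,
$$
so $\eta_{i^{*}j^{*}}(P) \geq p_{i^{*}j^{*}} > 0$ and therefore $\left.\tfrac{d}{d\alpha}\tau(M(\alpha))\right|_{\alpha=1} > 0$.

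Finally, since $\tau$ is a quadratic form in the vectorization of the copula (Lemma~\ref{lem:lexical}), $\tau(M(\alpha))$ is a quadratic polynomial in $\alpha$ and the derivative displayed above is linear in $\alpha$. A linear function that is non-negative on $[0,1]$ and strictly positive at $\alpha=1$ vanishes at most at one point, so its integral over any non-degenerate sub-interval $[\alpha_1,\alpha_2] \subset [0,1]$ is strictly positive, yielding $\tau(M(\alpha_2)) > \tau(M(\alpha_1))$. The main obstacle is the existence argument in the third paragraph: identifying a direction $T^{i^{*}j^{*}}$ whose first-order contribution to $\frac{d\tau}{d\alpha}$ survives the cancellations coming from neighbouring $T^{i'j'}$'s. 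The minimality trick resolves this by forcing the three potentially cancelling neighbours to vanish, so that the positive mass $a_{i^{*}j^{*}}$ is transmitted undiminished to the entry $p_{i^{*}j^{*}}$.
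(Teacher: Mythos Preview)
Your proof is correct and follows the same strategy as the paper: combine Lemma~\ref{lem:anti-comonotone} with the first-order variation from Lemma~\ref{lem:tau} to obtain $\frac{d}{d\alpha}\tau(M(\alpha))=2\sum_{i,j} a_{ij}\,\eta_{ij}(M(\alpha))\geq 0$, with strict positivity at $\alpha=1$. Your minimality argument for $(i^{*},j^{*})$ makes the strict inequality explicit (the paper simply asserts the final ``$>0$''), and you pass from positivity at $\alpha=1$ to strict monotonicity via the affine-in-$\alpha$ structure of the derivative, whereas the paper does so by replacing $P$ with an arbitrary intermediate point $M(\alpha_0)$; both routes are valid.
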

\begin{proof}
It suffices to show $\left. \frac{d}{d\alpha} \tau(\alpha P + (1-\alpha) Q)\ \right|_{\alpha = 1} \geq 0$, since $P$ can be chosen arbitrarily.
From Lemma~\ref{lem:anti-comonotone}, 
$$p-q = \sum_{i=1}^{n-1}\sum_{j=1}^{n-1} a_{ij}\mathbf{t}^{ij}, a_{ij}\geq 0$$
for any checkerboard copula $P$, where $p = \mathrm{vec}{(P)}$ and $q = \mathrm{vec}{(Q)}$.
Using Lemma~\ref{lem:tau}, the derivative of Kendall's $\tau$ with respect to $\alpha$ is 
\begin{align*}
    \left. \frac{d}{d\alpha} \tau(\alpha P + (1-\alpha) Q)\ \right|_{\alpha = 1} &= \left. 2\{\alpha(-p^\top W p + p^\top W q) + (1-\alpha)(q^\top W q - p^\top W q)\}\ \right|_{\alpha = 1}\\
    &= p^\top W (p-q)\\
    &= \sum_{i=1}^{n-1}\sum_{j=1}^{n-1} -2a_{ij} p^\top W\mathbf{t}^{ij}\\
    &> 0
\end{align*}
\end{proof}
\noindent From symmetry, the following lemma is also true.
\begin{lemma}\label{lem:monotone-tau-2}
    Let $R$ denote the comonotone checkerboard copula. 
    For any checkerboard copula $P (\neq R)$, $\tau(\alpha R + (1-\alpha) P)$, Kendall's $\tau$ of the convex combination of $P$ and $R$, is strictly increasing with respect to $\alpha$, where $\alpha \in [0,1]$.
\end{lemma}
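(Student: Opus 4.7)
The plan is to mirror the proof of Lemma~\ref{lem:monotone-tau-1} with the comonotone copula $R$ replacing the anti-comonotone $Q$. The pivotal ingredient will be an analog of Lemma~\ref{lem:anti-comonotone}: for every checkerboard copula $P$, non-negative scalars $b_{ij} \ge 0$ exist such that $R - P = \sum_{i,j=1}^{n-1} b_{ij}\,T^{ij}$, with at least one $b_{ij}>0$ whenever $P \neq R$.

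I would obtain this decomposition by a column-reversal symmetry. Let $J$ denote the $n\times n$ permutation matrix reversing column order, so $(MJ)_{i,k} = M_{i,n+1-k}$. Two identities can be verified by direct inspection: $QJ = R$ (reversing the antidiagonal produces the diagonal), and $T^{ij}J = -T^{i,n-j}$ for all $1\le i,j \le n-1$ (the $2\times 2$ sign pattern $\begin{pmatrix}1&-1\\-1&1\end{pmatrix}$ of $T^{ij}$ is negated by swapping its two columns). Since $P \mapsto PJ$ is an involution on $n \times n$ checkerboard copulas, applying Lemma~\ref{lem:anti-comonotone} to $PJ$ yields non-negative $\tilde a_{ij}$ with $PJ - Q = \sum \tilde a_{ij} T^{ij}$. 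Right-multiplying by $J$, substituting the two identities, and relabelling $b_{i,j} := \tilde a_{i,n-j}$ delivers $R - P = \sum b_{ij} T^{ij}$ with $b_{ij}\ge 0$; non-triviality when $P\neq R$ is immediate. The main technical care here is bookkeeping of indices, but $n-j$ automatically stays in $\{1,\dots,n-1\}$.

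With the decomposition in hand, the remainder is a verbatim transcription of the calculation in Lemma~\ref{lem:monotone-tau-1}. Writing $r=\mathrm{vec}(R)$, $p=\mathrm{vec}(P)$, and $f(\alpha) := \tau(\alpha R + (1-\alpha)P) = 1 - (\alpha r + (1-\alpha)p)^\top W (\alpha r + (1-\alpha)p)$, differentiation combined with the identity $-(\mathbf{t}^{ij})^\top W\mathbf{m}= m_{ij}+m_{i,j+1}+m_{i+1,j}+m_{i+1,j+1}$ extracted from the proof of Lemma~\ref{lem:tau} yields
\begin{equation*}
f'(\alpha) \;=\; -2(r-p)^\top W\bigl(\alpha r + (1-\alpha)p\bigr) \;=\; 2\sum_{i,j=1}^{n-1} b_{ij}\bigl[(P_\alpha)_{ij}+(P_\alpha)_{i,j+1}+(P_\alpha)_{i+1,j}+(P_\alpha)_{i+1,j+1}\bigr],
\end{equation*}
where $P_\alpha = \alpha R + (1-\alpha)P$ is itself a checkerboard copula. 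Every summand is non-negative, so $f'(\alpha)\ge 0$ throughout $[0,1]$. Evaluating at $\alpha = 0$ is the mirror image of the author's evaluation at $\alpha = 1$ in Lemma~\ref{lem:monotone-tau-1}; by the same reasoning (interior values of $\alpha$ are handled by replacing $P$ with the convex combination $P_{\alpha_0}$ and reapplying the argument), strict monotonicity propagates across the entire segment.

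The only genuinely new content is the pair of symmetry identities $QJ = R$ and $T^{ij}J = -T^{i,n-j}$; once these are in place, both the Lemma~\ref{lem:anti-comonotone}-style decomposition and the Lemma~\ref{lem:monotone-tau-1}-style derivative analysis translate mechanically. I anticipate no genuine obstacle beyond the careful verification of those two identities and the index relabelling.
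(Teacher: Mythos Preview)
Your proposal is correct and is precisely the symmetry argument the paper invokes in its one-line proof (``From symmetry, the following lemma is also true''). You have made that symmetry explicit via the column-reversal involution $P\mapsto PJ$, verified the two needed identities $QJ=R$ and $T^{ij}J=-T^{i,n-j}$, and then transported both Lemma~\ref{lem:anti-comonotone} and the derivative computation of Lemma~\ref{lem:monotone-tau-1} across; this is exactly the intended route.
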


Due to the monotonicity of $\tau$, we can find a star set that corresponds to the set of checkerboard copulas with the same value of Kendall's $\tau$.
\begin{lemma}\label{lem:connectivity}
    Let $Q$ denote the anti-comonotone checkerboard copula and $R$ denote the comonotone checkerboard copula. Let $\overline{QR}$ denote a line $\beta R + (1-\beta)Q$, where $\beta \in [0,1]$.
    For any checkerboard copula $P$, a curve through $P$, keeping the value of Kendall's $\tau$ the same, is connected to $\overline{QR}$.
\end{lemma}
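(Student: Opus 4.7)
The plan is to construct an explicit continuous path inside the level set $\{P' : \tau(P') = \mu\}$, where $\mu := \tau(P)$, that starts at $P$ and terminates on $\overline{QR}$. I would introduce the two--parameter family
\[
\phi(\alpha,\beta) \;:=\; (1-\alpha)\,P \;+\; \alpha\bigl((1-\beta)Q + \beta R\bigr),\qquad (\alpha,\beta) \in [0,1]^{2},
\]
which is a checkerboard copula as a convex combination of checkerboard copulas, and which satisfies $\phi(0,\beta) = P$ and $\phi(1,\beta) \in \overline{QR}$. For fixed $\alpha > 0$, set $g_\alpha(\beta) := \tau(\phi(\alpha,\beta))$. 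Lemma~\ref{lem:monotone-tau-1} applied to the segment from $Q$ to $P$ gives $g_\alpha(0) = \tau((1-\alpha)P + \alpha Q) < \mu$, and by the symmetric Lemma~\ref{lem:monotone-tau-2} one has $g_\alpha(1) = \tau((1-\alpha)P + \alpha R) > \mu$. Continuity of $g_\alpha$ in $\beta$ and the intermediate value theorem then produce some $\beta(\alpha)\in(0,1)$ with $g_\alpha(\beta(\alpha)) = \mu$.

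To upgrade this pointwise existence into a continuous selection $\alpha \mapsto \beta(\alpha)$, I would establish that $g_\alpha$ is strictly increasing. Differentiating the quadratic form $\tau(\phi) = 1 - \mathrm{vec}(\phi)^\top W\,\mathrm{vec}(\phi)$ and decomposing $R - Q = \sum_{i,j} a_{ij} T^{ij}$ with $a_{ij} \geq 0$ via Lemma~\ref{lem:anti-comonotone}, together with the block--sum identity $(\mathbf{t}^{ij})^\top W \mathbf{p} = -(p_{ij} + p_{i+1,j} + p_{i,j+1} + p_{i+1,j+1})$ extracted from the proof of Lemma~\ref{lem:tau}, yield
\[
\frac{dg_\alpha}{d\beta}(\beta) \;=\; 2\alpha \sum_{i,j} a_{ij}\bigl(\phi_{ij} + \phi_{i+1,j} + \phi_{i,j+1} + \phi_{i+1,j+1}\bigr) \;\geq\; 0,
\]
where $\phi = \phi(\alpha,\beta)$ on the right--hand side. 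This derivative is affine in $\beta$ and cannot be identically zero, because that would force $g_\alpha \equiv \mu$, contradicting $g_\alpha(0) < \mu$ obtained above. Hence $g_\alpha$ is strictly increasing on $[0,1]$, $\beta(\alpha)$ is unique and continuous on $(0,1]$, and the path $\gamma(\alpha) := \phi(\alpha, \beta(\alpha))$ extends continuously to $\alpha=0$ by $\gamma(0) = P$ (because $\phi(\alpha,\cdot) \to P$ as $\alpha\to 0$). This $\gamma$ has the required properties: it passes through $P$, terminates at $(1-\beta(1))Q + \beta(1)R \in \overline{QR}$, and has $\tau \equiv \mu$ along its length.

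The main technical obstacle is the strict monotonicity step: when $P$ lies on the boundary of the copula polytope, both some block sums $\phi_{ij}+\phi_{i+1,j}+\phi_{i,j+1}+\phi_{i+1,j+1}$ and the complementary indices at which $a_{ij}>0$ in the decomposition of $R-Q$ could conceivably vanish together, so a perturbation of $P$ slightly into the interior of the copula polytope (along a tangent direction to the level set), followed by a limit argument, may be needed to rule out the degenerate case. Once this is handled, the rest of the construction is a routine application of the implicit function theorem on the level set.
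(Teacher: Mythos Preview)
Your construction is correct and close in spirit to the paper's, but the two proofs slice the triangle $\operatorname{conv}(P,Q,R)$ differently. The paper fixes $\gamma\in[0,1]$, draws the segment from $Q$ to $\gamma P+(1-\gamma)R$, and invokes Lemma~\ref{lem:monotone-tau-1} directly to get a unique point with $\tau=\mu$ on that segment; varying $\gamma$ traces the curve. You instead fix $\alpha\in(0,1]$ and move along the segment $(1-\alpha)P+\alpha\,\overline{QR}$, which forces you to reprove strict monotonicity in $\beta$ from scratch. Both yield a continuous path from $P$ to the same terminal point on $\overline{QR}$; the paper's slicing is marginally cleaner because the needed monotonicity lemma is already on the shelf.

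Two small points. First, your phrase ``that would force $g_\alpha\equiv\mu$'' is not quite what follows: a vanishing derivative forces $g_\alpha$ to be \emph{constant}, and the contradiction is with the already-established strict inequalities $g_\alpha(0)<\mu<g_\alpha(1)$. Second, the ``main technical obstacle'' paragraph is unnecessary. Your own argument already closes the gap: since $g_\alpha(0)<g_\alpha(1)$, the derivative $\tfrac{dg_\alpha}{d\beta}$ is not identically zero, and a nonnegative affine function that is not identically zero vanishes at most at one point, so $g_\alpha$ is strictly increasing with no perturbation needed. (Alternatively, for the specific decomposition of $R-Q$ every coefficient $a_{ij}$ is strictly positive, and for any checkerboard copula $\phi$ at least one $2\times2$ block sum is positive since all entries of $\phi$ sum to $1$; either observation makes the derivative strictly positive directly.)
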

\begin{proof}
    From Lemma~\ref{lem:monotone-tau-2}, $\tau(\gamma P + (1-\gamma) R)\geq\tau(P)$, where $\gamma \in [0,1]$. Therefore, from Lemma~\ref{lem:monotone-tau-1} and intermediate value theorem, there exists a unique $T$ such that $\tau(T) = \tau(P)$, where $T$ is a convex combination of $Q$ and $\gamma P + (1-\gamma) R$. For $\gamma \in [0,1]$, the orbit of $T$ forms a curve through $P$ where Kendall's $\tau$ stays constant, which is obviously connected to $\overline{QR}$.
\end{proof}

\begin{lemma}
Let 
$$S = \{p\in\mathbb{R}^{n^2}\ |\ p=\mathrm{vec}(P), \sum_{i=1}^n P_{ij} = \sum_{j=1}^n P_{ij} = \frac{1}{n}, \tau(P)=\mu\}$$
$$\bar{\Omega}' = \{x_{ij} = \frac{a_{ij}}{\sum_{k,l} a_{kl}}\ |\ P = Q + \sum_{i=1}^{n-1}\sum_{j=1}^{n-1} a_{ij} T^{ij}\},$$
$$\bar{\Omega} = \{x_{ij} = \frac{a_{ij}}{\sum_{k,l} a_{kl}}\ |\ P = Q + \sum_{i=1}^{n-1}\sum_{j=1}^{n-1} a_{ij} T^{ij}, \mathrm{vec}(P) \in S\} ,$$
$$\bar{\Omega} \subset \bar{\Omega}' \subset \mathbb{R}^{(n-1)^2-1}.$$
Then, \\
(i) $S$ has a one-to-one correspondence with $\bar{\Omega}$.\\
(ii) $\bar{\Omega}$ is a star domain.
\end{lemma}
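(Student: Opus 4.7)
The plan is to exploit the uniqueness of the $T^{ij}$-expansion from Section 3.2 together with the monotonicity Lemmas~\ref{lem:monotone-tau-1} and~\ref{lem:monotone-tau-2}. For part (i), I will verify that the map $\Phi : P \mapsto (a_{ij}/\sum_{k,l} a_{kl})$ is a well-defined bijection between $S$ and $\bar{\Omega}$. It is well defined because the $T^{ij}$ form a basis so the $a_{ij}$ are unique, Lemma~\ref{lem:anti-comonotone} makes them non-negative, and the total $\sum_{k,l} a_{kl}$ is strictly positive on $S$ since $\tau(Q) = -(1-1/n) < 0 \leq \mu$ forces $P \neq Q$. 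Surjectivity onto $\bar{\Omega}$ is immediate from the definition. For injectivity, two copulas sharing a normalized direction lie on a common ray emanating from $Q$; by Lemma~\ref{lem:monotone-tau-1}, $\tau$ is strictly monotone along this ray, so the level $\tau = \mu$ is attained at most once.

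For part (ii), my candidate star center is $x^R$, the normalized direction of $R$ from $Q$. First, using Lemma~\ref{lem:monotone-tau-1} and the intermediate value theorem on $\overline{QR}$, locate $\alpha^* \in (0,1]$ with $\tau(\alpha^* R + (1-\alpha^*) Q) = \mu$; the corresponding copula lies in $S$ with direction $x^R$, so $x^R \in \bar{\Omega}$. Now fix any $x^P \in \bar{\Omega}$ coming from $P \in S$ and interpolate in copula space via $\bar{P}(\gamma) = \gamma R + (1-\gamma) P$ for $\gamma \in [0,1]$. A direct calculation shows $\bar{P}(\gamma)$ has normalized direction $u(\gamma) x^R + (1-u(\gamma)) x^P$ for a continuous monotone $u : [0,1] \to [0,1]$, so every convex combination on the segment $[x^R, x^P]$ is realized as the direction of some $\bar{P}(\gamma)$. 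By Lemma~\ref{lem:monotone-tau-2}, $\tau(\bar{P}(\gamma)) \geq \tau(P) = \mu$, and by Lemma~\ref{lem:monotone-tau-1}, $\tau$ is strictly monotone on the segment from $Q$ to $\bar{P}(\gamma)$; the intermediate value theorem therefore yields a unique $P^*_\gamma$ on that segment with $\tau(P^*_\gamma) = \mu$. This $P^*_\gamma$ is a convex combination of the two copulas $Q$ and $\bar{P}(\gamma)$, hence itself a copula in $S$, with normalized direction exactly $u(\gamma) x^R + (1-u(\gamma)) x^P$. Thus $[x^R, x^P] \subset \bar{\Omega}$.

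The main obstacle is the gap between $\bar{\Omega}$ and the larger simplex $\bar{\Omega}'$: one must show that every direction on the straight segment in $\bar{\Omega}'$ actually comes from a copula in $S$. A naive interpolation in copula space between two arbitrary elements of $S$ fails because $\tau$ is not concave on the copula set (see Equation~\eqref{eq:hanrei}), so intermediate $\tau$ values can drop below $\mu$ and an IVT-retraction back to $\tau = \mu$ may leave the Birkhoff polytope. Pivoting through $R$ circumvents this: Lemma~\ref{lem:monotone-tau-2} guarantees $\tau(\bar{P}(\gamma)) \geq \mu$ throughout, so retraction toward $Q$ stays inside the (convex) copula region and lands on $\tau = \mu$ exactly once. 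The correctness of the whole argument rests critically on having both monotonicity lemmas available.
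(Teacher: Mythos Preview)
Your proposal is correct and follows essentially the same approach as the paper: both arguments use the normalized direction of $R$ as the star center, interpolate in copula space via $\bar P(\gamma)=\gamma R+(1-\gamma)P$, invoke Lemma~\ref{lem:monotone-tau-2} to ensure $\tau(\bar P(\gamma))\geq\mu$, and then retract toward $Q$ along the ray using Lemma~\ref{lem:monotone-tau-1} and the intermediate value theorem to land back on $S$. Your write-up is in fact more explicit than the paper's, which simply appeals to the curve constructed in the proof of Lemma~\ref{lem:connectivity}; in particular, your verification that the normalized direction of $\bar P(\gamma)$ equals $u(\gamma)x^R+(1-u(\gamma))x^P$ with $u(\gamma)=\gamma B/(\gamma B+(1-\gamma)A)$ fills a step the paper leaves to the reader.
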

\begin{proof}
The existence of the one-to-one correspondence follows from the monotonicity of $\tau$ with respect to $a_{ij}$ stated in Lemma~\ref{lem:monotone-tau-1}. We denote the projection from $S$ to $\bar{\Omega}$ as $x = x(p)$.
Here, we consider $x_0 \propto x(\mathrm{vec}(R))$, which clearly exists due to Lemma~\ref{lem:monotone-tau-1} and the intermediate value theorem. Then, $\bar{\Omega}$ is a star domain with its center being $x_0$ because 
for any $p \in S$, the curve passing through $p$ mentioned in the proof of Lemma~\ref{lem:connectivity} and the convex combination of $x_0$ and $x(p)$ is a one-to-one correspondence. 
\end{proof}




Hence, the objective function of (P) can be considered as a function on $\bar{\Omega}$ instead. Let $f(x), x \in\bar{\Omega}$ denote this function, then
the problem can be reformulated as minimizing $f(x)$ on $\bar{\Omega}$.
We wish to employ Theorem 2.1 of Gabrielsen~\cite{gabrielsen1986}, however, unfortunately it cannot be applied directly because $f(x) \neq 0, x \in \partial\bar{\Omega}$, where $\partial\bar{\Omega}$ denote the boundary of $\bar{\Omega}$. Here we introduce a different objective function $g(x)$ that does not alter the stationary points inherent in the problem (P), and satisfies $g(x) = 0, x \in \partial\bar{\Omega}$ simultaneously. Such $g$ exists in fact due to the following argument.

Let $\Omega \subset \mathbb{R}^{(n-1)^2-1}$ be an open set, whose closure is $\bar{\Omega}$. Then, $\Omega$ is a bounded star domain. Denote the boundary of $\Omega$ by $\partial \Omega$. 
For each $v \in \mathbb{S}= \{v \in \mathbb{R}^{(n-1)^2-1}\ |\ \|v\|= 1\}$, we can find a unique $t_*(v) >0$ such
that $x_0 + t_*(v)v \in \partial \Omega$ since $\Omega$ is a star.
Since $\mathbb{S}$ is compact and $t_*:\mathbb{S}\to\mathbb{R}$ is continuous, there exist twice differentiable continuous functions $t_1$ and $t_2$ on $\mathbb{S}$ such that $t_*(v)-\varepsilon < t_1(v) < t_2(v) < t_*(v)$. Let $\Omega_1 = \{x_0 + tv\ |\  v\in\mathbb{S}, 0 \leq t <t_1(v)\}$ and $\Omega_2 = \{x_0 + tv\ | \ v\in\mathbb{S}, 0 \leq t <t_2(v)\}$.
Since $x_0 (\propto x(\mathrm{vec}(R)))$ is a inner point of $\Omega$ because $R = Q + \sum_{i=1}^{n-1} \sum_{j=1}^{n-1} \frac{2\min{(i,j)}(1-\max{(i,j)})}{n^2} T^{ij}$, the following lemma holds.
\begin{lemma}\label{lem:taking-func-g}
There exists $\varepsilon > 0$ such that
$\frac{\mathrm{d}f(x_0 + tv)}{\mathrm{d}t} \geq 1$
for any $v \in \mathbb{S}$ and $t \in [t_*(v)-\varepsilon, t_*(v))$.

Then, we can find a function $g: \Omega \to \mathbb{R}$
such that\\
(i) $g(x) = f(x)$ for $x \in \Omega_1$,\\
(ii) $g$ is $C^2$ on $\Omega_2$,\\
(iii) $g(x) = M$ for $x\in \overline{\Omega}\backslash \Omega_2$, where $M= 1+\mathrm{sup}_{x\in\Omega} f(x)$, and\\
(iv) the stationary points of $g \in \Omega_2$ coincide with those of $f$.
\end{lemma}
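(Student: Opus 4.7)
The first assertion relies on the fact that points of $\partial\Omega$ correspond to checkerboard copulas with at least one vanishing entry. Fix $v\in\mathbb{S}$; since the parametrization $x\mapsto\mathrm{vec}(P)$ is affine, each entry $p_{ij}(t)=p_{ij}(x_0+tv)$ is affine in $t$ with slope $c_{ij}(v)\in\mathbb{R}$, and the chain rule yields
\begin{equation*}
F(v,t):=\frac{d}{dt}f(x_0+tv)=\sum_{i,j}\bigl(\log p_{ij}(t)+1\bigr)\,c_{ij}(v).
\end{equation*}
At $t=t_*(v)$ at least one entry $p_{ij}(t_*(v))$ equals $0$, and for every such index $c_{ij}(v)<0$, since the entry is strictly decreasing along the ray from a strictly positive value at $t=0$. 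Hence $(\log p_{ij}+1)c_{ij}(v)\to+\infty$ as $t\uparrow t_*(v)$, while the other summands remain bounded, so $F(v,t)\to+\infty$. Setting $G(v,t):=1/\max(F(v,t),1)$ and extending $G(v,t_*(v)):=0$ produces a continuous nonnegative function on the compact set $\{(v,t):v\in\mathbb{S},\;0\leq t\leq t_*(v)\}$, from which a uniform $\varepsilon>0$ with $F(v,t)\geq 1$ on $[t_*(v)-\varepsilon,t_*(v))$ follows by a standard tube argument.

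For the second assertion, the construction is a radial gluing. Take a nondecreasing $C^2$ cutoff $\psi:\mathbb{R}\to[0,1]$ with $\psi\equiv 0$ on $(-\infty,0]$, $\psi\equiv 1$ on $[1,\infty)$, $0<\psi(s)<1$ and $\psi'(s)>0$ for $s\in(0,1)$, and with matching first and second derivatives at $s\in\{0,1\}$. For $x=x_0+tv$ with $t>0$ let $s(x)=(t-t_1(v))/(t_2(v)-t_1(v))$ and define
\begin{equation*}
g(x)=\begin{cases} f(x) & x\in\Omega_1,\\ \bigl(1-\psi(s(x))\bigr)f(x)+\psi(s(x))\,M & x\in\Omega_2\setminus\Omega_1,\\ M & x\in\overline{\Omega}\setminus\Omega_2. \end{cases}
\end{equation*}
Since $t_1,t_2\in C^2(\mathbb{S})$ and the polar chart $(t,v)$ is $C^\infty$ on $\Omega_2\setminus\{x_0\}$, while the cutoff agrees to second order with the constants $0$ and $1$ at $s=0,1$, the function $g$ is $C^2$ on $\Omega_2$; properties (i) and (iii) are immediate.

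Property (iv) is where the first assertion is used. By the setup of the lemma $t_*(v)-\varepsilon<t_1(v)$ for every $v\in\mathbb{S}$, so $\partial f/\partial t\geq 1$ throughout the annulus $\Omega_2\setminus\Omega_1$. A direct differentiation gives
\begin{equation*}
\frac{\partial g}{\partial t}(x_0+tv)=\bigl(1-\psi(s)\bigr)\frac{\partial f}{\partial t}+\frac{\psi'(s)\bigl(M-f(x)\bigr)}{t_2(v)-t_1(v)}
\end{equation*}
on this annulus. Both summands are nonnegative; the first is strictly positive whenever $\psi(s)<1$ and the second is strictly positive whenever $\psi'(s)>0$, and by construction of $\psi$ these two conditions cover every $s\in(0,1)$. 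Hence $\partial g/\partial t>0$ on the open annulus, so $\nabla g\neq 0$ there; the same bound gives $\nabla f\neq 0$ on the annulus, so the stationary sets of $f$ and $g$ inside $\Omega_2$ both lie in $\Omega_1$, where $g=f$, proving (iv).

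The principal obstacle is the uniform blow-up of $F$ in the first assertion. The boundary $\partial\Omega$ is a union of faces of varying dimensions, determined by which subsets of copula entries vanish simultaneously, and the slopes $c_{ij}(v)$ depend on $v$ in a piecewise manner. Working with $1/F$ on the compactified domain $\{0\leq t\leq t_*(v)\}$ sidesteps explicit case analysis and delivers the uniform $\varepsilon$ cleanly. The gluing step itself is routine; its only subtlety is the $C^2$ regularity of the polar chart on the annulus, ensured by the $C^2$ choice of $t_1,t_2$ and the fact that $\Omega_1$ contains a neighborhood of $x_0$.
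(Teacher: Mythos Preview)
Your construction of $g$ as a radial interpolation $(1-\psi)f+\psi M$ and the verification of (iv) via the radial derivative are essentially the paper's argument (the paper's displayed formula $f-\phi(M-f)$ carries a sign typo; your version is the intended one). You supply more detail than the paper, which simply asserts that (i)--(iv) hold.

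There is, however, a genuine gap in your treatment of the first assertion. The correspondence $x\mapsto P$ from $\bar\Omega$ to $S$ is \emph{not} affine: by definition $x\in\bar\Omega$ is a normalised direction $x_{ij}=a_{ij}/\sum a_{kl}$, and the copula is recovered as $P=Q+s(x)\sum_{ij}x_{ij}T^{ij}$ where the scale $s(x)$ is the unique positive number making $\tau(P)=\mu$. Since $\tau$ is quadratic, $s(x)$ depends nonlinearly on $x$, so $p_{ij}(x_0+tv)$ is not affine in $t$ and your formula for $F(v,t)$ with constant ``slopes'' $c_{ij}(v)$ is not valid. The blow-up of $F$ near $t_*(v)$ therefore cannot be read off as you do: one would need to show that for any index $(i,j)$ with $p_{ij}(t_*(v))=0$ the derivative $\tfrac{d}{dt}p_{ij}(t)$ stays bounded away from zero (i.e.\ the image curve does not meet the face $\{p_{ij}=0\}$ tangentially), or else control the product $(\log p_{ij})\,p_{ij}'$ by another route. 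The paper does not prove this assertion either, so you are attempting more than the paper does, but the specific argument you give does not go through as written.
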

\begin{proof}
    Let $\phi$ be a smooth increasing function on $\mathbb{R}$ such that $\phi(t) = 0$ for $t\leq0$ and $\phi(t) = 1$ for $t \geq 1$. Define a function $g$ by
    $$g(x_0 + tv) = 
    \begin{cases}
            f(x_0 + tv) - \phi\left(\frac{t-t_1(v)}{t_2(v)-t_1(v)}\right)\{M-f(x_0+tv)\} &(t \leq t_2)\\
            M&(t >t_2)
    \end{cases}
    $$
    Then the conditions (i) to (iv) are satisfied.
\end{proof}

Therefore, the problem reduces to proving the uniqueness of the optimal solution of the following problem:
$$\mathrm{Minimize}\ g(x) - M$$
$$\mathrm{s.t.}\ x \in \Omega_2.$$
\noindent Here, its feasible region is limited from $\overline{\Omega}$ to $\Omega_2$ because the stationary points of $g$ (or $f$) are in $\Omega_2$.
Due to Lemma~\ref{lem:taking-func-g}, $g(x)-M = 0$ on $x \in \partial\overline{\Omega_2}$. Therefore, Gabrielsen~\cite{gabrielsen1986} can be applied to this problem. Clearly, $\Omega_2$ is connected because it is a star. $\Omega_2$ is also non-empty and open, its closure $\overline{\Omega_2}$ is compact, and the objective function $g(x)-M$ is twice continuously differentiable on $\Omega_2$, we confirm the remaining conditions: Hessian is positive-definite at stationary points in $\Omega_2$.


Finally, we argue that Hessian matrix at every stationary points is positive-definite. 
Note that the local convexity on stationary points is invariant under the smooth coordinate transformation.
For the convenience of notation, the transpose symbol $\top$ in quadratic forms is omitted in the following proof when it is obvious.

Let $P_0$ be an optimal solution of $(\mathrm{P})$ and $P_t$ be a curve parameterized by $t\ (\in \mathbb{R})$ with a constant Kendall's $\tau$. In~\ref{appendix:uniqueness}, the symbol $'$ denotes the derivative in terms of $t$.
Since Kendall's $\tau$ is written as $\tau = 1-\mathrm{tr}(\Xi P \Xi P^\top) = 1-pWp$, $p_t$ satisfies the followings:
$$p_t W p_t = const.$$
$${p'_t} W p_t = 0.$$
$${p''_t} W p_t + {p'_t} W p'_t= 0,$$
where $p'_t$ and $p''_t$ are first and second derivative of $p_t$ w.r.t. $t$, respectively. 
The objective function $F$ along this curve is 
$$F_t = \sum_i \sum_j P_{t,ij} \log{P_{t,ij}}.$$
By taking derivatives with respect to the parameter $t$, we have
$$F'_t = \sum_i \sum_j P'_{t,ij} (\log{P_{t,ij}}+1),$$
$$F''_t = \sum_i \sum_j {P''}_{t,ij} (\log{P_{t,ij}}+1) + \frac{{P'}_{t,ij}^2}{P_{t,ij}}.$$
The Lagrangian is written as 
$$\mathcal{L} = \sum_{i=1}^n\sum_{j=1}^n P_{ij}\log{P_{ij}} - \lambda(1-\mathrm{tr}(\Xi P\Xi P^{\top}) - \mu) - \sum \alpha_j( \sum_{i=1}^n P_{ij} - \frac{1}{n})- \sum \beta_i ( \sum_{j=1}^n P_{ij} - \frac{1}{n}).$$
\noindent Since $P_0$ is the optimal solution, $P_0$ should satisfy the following stationary condition:
$$\frac{\partial \mathcal{L}}{\partial P_{ij}} = \log{P_{ij}}+1 + \lambda (W p)_{ij} - \alpha_j - \beta_i = 0.$$
Therefore, with $t=0$,
$$F''_t = -\sum_i \sum_j {P''}_{t,ij} \lambda (W p_t)_{n(i-1)+j} + \sum_i \sum_j {P''}_{t,ij}(\alpha_j+\beta_i) + \sum_i \sum_j \frac{{P'}_{t,ij}^2}{P_{t,ij}} $$

For any $t$, row sums and columns sum are constants:$\sum_i P_{ij} = \frac{1}{n}, \sum_j P_{ij} = \frac{1}{n}$. Hence, by taking derivatives with $t$, $\sum_i {P''}_{t,ij} = 0, \sum_j {P''}_{t,ij} = 0. $ Therefore, 
\begin{align*}
F'' &= -\sum_i \sum_j \lambda ({{p_t}''}_{n(i-1)+j} (W p_t)_{n(i-1)+j}) + \sum_i \sum_j \frac{{P'}_{t,ij}^2}{P_{t,ij}} \\
&= -\lambda{p_t}'' W p_t + {p_t}' \mathrm{Diag}(\frac{1}{p_{t,n(i-1)+j}}) {p_t}'\\
&= \lambda{p_t}' W {p_t}' + {p_t}' \mathrm{Diag}(\frac{1}{p_{t,ij}}) {p_t}'\\
&= {p_t}' (\mathrm{Diag}(\frac{1}{p_{t,ij}})+\lambda W) {p_t}'
\end{align*}
when $t=0$.

\noindent Now, we consider changing the basis by 
$$p = (A \otimes A) q + \frac{1}{n^2}1_{n^2},$$
where $p \in [0,\frac{1}{n}]^{n^2}, q \in \mathbb{R}^{(n-1)^2}$.
Then, by incorporating $p'_t = (A \otimes A) q'_t$, we have
$$F'' = (q_t')^\top (A \otimes A)^\top \mathrm{Diag}(\frac{1}{p_{0,ij}}) (A \otimes A) q_t' + \lambda (q_t')^\top (A \otimes A)^\top W (A \otimes A) q_t'.$$

Define $D_1 = (A \otimes A)^\top \mathrm{Diag}(\frac{1}{p_{0,ij}}) (A \otimes A)$ and $D_2 = (A \otimes A)^\top W (A \otimes A)$. From tedious calculation, $D_2 = -M \otimes M$, where $$M = \begin{pmatrix}
0&-1&\dots&0\\
1&0&\ddots&\vdots\\
\vdots&\ddots&\ddots&-1\\
0&\dots&1&0\\
\end{pmatrix} = \mathrm{Toeplitz}(1,0,-1) \in \mathbb{R}^{n \times n}.$$ 
The problem is to examine whether $D_1+\lambda D_2$ is positive-definite. Here, we seek to evaluate $|\frac{\lambda v^\top D_2 v}{v^\top D_1 v}|$. It is easy to see that $|\frac{\lambda v^\top D_2 v}{v^\top D_1 v}| < 1$ for any $v \in \mathbb{R}^{n^2}$ is a sufficient condition for the positive-definiteness of $D_1+\lambda D_2$. First, we evaluate $|\frac{v^\top D_2 v}{v^\top D_1 v}|$ as follows. Note that $D_1$ is positive-definite thus $v^\top D_1 v$ is always positive.
\begin{align*}
    {\max_{v} \frac{v^\top D_2 v}{v^\top D_1 v}}
    &=  {\max_{v} \frac{v^\top D_2 v}{v^\top LL^\top v}}\\
    &=  {\max_{x, x=L^\top v} \frac{x^\top L^\dagger (-M \otimes M) (L^\dagger)^\top x}{x^\top x}}\\
    &\leq {\lambda_{\max} (L^\dagger (-M \otimes M) (L^\dagger)^\top)},
\end{align*}
where $$L = (A \otimes A)^\top \mathrm{Diag}(\frac{1}{\sqrt{p_0}})$$
and
$$L^\dagger = \mathrm{Diag}(\sqrt{p_0}) (A^\dagger \otimes A^\dagger)^\top.$$
Similarly, $\min_v \frac{v^\top D_2 v}{v^\top D_1 v} \geq  \lambda_{\min} (L^\dagger (-M \otimes M) (L^\dagger)^\top).$ Therefore, $|\frac{v^\top D_2 v}{v^\top D_1 v}|$ is upper bound by the eigenvalue of $(L^\dagger (-M \otimes M) (L^\dagger)^\top)$ with maximum absolute value. 


Now, the target matrix of interest can be transformed as follows:
\begin{align*}
    L^\dagger (-M \otimes M) (L^\dagger)^\top &= \mathrm{Diag}(\sqrt{p_0}) (A^\dagger \otimes A^\dagger)^\top (-M \otimes M) (A^\dagger \otimes A^\dagger) \mathrm{Diag}(\sqrt{p_0})\\
    &= - \mathrm{Diag}(\sqrt{p_0})(X \otimes X) \mathrm{Diag}(\sqrt{p_0}),
\end{align*}
where $X = (A^\dagger)^\top M A^\dagger$. Through tedious algebraic calculations, we have
$$X = \mathrm{Toeplitz}\left(-\frac{n-2}{n}, -\frac{n-4}{n}, \dots,\frac{n-2}{n}, 0, -\frac{n-2}{n},-\frac{n-4}{n},\dots, \frac{n-2}{n}\right),$$
\noindent $\mathrm{Diag}(\sqrt{p_0})(X \otimes X) \mathrm{Diag}(\sqrt{p_0})$ is a block matrix, and its $(i,j)$-th block is 
$$X_{ij}\mathrm{Diag}(\sqrt{p_{i1}}, \dots ,\sqrt{p_{in}}) X \mathrm{Diag}(\sqrt{p_{j1}}, \dots ,\sqrt{p_{jn}}).$$
Hence, the $(k,l)$-th entry of $(i,j)$-th block ($k,l,i,j = 1,\dots,n)$ is written as 
$$X_{ij}X_{kl}\sqrt{p_{ik}}\sqrt{p_{jl}}.$$


The range of eigenvalues of $-\mathrm{Diag}(\sqrt{p_0})(X \otimes X) \mathrm{Diag}(\sqrt{p_0})$ is specified by Gershgorin's theorem for block operetors. The radius of the Gershgorin circle is determined by absolute column sums in a single block.

\begin{theorem}[Gershgorin's Theorem for Block Matrices(Tretter, 2008. section 1.13 ~\cite{tretter2008spectral})]
Let $n \in \mathbb{N}$ and $\mathcal{A} = (A_{ij})$ a block matrix with symmetric diagonal entries $A_{ii}$. If we define
$$\mathcal{G}_i = \sigma(A_{ii}) \cup \{\lambda\ :\ |\lambda - \sigma(A_{ii})| \leq \sum_{j=1, j\neq i}^n \|A_{ij}\| \},$$
for $i=1,\dots,n$, where $\|\cdot\|$ denotes a submultiplicative norm, then 
$$\sigma(\mathcal{A}) \subset \bigcup_{i=1}^n \mathcal{G}_i.$$
\end{theorem}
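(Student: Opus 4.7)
The plan is to adapt the classical scalar Gershgorin disc argument to the block setting: pick a ``dominant'' block of an eigenvector, derive a normwise inequality from the corresponding block row of the eigenvalue equation, and then convert the resolvent norm $\|(A_{ii}-\lambda I)^{-1}\|$ into a distance-to-spectrum bound using symmetry (hence normality) of $A_{ii}$.

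First, I would fix an arbitrary $\lambda\in\sigma(\mathcal{A})$ with a corresponding eigenvector $v\neq 0$, partitioned conformally as $v=(v_1^\top,\ldots,v_n^\top)^\top$. Choose the block index $i$ maximizing $\|v_i\|$; since $v\neq 0$, this maximum is strictly positive. The $i$-th block row of the eigenvalue equation $\mathcal{A}v=\lambda v$ reads
$$(A_{ii}-\lambda I)\,v_i \;=\; -\sum_{j\neq i} A_{ij}\,v_j.$$
If $\lambda\in\sigma(A_{ii})$, then $\lambda\in\mathcal{G}_i$ by the very definition of $\mathcal{G}_i$ and we are done. Otherwise $A_{ii}-\lambda I$ is invertible, so $v_i=-(A_{ii}-\lambda I)^{-1}\sum_{j\neq i}A_{ij}v_j$; applying the submultiplicative norm and using $\|v_j\|\leq\|v_i\|$ gives
$$1 \;\leq\; \|(A_{ii}-\lambda I)^{-1}\|\,\sum_{j\neq i}\|A_{ij}\|.$$

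Second, I would convert the resolvent norm into a spectral distance. Because $A_{ii}$ is symmetric, hence normal, the spectral theorem yields $\|(A_{ii}-\lambda I)^{-1}\| = 1/\mathrm{dist}(\lambda,\sigma(A_{ii}))$ for the operator (spectral) norm. Substituting into the previous inequality gives $\mathrm{dist}(\lambda,\sigma(A_{ii}))\leq\sum_{j\neq i}\|A_{ij}\|$, i.e., $\lambda\in\mathcal{G}_i\subset\bigcup_{k=1}^n\mathcal{G}_k$. Since $\lambda$ was arbitrary, $\sigma(\mathcal{A})\subset\bigcup_{k=1}^n\mathcal{G}_k$.

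The main obstacle, and really the only delicate point, is the identity $\|(A_{ii}-\lambda I)^{-1}\|=1/\mathrm{dist}(\lambda,\sigma(A_{ii}))$. This is an equality for the spectral norm precisely because $A_{ii}$ is normal; for a general submultiplicative norm one has only the one-sided bound $\|(A_{ii}-\lambda I)^{-1}\|\geq\rho((A_{ii}-\lambda I)^{-1})=1/\mathrm{dist}(\lambda,\sigma(A_{ii}))$, which is in the wrong direction to close the argument. Accordingly, I would interpret $\|\cdot\|$ in the statement as the operator $2$-norm (matching Tretter's convention), in which case the symmetry of $A_{ii}$ supplies exactly the equality needed, and the block argument above completes the proof with no further work.
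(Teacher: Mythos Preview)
Your argument is correct and is precisely the standard proof of the block Gershgorin theorem; the paper itself does not prove this statement at all, but merely quotes it from Tretter's book as a tool to bound the eigenvalues of $-\mathrm{Diag}(\sqrt{p_0})(X\otimes X)\mathrm{Diag}(\sqrt{p_0})$. Your identification of the one delicate point---that $\|(A_{ii}-\lambda I)^{-1}\|=1/\mathrm{dist}(\lambda,\sigma(A_{ii}))$ requires the spectral norm together with normality of $A_{ii}$---is exactly right, and is the reason the hypothesis ``symmetric diagonal entries'' appears in the statement; in the paper's application the diagonal blocks are in fact zero matrices, so this issue is trivially satisfied there.
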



Note that all diagonal blocks of the target matrix are zero matrices, thus $\sigma(A_{ii})=0$ and the centers of each circle are zero.
For any block $B$ of the target matrix, the Frobenius norm of $B$ is calculated as 
\begin{align*}
\|B\|_F &= |X_{ij}| \sqrt{\sum_k \sum_l (X_{kl}\sqrt{p_{ik}}\sqrt{p_{jl}})^2}\\
&\leq |X_{ij}| \sqrt{\sum_k \sum_l p_{ik}p_{jl}}\\
&= \frac{1}{n} |X_{ij}|,\\
\sum_{j \neq i} \|B_{ij}\|_F &\leq \sum_{j=1}^n \frac{1}{n} |X_{ij}|\\
&\leq \frac{1}{n} \frac{n-2}{2}\\
&< \frac{1}{2}.
\end{align*}

\noindent Since Frobenius norm is submultiplicative, we may apply the Gershgorin's Theorem. Therefore, the eigenvalues of $\mathrm{Diag}(\sqrt{p_0})(X \otimes X) \mathrm{Diag}(\sqrt{p_0})$ (or $-\mathrm{Diag}(\sqrt{p_0})(X \otimes X) \mathrm{Diag}(\sqrt{p_0})$ as well) are between $-\frac{1}{2}$ and $\frac{1}{2}$, which gives us $|\frac{v^\top D_2 v}{v^\top D_1 v}| \leq \frac{1}{2}$.

We conclude that under $|\lambda| < 2$, $|\frac{\lambda v^\top D_2 v}{v^\top D_1 v}| =  |\lambda| |\frac{v^\top D_2 v}{v^\top D_1 v}| \leq 1$ thus $D_1+\lambda D_2$ is always positive-definite, meaning that $F''$ is positive and $F$ is locally strictly convex around every stationary points of (P). 
By employing Theorem 2.1 in Gabrielsen~\cite{gabrielsen1986}, the stationary point of (P) exists uniquely and the optimal solution as well.








\end{proof}

\section{Relationship between pseudo log odds ratio and Kendall's $\tau$ / log odds ratio and Spearman's $\rho$}

Figure~\ref{fig:relationship-mick} shows the correspondence between pseudo log odds ratio and Kendall's $\tau$ of MICK. Similarly, Figure~\ref{fig:relationship-mics} shows the correspondence between log odds ratio and Spearman's $\rho$ of MICS. Note that this one-to-one correspondence has been proven to be true for MICS~\cite{sei2021} but has not been established yet for MICK.
The monotonicity in Figure \ref{fig:relationship-mics} is consistent with the following theorem:
\begin{theorem}[Theorem 4.3, Sei~\cite{sei2021} ]
    $$\mu(\theta) = \frac{\partial \psi}{\partial \theta},$$
    where $\psi$ is a convex function. Hence, $\mu(\theta)$ is monotone:
    $$\frac{\partial \mu(\theta)}{\partial \theta} = \frac{\partial^2 \psi}{\partial \theta^2} \geq 0.$$
\end{theorem}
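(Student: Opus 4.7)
The plan is to recognize this as the standard log-partition/mean-parameter duality for an exponential family: MICS has the closed-form representation $p_{ij}(\theta) = A_i(\theta) B_j(\theta) \exp(\theta h_{ij})$ established earlier, so $\theta$ plays the role of a natural parameter, $\mu(\theta) = \sum_{i,j} h_{ij}\, p_{ij}(\theta)$ is the conjugate mean parameter, and $\psi$ is the associated convex potential (cumulant generating function). The two claims, $\mu = \partial\psi/\partial\theta$ and $\psi$ convex, are then the two sides of the same Legendre--Fenchel duality.

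Concretely, I would define $\psi$ by convex duality. Let $I^\star(\mu')$ denote the minimum information value attained by the MICS problem when Spearman's $\rho$ is set to $\mu'$ (with the marginal constraints kept fixed), and set
$$\psi(\theta) = \sup_{\mu'}\{\theta\mu' - I^\star(\mu')\}.$$
As a pointwise supremum of affine functions of $\theta$, $\psi$ is convex. Differentiability follows from the strict convexity of $I^\star$ (which in turn follows from strict convexity of the entropy on the marginal-constrained polytope), and the envelope theorem yields $\psi'(\theta) = \mu(\theta)$. The monotonicity $\mu'(\theta) \geq 0$ is then immediate from $\psi'' \geq 0$, and $\psi''(\theta)$ can in fact be identified with a suitable variance of $h$ under $p(\theta)$.

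The main technical obstacle is handling the row and column normalizers $A_i(\theta), B_j(\theta)$: MICS is a conditional exponential family rather than a plain one, so a direct variance-identity computation of $\psi''$ must be carried out in a parameterization that respects the uniform-marginal constraints. The cleanest way to bypass this is to re-parameterize in the non-orthogonal basis $\{T^{ij}\}$ introduced in Section 3.2, in which the marginal constraints are automatically enforced; in these coordinates $\psi$ reduces to the log-partition function of an unconstrained exponential family on the coordinates $p'_{ij}$, and the standard identity $\psi''(\theta) = \mathrm{Var}_{p(\theta)}(h)$ holds verbatim, giving the desired convexity with no additional bookkeeping.
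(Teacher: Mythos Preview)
The paper does not actually give a proof of this statement: it is quoted verbatim as Theorem 4.3 of Sei (2021) and is invoked only to explain the monotonicity seen in Figure~\ref{fig:relationship-mics}. So there is no in-paper argument to compare your proposal against.

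Your first two paragraphs are the right argument and would stand on their own. Defining $\psi(\theta)=\sup_{\mu'}\{\theta\mu'-I^\star(\mu')\}$ makes $\psi$ convex by construction, and the KKT conditions for the MICS problem identify $\theta$ with the Lagrange multiplier for the Spearman constraint, so the supremum is attained at $\mu'=\mu(\theta)$; the envelope theorem then yields $\psi'(\theta)=\mu(\theta)$. Strict convexity of $I^\star$ follows from strict convexity of $p\mapsto\sum p_{ij}\log p_{ij}$ on the marginal polytope, which gives differentiability of $\psi$. This is exactly the convex-duality argument one finds in Sei (2021) and in standard exponential-family references.

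Your third paragraph, however, is confused and should be dropped. The coordinates $p'_{ij}$ of the $\{T^{ij}\}$ basis are \emph{parameters} of the checkerboard copula, not sample-space variables; there is no sense in which ``$\psi$ reduces to the log-partition function of an unconstrained exponential family on the coordinates $p'_{ij}$''. The exponential-family structure of MICS is over the outcome $(i,j)$ with sufficient statistic $h_{ij}$, and the difficulty you flag---that $A_i(\theta),B_j(\theta)$ depend implicitly on $\theta$---is already absorbed by the dual definition of $\psi$ in your second paragraph, so no reparameterization is needed. If you want the direct variance identity, differentiate $\log p_{ij}(\theta)=\log A_i(\theta)+\log B_j(\theta)+\theta h_{ij}$ and use the marginal constraints $\sum_j dp_{ij}/d\theta=\sum_i dp_{ij}/d\theta=0$ to cancel the $A_i',B_j'$ terms; one obtains $\mu'(\theta)=\mathrm{Var}_{p(\theta)}(h)-(\text{nonnegative correction})$, but this route is more laborious than the dual one you already have.
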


Additionally, Table \ref{tab:mick-param} and Table \ref{tab:mics-param} are the correspondence tables showing the relationship between Kendall's $\tau$/Spearman's $\rho$ and pseudo log odds ratio/log odds ratio depicted in Figure~\ref{fig:relationship-mick} and Figure~\ref{fig:relationship-mics}, respectively. 
In practice, these tables are useful for obtaining MICK/MICS numerically using Algorithm~\ref{alg1}. They interpret rank correlations into ratios based on the one-to-one correspondence exhibited in Figure~\ref{fig:relationship-mick} and Figure~\ref{fig:relationship-mics}. Hence, although ratios cannot be calculated directly from observed samples, the values of ratios required as input for Algorithm~\ref{alg1} can be obtained through the sample version of rank correlations.

\begin{figure}[tbp]
  \begin{minipage}[b]{0.48\columnwidth}
    \centering
     \includegraphics[keepaspectratio, scale=0.6]
          {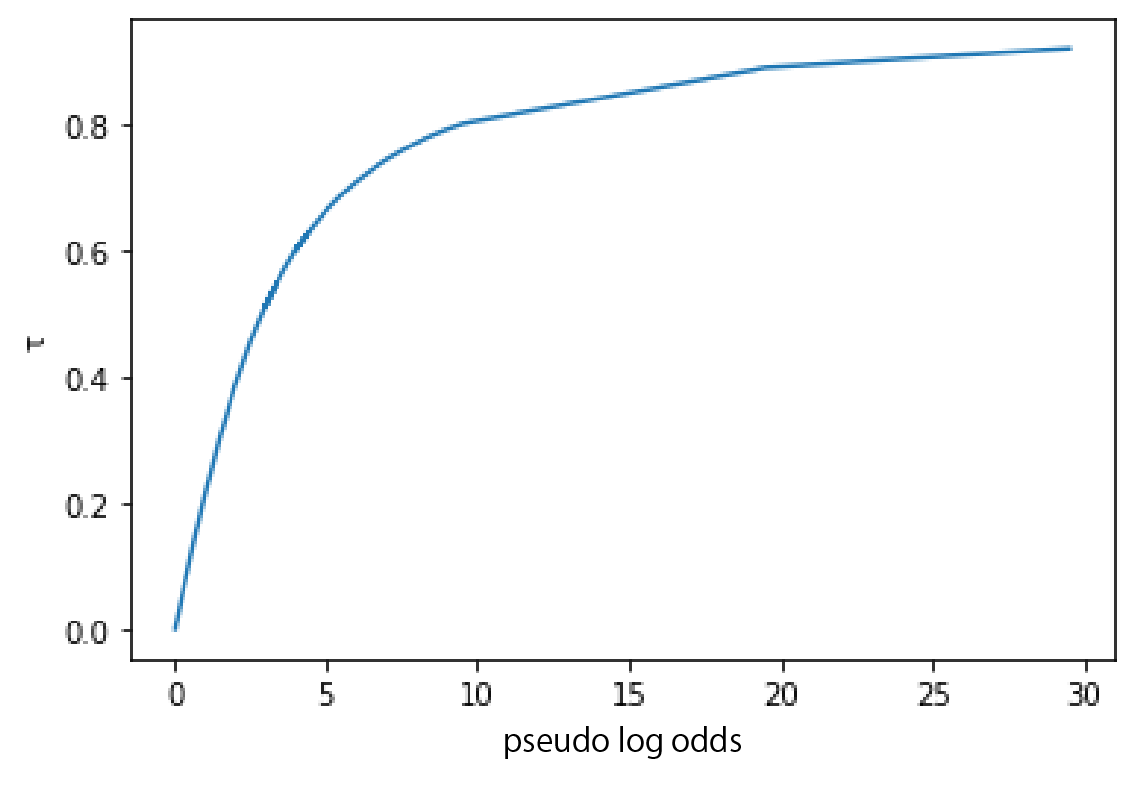}
     \caption{pseudo log odds ratio vs $\tau$ for $30 \times 30$ MICK}
     \label{fig:relationship-mick}
  \end{minipage}
  \hspace{0.04\columnwidth} 
  \begin{minipage}[b]{0.48\columnwidth}
    \centering
     \includegraphics[keepaspectratio, scale=0.5]
          {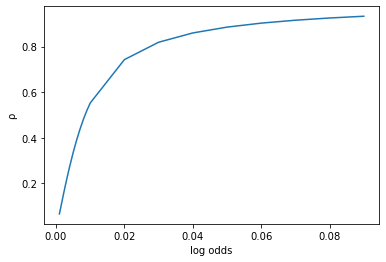}
     \caption{log odds ratio vs $\rho$ for $30 \times 30$ MICS}
     \label{fig:relationship-mics}
  \end{minipage}
\end{figure}

\begin{table}[]
\begin{minipage}[b]{0.45\columnwidth}
    \centering
    \begin{tabular}{c|c|c|c}
      pseudo log odds ratio&$\rho$&$\tau$&information\\ \hline
      0.300&0.091&0.060&-6.798\\
      0.500&0.156&0.104&-6.789\\
      1.000&0.309&0.208&-6.752\\
      2.000&0.552&0.384&-6.624\\
      3.000&0.707&0.511&-6.468\\
      4.000&0.801&0.599&-6.315\\
      5.000&0.858&0.662&-6.174\\
      6.000&0.894&0.709&-6.048\\
      7.000&0.918&0.744&-5.934\\
      8.000&0.934&0.771&-5.832\\
      9.000&0.945&0.792&-5.741
    \end{tabular}
    \caption{Parameters of $30 \times 30$ MICK}
    \label{tab:mick-param}
\end{minipage}
\hfill
\begin{minipage}[b]{0.45\columnwidth}
    \centering
    \begin{tabular}{c|c|c|c}
      log odds ratio&$\rho$&$\tau$&information\\ \hline
0.001&0.066&0.044&-6.800\\
0.002&0.139&0.093&-6.792\\
0.003&0.209&0.140&-6.780\\
0.004&0.274&0.184&-6.763\\
0.005&0.334&0.225&-6.743\\
0.006&0.388&0.262&-6.721\\
0.007&0.437&0.296&-6.698\\
0.008&0.480&0.327&-6.674\\
0.009&0.518&0.355&-6.650\\
0.01&0.552&0.380&-6.626\\
0.02&0.742&0.534&-6.426\\
0.03&0.819&0.609&-6.287\\
0.04&0.859&0.656&-6.181\\
0.05&0.885&0.689&-6.096\\
0.06&0.902&0.713&-6.024\\
0.07&0.915&0.733&-5.962\\
0.08&0.925&0.749&-5.907\\
0.09&0.933&0.762&-5.859\\
    \end{tabular}
    \caption{Parameters of $30 \times 30$ MICS}
    \label{tab:mics-param}
\end{minipage}
\end{table}

\end{document}